\newtheorem{theorem}{Theorem}[section]
\newtheorem{lemma}[theorem]{Lemma}
\newtheorem{corollary}[theorem]{Corollary}
\theoremstyle{definition}
\newtheorem{definition}[theorem]{Definition}
\newtheorem{example}[theorem]{Example}
\newtheorem{remark}[theorem]{Remark}
\newcommand{\Ac}{\mathcal{A}}
\newcommand{\Wc}{\mathcal{W}}
\newcommand{\Uc}{\mathcal{U}}
\newcommand{\R}{\mathbb{R}}
\newcommand{\C}{\mathbb{C}}
\newcommand{\E}{\mathbb{E}}
\newcommand{\Ec}{\mathcal{E}}
\newcommand{\Fc}{\mathcal{F}}
\newcommand{\Hc}{\mathcal{H}}
\newcommand{\Nc}{\mathcal{N}}
\newcommand{\Mc}{\mathcal{M}}
\newcommand{\Pc}{\mathcal{P}}
\newcommand{\e}{\varepsilon}
\newcommand{\vol}{\mathrm{vol}}
\newcommand{\Sbd}{\mathbb{S}^{d-1}}
\newcommand{\vphi}{{\varphi}}
\newcommand{\vf}{{f}}
\newcommand{\vx}{{ x}}
\DeclareMathOperator*{\argmin}{arg\,min}
\begin{document}

\title{A Unified Recovery of Structured Signals Using Atomic Norm}

\author{Xuemei Chen}
\affil{\small Department of Mathematics and Statistics, University of North Carolina Wilmington, NC 28405}

%
%
%


\date{\vspace{-5ex}}


%

\maketitle
\begin{abstract}
In many applications we seek to recover signals from linear measurements far fewer than the ambient dimension, given the signals have exploitable structures such as sparse vectors or low rank matrices. In this paper we work in a general setting where signals are approximately sparse in a so called atomic set. We provide general recovery results stating that a convex programming can stably and robustly recover signals if the null space of the sensing map satisfies certain properties. Moreover, we argue that such null space property can be satisfied with high probability if each measurement is subgaussian even when the number of measurements are very few. Some new results for recovering signals sparse in a frame, and recovering low rank matrices are also derived as a result.
\end{abstract}

%
%


\section{Introduction}
Given a  compact set $E\subset\R^d$, let $\text{conv}(E)$ be the convex hull of $E$. We define the \emph{gauge} of E to be
\begin{equation}\label{equ:m}
g_E(v):=\inf\{\lambda>0: \lambda^{-1}v\in \text{conv}(E)\}.
\end{equation}
If $E=-E$, then \eqref{equ:m} can be rewritten as 
\begin{equation}\label{equ:m2}
\inf\{\sum_{i\in I}c_i: v = \sum_{i\in I}c_ix_i, c_i\geq0, x_i\in E\}
\end{equation}
with a straightforward argument. A gauge becomes a norm if it is finite everywhere, symmetric, and positive except at the origin~\cite[page 130]{R97}.

Throughout this paper, we will assume  the \emph{atomic set} $\Wc=\{w_i\}_{i\in I}$ is a compact spanning subset of $\R^d$. 
Define $\Wc_{\text{sym}}:=\Wc\cup(-\Wc)$. The gauge function associated with $\Wc_{\text{sym}}$ is now a norm,  called the \emph{atomic norm}~\cite{C12}. Given \eqref{equ:m} and \eqref{equ:m2},  the atomic norm associated with $\Wc$ is defined as
\begin{equation}\label{equ:norm}
\|v\|_\Wc:=\inf\{\lambda>0: \lambda^{-1}v\in \text{conv}(\Wc_{\text{sym}})\}=\inf\{\sum_{i\in I}|c_i|: v = \sum_{i\in I}c_iw_i\}.
\end{equation}

Let $C_\Wc$ be the smallest constant such that
$$\|v\|_\Wc\leq C_\Wc\|v\|_2, \text{ for every v}\in\R^d.$$
We let $B_\Wc:=\{v\in\R^d:\|v\|_\Wc\leq1\}=\text{conv}(\Wc_{\text{sym}})$ be the unit ball with respect to the atomic norm.


Let $\Ac$ be a linear mapping from $\R^d$ to $\R^m$ where $m$ tends to be a lot smaller than $d$.
We would like to recover a structured signal $z_0$ from few linear measurements $y=\Ac z_0$ via the convex minimization problem
\begin{equation}\label{equ:K0}\tag{$\text{P}_\Wc$}
\hat{z}=\arg\min_{z\in \R^d} \|z\|_{\Wc} \quad\text{ subject to } \quad \Ac z=y.\\
\end{equation}

If the measurement is perturbed as $y=\Ac z_0+e$ where $\|e\|_2\leq\epsilon$, then we consider the following version
\begin{equation}\label{equ:K}\tag{$\text{P}_{\Wc,\epsilon}$}
\hat{z}=\arg\min_{z\in \R^d} \|z\|_{\Wc} \quad\text{ subject to } \quad \|\Ac z-y\|_2\leq \epsilon.\\
\end{equation}

By ``structured signal'', we mean that it is a linear combination of very few atoms in $\Wc$.
We say a vector $z$ is \emph{$\Wc$-sparse of order $s$ (or simply $\Wc$-$s$-sparse)} if $z$ can be written as a linear combination of at most $s$ atoms from $\Wc$. Such collection of vectors will be denoted $\Sigma_{\Wc,s}$, i.e., $$\Sigma_{\Wc,s}:=\{z\in\R^d: z = \sum_{i\in J}c_iw_i\text{ for some }|J|\leq s, c_i\in\R\}.$$ 

Recovering $z_0$ through \eqref{equ:K} is possible even when the number of linear measurement $m$ is far less than the ambient dimension $d$, if  $\Ac$ is well designed. 

We  need a way to define the ``tail'' when a vector is not exactly $\Wc$-sparse. This will be an important notion in this paper. Define
\begin{equation}\label{equ:tWW}
\sigma_{\Wc,s}(z):=\inf\{\|z-v\|_{\Wc}: v\in\Sigma_{\Wc,s}\}.
\end{equation}
This tail can also be viewed as the $\Wc$-distance from $z$ to the set $\Sigma_{\Wc,s}$.

In the case that the minimum can be obtained in \eqref{equ:tWW}, we write
\begin{equation}\label{equ:zs}
z_s:=\argmin\limits_{v\in\Sigma_{\Wc,s}}\|z-v\|_{\Wc},
\end{equation}
and hence $\sigma_{\Wc,s}(z)=\|z-z_s\|_\Wc$. Note that there may be multiple minimizers of \eqref{equ:zs}.

\subsection{Notations}
Given the linear map $\Ac$, $\Nc(\Ac)$ is the null space of $\Ac$, and $\nu_\Ac$ is the smallest nonzero singular value of the matrix of map $\Ac$ under an orthonormal basis. For any matrix $X$, $\{\sigma_i(X)\}_{i=1}^K$ are the singular values of $X$, and $\|X\|_*=\sum_{i=1}^K\sigma_i(X)$ is the nuclear norm of $X$. $\|X\|_F$ is the Frobenius norm of $X$.
For any index set $T$, $|T|$ is the cardinality of $T$, and $T^c$ is the compliment set of $T$.
For any $p\geq1$, $\|\cdot\|_p$ is the $\ell_p$ norm.
$\Sbd$ is the unit $\ell_2$ norm ball of $\R^d$.

The notation $\R^d$  means a finite dimensional real Hilbert space in general, but also denotes the real Euclidean space of dimension $d$ in certain examples which will be clear from the context.

\subsection{Examples}
We will now list several examples of atomic sets that are special cases of our setup. See more examples in \cite{C12}.
\begin{example}[Compressed Sensing]\label{exa:cs}
As the simplest example, when the atom set $\Ec$ is the canonical orthonormal basis of the Euclidean space, we have $\|v\|_\Ec=\|v\|_1$. In this context, an $\Ec$-$s$-sparse vector is a vector whose coordinates are nonzero  at most $s$ locations. This is the classical compressed sensing~\cite{CRT06, D06, CDD09}. It is easy to show that the tail $\sigma_{\Ec,s}(z)=\|z_{T^c}\|_1$ where $T$ is the index set of the largest $s$ coordinates in magnitude of $z$.
\end{example}

\begin{example}[Compressed Sensing with Frames]\label{exa:frame}
Let $\Fc=\{f_i\}_{i=1}^N$ be a frame of the Euclidean space $\R^d$ and $\Fc$ will be our atomic set. A frame for a finite dimensional Hilbert space is simply a spanning set of it, hence $N\geq d$. Let $F=[f_1,f_2,\cdots,f_N]$ be the $d\times N$ matrix whose columns are the atoms.

First, for any $z\in\R^d$, $\|z\|_\Fc=\inf\{\|x\|_1: Fx=z\}=\min\{\|x\|_1: Fx=z\}$. This is because $\inf\{\|x\|_1: Fx=z\} = \inf\{\|x\|_1: Fx=z, \|x\|_1\leq\|F^\dagger z\|_1\} $ since $FF^\dagger z=z$, and a continuous function obtain its minimum on a compact set.

So for any $z, v\in\R^d$, we have
\begin{align}\notag
\|z-v\|_\Fc&=\min\{\|c\|_1: Fc=z-v\}\\\notag
&=\min\{\|c+u-u\|_1: z=F(c+u), v=Fu\}\\\label{equ:z-v}
&=\min\{\|x-u\|_1: z=Fx, v=Fu\},
\end{align}
which implies
\begin{equation}\label{equ:tW}
\sigma_{\Fc,s}(z)=\min_{v\in\Sigma_{\Fc,s}}\|z-v\|_{\Fc}=\min\{\|x - u\|_1, z = Fx, \ Fu\in\Sigma_{\Fc,s}\}.
\end{equation}
There has been many works on this case \cite{RSV08, CENR11,LML12,ACP12,CWW14,CCL18}. See Section \ref{sec:frame} for more details.
\end{example}

\begin{example}[Low Rank Matrix Recovery]\label{exa:matrix}
Let $\Mc=\{uv^T: u\in\R^{n_1}, v\in\R^{n_2}, \|u\|_2=\|v\|_2=1\}$. For any matrix $X\in\R^{n_1\times n_2}$, its atomic norm becomes the nuclear norm. Although $\|X\|_\Mc=\|X\|_*$ is a well known fact, we provide a proof in the Appendix (Lemma \ref{lem:nuc}). $\Sigma_{\Mc,s}=\{\sum_{i=1}^sc_iu_iv_i^T: \|u\|_2=\|v\|_2=1\}$ is the set of $n_1\times n_2$ matrices whose rank is at most~$s$. Let $K=\min\{n_1,n_2\}$. Note that the tail of any matrix $Z$ has a simple expression
\begin{equation}
\sigma_{\Mc,s}(Z)=\sum_{j=s+1}^K\sigma_j(Z).
\end{equation}
This is due to Lemma \ref{lem:sv}. Specifically, for $V\in\Sigma_{\Mc,s}$,  we have $\|Z-V\|_*=\sum_{i=1}^K\sigma_i(Z-V)\geq\sum_{i=1}^K|\sigma_i(Z)-\sigma_i(V)|=\sum_{i=1}^s|\sigma_i(Z)-\sigma_i(V)|+\sum_{i=s+1}^K\sigma_i(Z)\geq \sum_{i=s+1}^K\sigma_i(Z)$. So $\sigma_{\Mc,s}(Z)=\min_{V\in\Sigma_{\Mc,s}}\|Z-V\|_*=\sum_{j=s+1}^{K}\sigma_j(Z)$. 
Low rank matrix recovery through nuclear norm minimization have been investigated in \cite{RXH08, RXH11, KKRT16}. See Section \ref{sec:matrix} for more details.
\end{example}

\begin{example}[Phase Retrieval]\label{exa:p}
Let $\Pc=\{uu^T: u\in\C^{n},  \|u\|_2=1\}$ be the collection of unit norm rank-1's  in $\Hc_n$, the space of complex Hermitian $n\times n$ matrices. $\Hc_n$ is a real vector space of dimension $n^2$. For any Hermitian matrix $X\in\Hc_n$, its atomic norm is its nuclear norm. $\Sigma_{\Pc,s}=\{\sum_{i=1}^sc_iu_iu_i^*: \|u_i\|_2=1\}$ is the set of $n\times n$ Hermitian matrices whose rank is at most~$s$.

The phase retrieval problem aims to recover $z_0\in\C^n$ up to a phase from $\{|\langle z_0,a_i\rangle|\}_{i=1}^m$, which is equivalent to recovering $Z_0=z_0z_0^*$ from the linear measurement $\Ac (Z_0)=y:=(\langle Z_0,a_ia_i^*\rangle)_{i=1}^m$. One may solve (P$_{\Pc,\epsilon}$), which is
\begin{equation}\label{equ:P}
\hat{Z}=\argmin_{Z\in \Hc_n} \|Z\|_{*} \quad\text{ subject to } \quad \|\Ac(Z)-y\|_2\leq \epsilon.
\end{equation}
to recover $Z_0\in\Sigma_{\Pc,1}$, hence recovering $z_0$ up to a phase. See \cite{CL14, KRT17}.
\end{example}

\subsection{Contributions and Organizations}
Our work has a very general setting for recovering signals sparse in an atomic set via the convex programming \eqref{equ:K}. Many important applications fall into this general setting as outlined above. 
A major piece of ingredient in our work is the tail definition \eqref{equ:tWW}, which enables us to properly define the null space property and its variations related to programing \eqref{equ:K0} or \eqref{equ:K} in Section \ref{sec:recover}. This tail notion is simple yet effective, allowing intuitive and simplified analysis. Despite  its seemingly triviality, the author has not seen this tail formulation \eqref{equ:tWW} elsewhere except in \cite{F16}, which did not pursue a null space property based approach.

The first contribution is to provide thorough stability results for recovering $\Wc$-sparse signals using the programing \eqref{equ:K0} or \eqref{equ:K}, given the sensing map satisfies some null space property. Theorem \ref{thm:iff} characterizes the exact recovery of sparse signals via \eqref{equ:K0} when no noise is present. In order to cope with approximately sparse signal or noisy measurements, we propose a splittable condition (Definition \ref{def:split}) on the atomic set $\Wc$, which is interesting in its own right. Under this splittable condition, Theorem \ref{thm:main} and Theorem \ref{thm:RNSP} shows recovery through \eqref{equ:K} is stable and robust if $\Ac$ satisfies a robust null space property.

The second contribution is to derive simpler analysis and new results for recovering frame sparse signals and low rank matrices. 
Corollary \ref{cor:frame} provides new recovery results for frame sparse signals under the stable or robust null space property. 
This is particularly noteworthy as for recovering frame sparse signals, the $\ell_1$ synthesis method (see \eqref{equ:l1s}) has posed challenges in stability analysis. A result such as Corollary \ref{cor:frame} is desirable but was lacking. The general atomic setting allows a proper stable or robust null space property (see \eqref{equ:fstnsp} and \eqref{equ:frnsp}) and induces the new concept of splittability (Definition \ref{def:split}) which is crucial in the technical arguments.
\color{black}
Moreover, Corollary \ref{cor:frame2} is an improvement over \cite[Theorem 5.2]{CWW14}, as explained in Remark  \ref{rem:snsp}.
Corollary \ref{cor:fmin}  provides the minimum number of measurements needed using the $\ell_1$ synthesis method. 
 Corollary \ref{cor:matrix}(a)(c) are  new results for low rank matrix recovery.

Our work has a similar setup as \cite{C12}, but it differs from \cite{C12} in several ways. First, the recovery results in \cite{C12} focus on signals being exactly sparse whereas our settings are more robust. In fact, we focuses on the tail quantity which is more meaningful for non-sparse signals. Secondly, we emphasize on the null space property which allows us to provide deterministic recovery results. Consequently, any probabilistic  statements are uniform on the signals in the sense that the sensing map satisfies certain condition with high probability, and hence all signals can be recovered robustly. Thirdly, regarding the number of measurements, we also provide the minimum number of measurements required when using the convex programming \eqref{equ:K0} (Theorem \ref{thm:min2}).



The rest of the sections are organized as follows. Section \ref{sec:recover} presents the recovery results under null space properties for the general framework.
Section \ref{sec:min} investigates the minimum number of measurements required for successful recovery through \eqref{equ:K} or any stable decoder, given the atomic set $\Wc$ is finite and properly conditioned. Section \ref{sec:sub} focuses on the subgaussian measurements, which in many cases achieves the minimum number of measurements.
Sections \ref{sec:frame} and \ref{sec:matrix} focus on two special cases, signals sparse in a frame and low rank matrix recovery. 

\section{Recovery Results}\label{sec:recover}
In this section, we will provide stability and robustness results of recovering signals 
 through the minimization problem \eqref{equ:K0} or \eqref{equ:K} if the null space of the linear map $\Ac$ satisfies certain properties.

Given the set $\Wc$, and the sparsity level $s$, we define the set
\begin{equation}\label{equ:E}
E_{\Wc,s}:=\{z:\|v\|_\Wc<\|z-v\|_\Wc,  \text{ for any }v\in\Sigma_{\Wc,s}\}.
\end{equation}
Loosely speaking, the signals in $E_{\Wc,s}$ should have their atomic energy spread out over the atoms of $\Wc$. On the other hand, a $\Wc$-$s$-sparse vector has its energy concentrated at $s$ atoms only and obviously does not belong to $E_{\Wc,s}$. Furthermore,  one can easily show  that  $\Sigma_{\Wc,2s}\cap E_{\Wc,s}=\emptyset$.

Since the goal is to tell $\Wc$-$s$-sparse vectors apart after mapping through $\Ac$, at minimum, we need to have $\Nc(\Ac)\cap \Sigma_{\Wc,2s}=\{0\}$. The following property of $\Ac$ can guarantee this.
\begin{definition}\label{def:wnsp}
Given the set $\Wc$, and the sparsity level $s$, $\Ac$ is said to have $\Wc$ null space property of order $s$ ($\Wc$-$s$-NSP) if $\Nc(\Ac)\backslash\{0\}\subseteq E_{\Wc,s}$. Specifically, $\Wc$-$s$-NSP is equivalent to
\begin{equation}\label{equ:wnsp}
\|v\|_\Wc<\|z-v\|_\Wc, \quad  \text{ for any }z\in\Nc(\Ac)\backslash\{0\} \text{ and any }v\in\Sigma_{\Wc,s}.
\end{equation}
\end{definition}

The condition $\Wc$-$s$-NSP is clearly stronger than the bare minimum $\Nc(A)\cap \Sigma_{\Wc,2s}=\{0\}$, but it is also a necessary condition to recover all $\Wc$-$s$-sparse signals if we choose to use the minimization problem \eqref{equ:K0}.

\begin{theorem}\label{thm:iff}
$\Ac$ has $\Wc$-$s$-NSP  if and only if the method \eqref{equ:K0} is successful at recovering all signals in $\Sigma_{\Wc,s}$.
\end{theorem}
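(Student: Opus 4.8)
The plan is to prove both directions of the equivalence, treating this as a standard-type null space property argument adapted to the atomic-norm setting.

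\textbf{Sufficiency ($\Wc$-$s$-NSP $\Rightarrow$ recovery).} Suppose $\Ac$ has $\Wc$-$s$-NSP, let $z_0 \in \Sigma_{\Wc,s}$, and let $\hat z$ be any solution of \eqref{equ:K0} with $y = \Ac z_0$. Since $z_0$ is feasible and $\hat z$ is a minimizer, $\|\hat z\|_\Wc \le \|z_0\|_\Wc$. Set $h := \hat z - z_0$; then $h \in \Nc(\Ac)$ because $\Ac \hat z = \Ac z_0 = y$. If $h = 0$ we are done, so suppose $h \ne 0$. Apply the NSP inequality \eqref{equ:wnsp} with $z = h \in \Nc(\Ac)\setminus\{0\}$ and $v = -z_0 \in \Sigma_{\Wc,s}$ (note $\Sigma_{\Wc,s}$ is symmetric): this gives $\|-z_0\|_\Wc < \|h - (-z_0)\|_\Wc$, i.e. $\|z_0\|_\Wc < \|\hat z\|_\Wc$, contradicting $\|\hat z\|_\Wc \le \|z_0\|_\Wc$. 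Hence $h = 0$ and $\hat z = z_0$, so every $\Wc$-$s$-sparse signal is the unique minimizer.

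\textbf{Necessity (recovery $\Rightarrow$ $\Wc$-$s$-NSP).} Suppose \eqref{equ:K0} recovers every signal in $\Sigma_{\Wc,s}$, and fix $z \in \Nc(\Ac)\setminus\{0\}$ and $v \in \Sigma_{\Wc,s}$. I want to show $\|v\|_\Wc < \|z - v\|_\Wc$. Consider the signal $z_0 := v$; since $\Ac z = 0$ we have $\Ac(z - v) = \Ac(-v) = \Ac z_0 \cdot(-1)$... more carefully, set $z_0 := v$ and observe that $v - z$ is feasible for the program with $y = \Ac v$ (indeed $\Ac(v - z) = \Ac v$). By exact recovery, $v$ is the \emph{unique} minimizer, so every other feasible point has strictly larger atomic norm; in particular, since $z \ne 0$ implies $v - z \ne v$, we get $\|v\|_\Wc < \|v - z\|_\Wc = \|z - v\|_\Wc$. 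This is exactly \eqref{equ:wnsp}, so $\Ac$ has $\Wc$-$s$-NSP. (One should note the homogeneity/symmetry of $\|\cdot\|_\Wc$ used here: $\|z - v\|_\Wc = \|v - z\|_\Wc$ because $\Wc_{\text{sym}} = -\Wc_{\text{sym}}$.)

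\textbf{Main obstacle.} The argument is essentially routine once the definitions are unpacked; the only subtlety is making sure $\Sigma_{\Wc,s}$ is symmetric under negation so that the substitution $v \mapsto -z_0$ in the sufficiency direction is legitimate — this is immediate from the definition since the coefficients $c_i$ are allowed to be arbitrary reals. A second minor point is that ``recovery'' in the statement should be read as recovery by \emph{every} minimizer (equivalently, uniqueness of the minimizer at each $\Wc$-$s$-sparse signal), which the proof of necessity relies on when it invokes strict inequality; I would state this convention explicitly at the start.
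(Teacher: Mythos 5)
Your proof is correct and follows essentially the same argument as the paper: both directions reduce to comparing the atomic norms of two feasible points differing by a nonzero null space element, with only cosmetic differences (you recover $v$ and invoke symmetry of $\|\cdot\|_\Wc$ and of $\Sigma_{\Wc,s}$, while the paper recovers $-v$ directly). Your explicit remark that ``success'' means uniqueness of the minimizer matches the convention the paper uses implicitly.
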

\begin{proof}
Suppose \eqref{equ:K0} is successful at recovering all signals in $\Sigma_{\Wc,s}$. Take any $v\in\Sigma_{\Wc,s}$ and any $z\in\Nc (\Ac)\backslash\{0\}$. We will use \eqref{equ:K0} to recover $-v$ (with $y=-\Ac v$). Both $-v$ and $z-v$ are feasible in \eqref{equ:K0} and $-v\neq z-v$. By assumption, $-v$ must be the unique minimizer, which means $\|-v\|_\Wc<\|z-v\|_\Wc$.

On the other hand, we assume $\Ac$ satisfies $\Wc$-$s$-NSP. Fix an arbitrary $v\in \Sigma_{\Wc,s}$, we solve  
$\min_{z\in \R^d} \|z\|_{\Wc}$  subject to $\Ac z=\Ac(v)$. For every feasible $z\neq v$, we have $v-z\in\Nc(\Ac)\backslash\{0\}$, so by $\Wc$-$s$-NSP, $\|v\|_\Wc<\|v-z-v\|_\Wc=\|z\|_\Wc$, which shows that the unique minimizer of this problem must be $v$, the signal to be recovered.
\end{proof}



\begin{remark}
In the setting of Example \ref{exa:cs}, Definition \ref{def:wnsp} reduces to the well known null space property of a matrix $A$, which is:
\begin{equation}\label{equ:nsp}
\|z_T\|_1<\|z_{T^c}\|_1, \text{ for any }z\in\Nc(A)\backslash\{0\} \text{ and any }|T|\leq s.
\end{equation}
This is because for any $z\in\Nc(A)\backslash\{0\}$ and any $v\in\Sigma_{\Ec,s}$, let $T$ be the support of $v$, then
$$\|v\|_1-\|z-v\|_1=\|v\|_1-\|z_T-v\|_1-\|z_{T^c}\|_1\leq\|v+z_T-v\|_1-\|z_{T^c}\|_1=\|z_T\|_1-\|z_{T^c}\|_1<0.$$
\end{remark}

The inequality $\|z_T\|_1<\|z_{T^c}\|_1$ is equivalent to $\frac{\|z_T\|_1}{\|z\|_1}<0.5$. Since the quantity $\frac{\|z_T\|_1}{\|z\|_1}$ is scaling free, for a fixed $T$, we may restrict $z$ in the compact set $\Nc(A)\cap \Sbd$.  Therefore $\frac{\|z_T\|_1}{\|z\|_1}$ obtain its minimum, which is strictly less than 0.5, on this compact set.
In summary, the NSP \eqref{equ:nsp} implies the existence of $\rho<1$ such that $\|z_T\|_1\leq\rho\|z_{T^c}\|_1$ for any $z\in\Nc(A)$ and any $|T|\leq s$. However, this is certainly not the case for Definition \ref{def:wnsp}. In fact, for any $\rho<1$, we can never have 
\begin{equation}\label{equ:wrong}
\|v\|_\Wc\leq\rho\|z-v\|_\Wc, \qquad  \text{ for any }z\in\Nc(\Ac), \text{ and any }v\in\Sigma_{\Wc,s}.
\end{equation}
 This is because with a fixed $z$, we can choose $v\in\Sigma_{\Wc,s}$ (by scaling) such that $\|v\|_\Wc$ is big enough to make $\|v\|_\Wc>\rho\|z-v\|_\Wc$.

In order to make \eqref{equ:wrong} feasible, we have to restrict the choice of $v$.
Therefore we will propose  two strengthened  null space properties that restrict $v$ to be the best $s$-term approximation of $z$.

\begin{definition}\label{def:wnsp2}
Recall $z_s$ as defined in \eqref{equ:zs}.

(a) $\Ac$ is said to have the stable $\Wc$ null space property of order $s$ with the NSP constant $\rho<1$ ($\Wc$-$s$-$\rho$-NSP) if  
\begin{equation}
\|z_s\|_\Wc\leq\rho\|z-z_s\|_\Wc, \qquad  \text{ for any }z\in\Nc(\Ac).
\end{equation}

(b) $\Ac$ is said to have the robust $\Wc$ null space property of order $s$ ($\Wc$-$s$-RNSP) if there exist  constants $\tau>0$ and $0<\rho<1$ such that 
\begin{equation}\label{equ:frnsp}
\|z_s\|_\Wc\leq\rho\|z-z_s\|_\Wc+\tau \|\Ac z\|_2, \qquad \text{for any }z\in\R^d.
\end{equation}
\end{definition}

We want to include Definition \ref{def:wnsp2}(b) since this type of ``null space'' property has been used in compressed sensing \cite{F14} and low matrix recovery \cite{KKRT16}.

We will see in Theorem \ref{thm:main} and Theorem \ref{thm:RNSP} that both versions of the strengthened null space property will imply the stable and robust recovery of any signal via \eqref{equ:K} if the atomic set $\Wc$ satisfies certain conditions. 
%
%
%

\begin{definition}\label{def:split}
We call $\Wc$ $s$-splittable with constant $\beta>0$ if for any $x,y\in\R^d$, 
\begin{equation}\label{equ:split}
\|x+y\|_\Wc\geq\|x_s\|_\Wc-\|y_s\|_\Wc+\beta(\|y-y_s\|_\Wc-\|x-x_s\|_\Wc).
\end{equation}
\end{definition}

Definition \ref{def:split} looks strange at first glance, but as two important examples, both $\Ec$ and $\Mc$ are $s$-splittable with $\beta=1$. See Lemma \ref{lem:split} in the Appendix. For a frame $\Fc$, \eqref{equ:split} is likely satisfied with a small enough $\beta<1$. Further investigation is needed but we do provide an example in Lemma \ref{lem:fsplit}.

\begin{theorem}\label{thm:main}
Let $\Wc$ be $s$-splittable with constant $\beta_{\Wc}$. If $\Ac$ has the stable $\Wc$-NSP of order $s$ with  constant $0<\rho<\beta_\Wc$, then for any $z_0\in\R^d$ and $y=\Ac z_0+e$ where $\|e\|_2\leq\epsilon$, we have

\begin{equation}\label{equ:main}
\|\hat z-z_0\|_\Wc\leq\frac{(1+\beta_\Wc)(1+\rho)}{\beta_\Wc(\beta_\Wc-\rho)}\sigma_{\Wc,s}(z_0)+\frac{2C_\Wc(\beta_\Wc+1)}{\nu_\Ac(\beta_\Wc-\rho)}\epsilon,
\end{equation}

where $\hat z$ is from \eqref{equ:K}. 
\end{theorem}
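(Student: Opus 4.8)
The plan is to follow the standard null-space-property recipe from compressed sensing, adapted to the atomic-norm setting. Write $h := \hat z - z_0$ for the error vector. Since $\hat z$ is a minimizer of $\|\cdot\|_\Wc$ over the feasible set and $z_0$ is feasible (as $\|\Ac z_0 - y\|_2 = \|e\|_2 \le \epsilon$), we get $\|\hat z\|_\Wc \le \|z_0\|_\Wc$. The first step is to turn this optimality inequality into a ``tail-dominates-head'' bound on $h$. Writing $\hat z = z_0 + h$ and splitting $z_0$ via its best $s$-term approximation $(z_0)_s$, I expect to use the $s$-splittability of $\Wc$ applied to $x = z_0$ and $y = h$ (or a suitable rearrangement) together with the triangle inequality to extract something like $\|h_s\|_\Wc \le \|h - h_s\|_\Wc + 2\sigma_{\Wc,s}(z_0)$, i.e. the ``head'' $\|h_s\|_\Wc$ is controlled by the ``tail'' $\|h-h_s\|_\Wc$ plus a term proportional to the approximation error of $z_0$. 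This is the step where the definition of $s$-splittable is designed to be used, so I would unpack it carefully; its asymmetric form (with $x_s$, $y_s$, and the full tails) should be exactly what lets the optimality inequality $\|z_0+h\|_\Wc \le \|z_0\|_\Wc$ collapse to such a bound.

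The second step is to invoke the stable $\Wc$-$s$-$\rho$-NSP. Here $h$ is not in $\Nc(\Ac)$ in general — only $\hat z - z_0 = h$ satisfies $\|\Ac h\|_2 = \|\Ac\hat z - \Ac z_0\|_2 \le \|\Ac\hat z - y\|_2 + \|y - \Ac z_0\|_2 \le 2\epsilon$. So I would split $h = h_1 + h_2$ with $h_1 \in \Nc(\Ac)$ and $h_2 \perp \Nc(\Ac)$, apply the NSP to $h_1$, and control $h_2$ by $\|h_2\|_2 \le \nu_A^{-1}\|\Ac h_2\|_2 = \nu_A^{-1}\|\Ac h\|_2 \le 2\epsilon/\nu_A$, converting to atomic norm via $\|h_2\|_\Wc \le C_\Wc\|h_2\|_2 \le 2C_\Wc\epsilon/\nu_A$. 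Alternatively, and perhaps more cleanly, one can avoid the orthogonal splitting: for any $z\in\R^d$ one has $z - P z \in \Nc(\Ac)$ where $P$ is the orthogonal projection onto $\Nc(\Ac)^\perp$, apply stable NSP to $h - Ph$, and absorb the $Ph$ correction terms using $\|Ph\|_\Wc \le C_\Wc\|Ph\|_2 \le C_\Wc\nu_A^{-1}\|\Ac h\|_2$. Either way, combining with $\|\Ac h\|_2 \le 2\epsilon$ yields $\|h_s\|_\Wc \le \rho\|h - h_s\|_\Wc + (\text{const})\cdot C_\Wc\epsilon/\nu_A$ — a robust version of the head/tail bound.

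The third step chains the two inequalities. From step one, $\|h_s\|_\Wc \le \|h-h_s\|_\Wc + 2\sigma_{\Wc,s}(z_0)$; from step two, $\|h_s\|_\Wc \le \rho\|h-h_s\|_\Wc + c\,C_\Wc\epsilon/\nu_A$. I want to also observe that $\sigma_{\Wc,s}(h) = \|h - h_s\|_\Wc$ and relate the two right-hand sides: substituting the tail bound from the splitting inequality (which actually should read, after rearrangement, $\|h - h_s\|_\Wc \le \|h_s\|_\Wc + 2\sigma_{\Wc,s}(z_0)$ — the correct direction needs care) into the NSP inequality gives $\|h_s\|_\Wc(1 - \rho) \le 2\rho\,\sigma_{\Wc,s}(z_0) + c\,C_\Wc\epsilon/\nu_A$, hence $\|h_s\|_\Wc \le \frac{2\rho}{1-\rho}\sigma_{\Wc,s}(z_0) + \frac{c}{1-\rho}\cdot\frac{C_\Wc\epsilon}{\nu_A}$, and then $\|h\|_\Wc \le \|h_s\|_\Wc + \|h-h_s\|_\Wc \le 2\|h_s\|_\Wc + 2\sigma_{\Wc,s}(z_0)$, producing the final constants $\frac{2+2\rho}{1-\rho}$ and $\frac{4C_\Wc}{(1-\rho)\nu_A}$ after tracking $c = 2$.

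The main obstacle I anticipate is getting the \emph{directions and bookkeeping of the inequalities right} in step one — the $s$-splittable definition is stated in an unusual asymmetric form, and one must be careful about which vector plays the role of $x$ and which plays $y$, and about the fact that $h_s$ (a best $s$-term approximation of the error) need not be related in any simple way to $(z_0)_s$ or $(\hat z)_s$. A secondary subtlety is the passage from ``$h$ nearly in the null space'' to an honest application of the NSP, since the stable NSP as stated in Definition~\ref{def:wnsp2}(a) only hypothesizes a bound for $z \in \Nc(\Ac)$, not a robust bound for all $z$; the projection trick and the bound $\|\Ac h\|_2 \le 2\epsilon$ together with the singular-value bound $\nu_A$ are exactly what bridge this gap, and I'd want to make sure the constant $C_\Wc$ enters only through this measurement-error term, as the statement indicates.
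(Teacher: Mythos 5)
Your overall architecture (optimality of $\hat z$, splittability, the decomposition $h=a+\eta$ with $a\in\Nc(\Ac)$ and $\|\eta\|_2\le 2\epsilon/\nu_A$, and $C_\Wc$ entering only through the noise term) matches the paper, but there is a genuine gap in your second step. You first apply splittability to the pair $(z_0,h)$ to get the head/tail inequality $\|h-h_s\|_\Wc\le\|h_s\|_\Wc+2\sigma_{\Wc,s}(z_0)$ for the \emph{error} $h$, and then you claim that applying the stable NSP to the null-space component $a$ and absorbing $\eta$ yields $\|h_s\|_\Wc\le\rho\|h-h_s\|_\Wc+c\,C_\Wc\epsilon/\nu_A$. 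That inference does not follow: Definition~\ref{def:wnsp2}(a) only controls $a$'s \emph{own} best $s$-term approximation, $\|a_s\|_\Wc\le\rho\|a-a_s\|_\Wc$, and best $s$-term approximations do not commute with the decomposition $h=a+\eta$; unlike the classical NSP, there is no quantification over all $v\in\Sigma_{\Wc,s}$ (or over all supports) that would let you transfer the bound to the pair $(h_s,\,h-h_s)$. The natural triangle-inequality transfer (e.g.\ $\|h_s\|_\Wc\le\|h-h_s\|_\Wc+\|\eta\|_\Wc+\|a-a_s\|_\Wc+\|a_s\|_\Wc$ combined with $\sigma_{\Wc,s}(a)\le\sigma_{\Wc,s}(h)+\|\eta\|_\Wc$) only gives $\|h_s\|_\Wc\le(2+\rho)\|h-h_s\|_\Wc+O(\|\eta\|_\Wc)$, with coefficient $2+\rho>1$, and then your third step cannot close: chaining with $\|h-h_s\|_\Wc\le\|h_s\|_\Wc+2\sigma_{\Wc,s}(z_0)$ no longer produces a $(1-\rho)$ factor. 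In effect you are assuming the stable NSP self-upgrades to the robust NSP of Definition~\ref{def:wnsp2}(b) with the same $\rho$, which is precisely what is not available (and why the paper treats Theorem~\ref{thm:main} and Theorem~\ref{thm:RNSP} as separate results).

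The repair is to reorder the steps as the paper does: peel off $\eta$ first by the plain triangle inequality, $\|z_0\|_\Wc\ge\|\hat z\|_\Wc=\|z_0+a+\eta\|_\Wc\ge\|z_0+a\|_\Wc-\|\eta\|_\Wc$, and only then apply splittability, to the pair $(z_0,a)$. This produces the head/tail inequality for the null-space vector $a$ itself, $\|a-a_s\|_\Wc\le 2\sigma_{\Wc,s}(z_0)+\|a_s\|_\Wc+\|\eta\|_\Wc$, to which the stable NSP applies verbatim, giving $\|a_s\|_\Wc\le\frac{2\rho}{1-\rho}\sigma_{\Wc,s}(z_0)+\frac{\rho}{1-\rho}\|\eta\|_\Wc$; the error is reassembled only at the end via $\|h\|_\Wc\le\|a_s\|_\Wc+\|a-a_s\|_\Wc+\|\eta\|_\Wc$ and $\|\eta\|_\Wc\le C_\Wc\|\eta\|_2\le 2C_\Wc\epsilon/\nu_A$, which yields exactly the constants in \eqref{equ:main}.
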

\begin{proof}
Let $h=\hat z-z_0$. 
We decompose $h$ as $h=a+\eta$ where $a\in \Nc(\Ac)$ and $\eta\in \Nc(\Ac)^\perp$. We have $$\|\eta\|_2\leq\frac{1}{\nu_\Ac}\|\Ac h\|_2\leq\frac{2\epsilon}{\nu_\Ac}.$$

On one hand, using that $\hat z$ is a minimizer and $\Wc$ is splittable,
\begin{align*}
\|z_0\|_\Wc&\geq\|\hat z\|_\Wc=\|a+\eta+z_0\|_\Wc\geq\|z_0+a\|_\Wc-\|\eta\|_\Wc\\
&\geq\|z_{0,s}\|_\Wc-\|a_s\|_\Wc+\beta_\Wc\left(\|a-a_s\|_\Wc-\|z_0-z_{0,s}\|_\Wc\right)-\|\eta\|_\Wc\\
&\geq\|z_0\|_\Wc-\|z_0-z_{0,s}\|_\Wc-\|a_s\|_\Wc+\beta_\Wc\|a-a_s\|_\Wc-\beta_\Wc\|z_0-z_{0,s}\|_\Wc-\|\eta\|_\Wc.
\end{align*}
If we denote $\|z_0-z_{0,s}\|_\Wc$ as $\sigma$, then the above simplifies to
\begin{equation}\label{equ:snsp1}
\beta_\Wc\|a-a_s\|_\Wc\leq(1+\beta_\Wc)\sigma+\|a_s\|_\Wc+\|\eta\|_\Wc
\end{equation}

On the other hand, by stable $\Wc$-NSP,
\begin{equation}\label{equ:a}
\|a_s\|_\Wc\leq\rho\|a-a_s\|_\Wc.
\end{equation}
Combining \eqref{equ:snsp1} and \eqref{equ:a}, we have
$$\frac{\beta_\Wc}{\rho}\|a_s\|_\Wc\leq(1+\beta_\Wc)\sigma+\|a_s\|_\Wc+\|\eta\|_\Wc,$$
 which simplifies to
\begin{equation}\label{equ:v0}
\|a_s\|_\Wc\leq\frac{(1+\beta_\Wc)\rho}{\beta_\Wc-\rho}\sigma+\frac{\rho}{\beta_\Wc-\rho}\|\eta\|_\Wc.
\end{equation}
In the end, using \eqref{equ:snsp1} and \eqref{equ:v0},
\begin{align*}
\|h\|_\Wc&\leq\|a\|_\Wc+\|\eta\|_\Wc\leq\|a_s\|_\Wc+\|a-a_s\|_\Wc+\|\eta\|_\Wc\\
&\leq\|a_s\|_\Wc+\frac{1+\beta_\Wc}{\beta_\Wc}\sigma+\frac{1}{\beta_\Wc}\|a_s\|_\Wc+\frac{1+\beta_\Wc}{\beta_\Wc}\|\eta\|_\Wc\\
&\leq\frac{1+\beta_\Wc}{\beta_\Wc}\left[\frac{(1+\beta_\Wc)\rho}{\beta_\Wc-\rho}\sigma+\frac{\rho}{\beta_\Wc-\rho}\|\eta\|_\Wc\right]+\frac{1+\beta_\Wc}{\beta_\Wc}(\sigma+\|\eta\|_\Wc)\\
&=\frac{(1+\beta_\Wc)(1+\rho)}{\beta_\Wc(\beta_\Wc-\rho)}\sigma+\frac{\beta_\Wc+1}{\beta_\Wc-\rho}\|\eta\|_\Wc\leq\frac{(1+\beta_\Wc)(1+\rho)}{\beta_\Wc(\beta_\Wc-\rho)}\sigma+\frac{\beta_\Wc+1}{\beta_\Wc-\rho}C_\Wc\|\eta\|_2\\
&\leq\frac{(1+\beta_\Wc)(1+\rho)}{\beta_\Wc(\beta_\Wc-\rho)}\sigma+\frac{2C_\Wc(\beta_\Wc+1)}{\nu_\Ac(\beta_\Wc-\rho)}\epsilon.
\end{align*}
\end{proof}

\begin{theorem}\label{thm:RNSP}
Let $\Wc$ be $s$-splittable with constant $\beta_{\Wc}$. Given $z_0\in\R^d$ and the recovered signal $\hat z$ from \eqref{equ:K} where $y=\Ac z_0+e$ with $\|e\|_2\leq\epsilon$, then the robust $\Wc$-NSP of order $s$ with  constant $0<\rho<\beta_\Wc$ and $\tau>0$ yields the following stability result
\begin{equation}
\|\hat z-z_0\|_\Wc\leq\frac{(1+\beta_\Wc)(1+\rho)}{\beta_\Wc-\rho}\sigma_{\Wc,s}(z_0)+\frac{2(1+\beta_\Wc)\tau}{\beta_\Wc-\rho}\epsilon.
\end{equation}
\end{theorem}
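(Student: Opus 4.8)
The plan is to follow the same skeleton as the proof of Theorem \ref{thm:main}, with one genuine simplification: since the robust $\Wc$-NSP \eqref{equ:frnsp} holds for \emph{every} $z\in\R^d$ (not only for $z\in\Nc(\Ac)$), there is no need to decompose the error $h:=\hat z-z_0$ into a null-space component and an orthogonal-complement component. Instead I would first record the feasibility bound
$$\|\Ac h\|_2=\|\Ac\hat z-y+e\|_2\leq\|\Ac\hat z-y\|_2+\|e\|_2\leq2\epsilon,$$
which holds because $\hat z$ is feasible in \eqref{equ:K} and $z_0$ has residual $e$ with $\|e\|_2\le\epsilon$.

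Next I would extract a cone-type inequality from optimality. Write $\sigma:=\sigma_{\Wc,s}(z_0)=\|z_0-z_{0,s}\|_\Wc$. Since $\hat z$ minimizes $\|\cdot\|_\Wc$ over the feasible set and $z_0$ is feasible, $\|z_0\|_\Wc\geq\|\hat z\|_\Wc=\|z_0+h\|_\Wc$. Applying $s$-splittability (Definition \ref{def:split}) with $x=z_0$, $y=h$, followed by the triangle inequality $\|z_{0,s}\|_\Wc\geq\|z_0\|_\Wc-\sigma$, gives
$$\|z_0\|_\Wc\geq\|z_0\|_\Wc-2\sigma-\|h_s\|_\Wc+\|h-h_s\|_\Wc,$$
hence $\|h-h_s\|_\Wc\leq2\sigma+\|h_s\|_\Wc$.

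Then I would invoke the robust $\Wc$-NSP directly on $h$:
$$\|h_s\|_\Wc\leq\rho\|h-h_s\|_\Wc+\tau\|\Ac h\|_2\leq\rho\bigl(2\sigma+\|h_s\|_\Wc\bigr)+2\tau\epsilon,$$
so that $\|h_s\|_\Wc\leq\frac{2\rho\sigma+2\tau\epsilon}{1-\rho}$. Finally, combining $\|h\|_\Wc\leq\|h_s\|_\Wc+\|h-h_s\|_\Wc\leq2\|h_s\|_\Wc+2\sigma$ with the last estimate and collecting terms over the common denominator $1-\rho$ yields the claimed inequality.

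I do not anticipate a real obstacle here. The only points needing care are reading the term $\|y-y_s\|$ in Definition \ref{def:split} as $\|y-y_s\|_\Wc$ and ordering the triangle inequalities correctly so the constants land exactly as stated; and, if one wants full rigor, treating the case where the infimum defining $\sigma_{\Wc,s}(z_0)$ is not attained (so that $z_{0,s}$ does not literally exist), which is handled by a routine $\delta$-approximation that leaves the final constants unchanged.
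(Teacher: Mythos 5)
Your proposal is correct and follows essentially the same argument as the paper's proof: the optimality-plus-splittability cone inequality $\|h-h_s\|_\Wc\leq\|h_s\|_\Wc+2\sigma$, then the robust $\Wc$-NSP applied directly to $h$ (no null-space decomposition is used in the paper either), then the combination with $\|\Ac h\|_2\leq2\epsilon$. The only differences are cosmetic, namely the point at which you substitute the bound $\|\Ac h\|_2\leq2\epsilon$ and your (welcome) remark about the case where the infimum defining $\sigma_{\Wc,s}(z_0)$ is not attained.
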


\begin{proof}
Let $h=\hat z-z_0$. Denote $\sigma_{\Wc,s}(z_0)=\|z_0-z_{0,s}\|_\Wc$ by $\sigma$.
Using $\hat z$ is a minimizer and $\Wc$ is splittable,
\begin{align*}
\|z_0\|_\Wc&\geq\|\hat z\|_\Wc=\|z_0+h\|_\Wc\geq\|z_{0,s}\|_\Wc-\|h_s\|_\Wc+\beta_\Wc(\|h-h_s\|_\Wc-\sigma)\\
&\geq\|z_0\|_\Wc-\sigma-\|h_s\|_\Wc+\beta_\Wc\|h-h_s\|_\Wc-\beta_\Wc\sigma,
\end{align*}
which simplifies to
\begin{equation}\label{equ:rnsp1}
\beta_\Wc\|h-h_s\|_\Wc\leq\|h_s\|_\Wc+(1+\beta_\Wc)\sigma.
\end{equation}
Now we use the $\Wc$-RNSP to get
\begin{equation}\label{equ:rnsp2}
\|h_s\|_\Wc\leq\rho\|h-h_s\|_\Wc+\tau\|\Ac h\|_2.
\end{equation}
\eqref{equ:rnsp1} and \eqref{equ:rnsp2} implies
\begin{equation}\label{equ:main3}
\|h_s\|_\Wc\leq\frac{1}{\beta_\Wc-\rho}((1+\beta_\Wc)\rho\sigma+\beta_\Wc\tau\|\Ac h\|_2).
\end{equation}
Using \eqref{equ:rnsp1} and \eqref{equ:main3},
\begin{align*}
\|h\|_\Wc&\leq\|h_s\|_\Wc+\|h-h_s\|_\Wc\leq\frac{1+\beta_\Wc}{\beta_\Wc}(\|h_s\|_\Wc+\sigma)\\
&\leq\frac{1+\beta_\Wc}{\beta_\Wc}\left(\frac{1}{\beta_\Wc-\rho}((1+\beta_\Wc)\rho\sigma+\beta_\Wc\tau\|\Ac h\|_2)+\sigma\right)\\
&=\frac{(1+\beta_\Wc)(1+\rho)}{\beta_\Wc-\rho}\sigma+\frac{1+\beta_\Wc}{\beta_\Wc-\rho}\tau\|\Ac h\|_2.
\end{align*}
Finally, the desired result holds because $\|\Ac h\|_2\leq 2\epsilon$.
\end{proof}

Finally, we proposed a third kind  of null space property.


\begin{definition}\label{def:snsp}
$\Ac$ is said to have the strong $\Wc$ null space property of order $s$ ($\Wc$-$s$-SNSP) if  there exists $c>0$ such that
\begin{equation}\label{equ:snsp}
\|z-v\|_\Wc-\|v\|_\Wc\geq c\|z\|_2, \qquad  \text{ for any }z\in\Nc(\Ac), \text{ and any }v\in\Sigma_{\Wc,s}.
\end{equation}
\end{definition}

 As we will see in the following theorem, recovery results with this property does not pose further conditions on $\Wc$. Definition \ref{def:snsp} is inspired by \cite{CWW14} and will be discussed more in Section \ref{sec:frame}. The use of $\|\cdot\|_2$ in \eqref{equ:snsp} is essentially due to the same norm used in measurement noise in \eqref{equ:K}.

\begin{theorem}\label{thm:snsp}
If $\Ac$ has the strong $\Wc$-NSP of order $s$ as in \eqref{equ:snsp}, then for any $z_0\in\R^d$ and $y=\Ac z_0+e$ where $\|e\|_2\leq\epsilon$, we have
$$\|\hat z-z_0\|_2\leq\frac{2}{\nu_\Ac}\left(\frac{C_\Wc}{c}+1\right)\epsilon+\frac{2}{c}\sigma_{\Wc,s}(z_0),$$
where $\hat z$ is from \eqref{equ:K}. 
\end{theorem}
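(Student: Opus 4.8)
\textbf{Proof proposal for Theorem \ref{thm:snsp}.}

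The plan is to mimic the structure of the proofs of Theorem \ref{thm:main} and Theorem \ref{thm:RNSP}, but to exploit that the strong $\Wc$-NSP directly controls the Euclidean norm of null space vectors, so no splittability of $\Wc$ is needed and no orthogonal decomposition of $h$ is required. Set $h=\hat z-z_0$ and write $\sigma=\sigma_{\Wc,s}(z_0)=\|z_0-z_{0,s}\|_\Wc$. Unlike in the previous two theorems, $h$ need not lie in $\Nc(\Ac)$ because of the measurement noise, so the first step is to split off the part of $h$ that is in the null space: decompose $h=a+\eta$ with $a\in\Nc(\Ac)$ and $\eta\in\Nc(\Ac)^\perp$, so that $\|\eta\|_2\le \nu_\Ac^{-1}\|\Ac h\|_2\le 2\epsilon/\nu_\Ac$ using that both $\hat z$ and $z_0$ are feasible in \eqref{equ:K}.

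Next I would produce the ``cone'' inequality coming from optimality of $\hat z$. Since $\|\hat z\|_\Wc\le\|z_0\|_\Wc$ and $\hat z=z_0+a+\eta$, the triangle inequality gives $\|z_0+a\|_\Wc\le\|z_0\|_\Wc+\|\eta\|_\Wc\le\|z_0\|_\Wc+C_\Wc\|\eta\|_2$. Now apply the triangle inequality around the best $s$-term approximation $z_{0,s}\in\Sigma_{\Wc,s}$: from $\|z_0+a\|_\Wc\ge\|z_{0,s}+a\|_\Wc-\|z_0-z_{0,s}\|_\Wc=\|a-(-z_{0,s})\|_\Wc-\sigma$ and $\|z_0\|_\Wc\le\|z_{0,s}\|_\Wc+\sigma$, combining yields
$$\|a-(-z_{0,s})\|_\Wc-\|-z_{0,s}\|_\Wc\le 2\sigma+C_\Wc\|\eta\|_2.$$
The left-hand side is exactly the quantity the strong $\Wc$-NSP bounds from below, applied to the null space vector $a$ and the $\Wc$-$s$-sparse vector $v=-z_{0,s}$: it is $\ge c\|a\|_2$. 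Hence $\|a\|_2\le \frac{1}{c}(2\sigma+C_\Wc\|\eta\|_2)$.

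Finally assemble: $\|h\|_2\le\|a\|_2+\|\eta\|_2\le\frac{2}{c}\sigma+\frac{C_\Wc}{c}\|\eta\|_2+\|\eta\|_2=\frac{2}{c}\sigma+\left(\frac{C_\Wc}{c}+1\right)\|\eta\|_2$, and substituting $\|\eta\|_2\le 2\epsilon/\nu_\Ac$ gives precisely the claimed bound $\|\hat z-z_0\|_2\le\frac{2}{\nu_\Ac}\left(\frac{C_\Wc}{c}+1\right)\epsilon+\frac{2}{c}\sigma_{\Wc,s}(z_0)$. The only slightly delicate point is making sure $z_{0,s}$ exists (the infimum in \eqref{equ:tWW} is attained) so that $z_{0,s}\in\Sigma_{\Wc,s}$ is a legitimate choice of $v$ in Definition \ref{def:snsp}; this is where compactness of $\Wc$ enters, and if one wants to avoid it one replaces $z_{0,s}$ by an approximate minimizer and lets the error tend to zero. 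Everything else is the same two-inequality bookkeeping as in the earlier proofs, so I expect no real obstacle beyond this attainability issue and the care needed to route the noise through $\eta$ rather than assuming $h\in\Nc(\Ac)$.
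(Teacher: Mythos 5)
Your proposal is correct and takes essentially the same route as the paper's own proof: decompose $h=\hat z-z_0=a+\eta$ with $a\in\Nc(\Ac)$, $\eta\in\Nc(\Ac)^\perp$, derive the cone-type inequality from optimality of $\hat z$ and the triangle inequality around a best $s$-term approximant, apply the strong $\Wc$-NSP to the null space component $a$ with $v=-z_{0,s}$, and assemble via $\|h\|_2\le\|a\|_2+\|\eta\|_2$ together with $\|\eta\|_\Wc\le C_\Wc\|\eta\|_2\le 2C_\Wc\epsilon/\nu_\Ac$. Your remark on attainment of the infimum in \eqref{equ:tWW} (or passing to approximate minimizers) is a minor refinement of a point the paper glosses over by simply choosing $v_0\in\Sigma_{\Wc,s}$ realizing the tail.
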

\begin{proof}

Let $\sigma_{\Wc,s}(z_0)=\|z_0-v_0\|_{\Wc}$, where $v_0\in\Sigma_{\Wc,s}.$ 
 Let $h=\hat z-z_0$. We abbreviate $\sigma_{\Wc,s}(z_0)$ to $\sigma$, then
 $\|v_0\|_\Wc+\sigma\geq\|z_0\|_\Wc\geq\|\hat z\|_\Wc=\|h+z_0\|_\Wc=\|h+v_0+z_0-v_0\|_\Wc\geq\|h+v_0\|_\Wc-\sigma$, which simplifies to
\begin{equation}\label{equ:main1}
\|h+v_0\|_\Wc\leq\|v_0\|_\Wc+2\sigma.
\end{equation}

On the other hand, we decompose $h$ as $h=a+\eta$ where $a\in \Nc(\Ac)$ and $\eta\in \Nc(\Ac)^\perp$. We have $$\|\eta\|_2\leq\frac{1}{\nu_\Ac}\|\Ac h\|_2\leq\frac{2\epsilon}{\nu_\Ac}.$$

By strong NSP,
\begin{equation}\label{equ:a4}
\|a+v_0\|_\Wc-\|v_0\|_\Wc\geq c\|a\|_2.
\end{equation}
Combining \eqref{equ:main1} and \eqref{equ:a4}, we have
\begin{align*}
c\|a\|_2&\leq\|a+v_0\|_\Wc-\|h+v_0\|_\Wc+2\sigma_{\Wc,s}(z_0)\\
&\leq\|a-h\|_\Wc+2\sigma_{\Wc,s}(z_0)=\|\eta\|_\Wc+2\sigma_{\Wc,s}(z_0).
\end{align*}
In the end,
\begin{align*}
\|h\|_2&\leq\|a\|_2+\|\eta\|_2\leq\frac{1}{c}\|\eta\|_\Wc+\frac{2}{c}\sigma_{\Wc,s}(z_0)+\|\eta\|_2\\
&\leq\left(\frac{C_\Wc}{c}+1\right)\|\eta\|_2+\frac{2}{c}\sigma_{\Wc,s}(z_0)\leq\frac{2}{\nu_\Ac}\left(\frac{C_\Wc}{c}+1\right)\epsilon+\frac{2}{c}\sigma_{\Wc,s}(z_0)
\end{align*}
\end{proof}

\section{Minimum Number of Measurements}\label{sec:min}
This section provides results on the smallest number of measurements required for successful recovery, whether it is through  \eqref{equ:K}, or another general decoder $\Delta$.
In this section, we will assume that there are finitely many atoms in $\Wc$ and impose the following $s$-even condition. Therefore the definitions and results in this section does not apply to $\Mc$ (Example \ref{exa:matrix}) or $\Pc$ (Example \ref{exa:p}).

\begin{definition}
The atomic set $\Wc=\{w_i\}_{i=1}^N$ is said to be $s$-even if for any index set $|J|\leq s$,  and any scalar $|c_j|=1$, we have $\|\sum_{j\in J}c_jw_j\|_\Wc=\sum_{j\in J}|c_j|=|J|$. 
\end{definition}

To understand the above definition better, we see that $\|\sum_{j\in J}c_jw_j\|_\Wc=|J|$ is equivalent to $\sum_{j\in J}\frac{c_j}{|J|}w_j$ being on the boundary of $\text{conv}(\Wc_{\text{sym}})$. On the convex geometry level, this is requiring every  subset of $s$ (or less) extreme points (the set $\Wc_{\text{sym}}$) to \textbf{be on} a face of the polytope. If we require every  subset of $s$ extreme points to \textbf{form} a face of the polytope, this is the so called  \emph{$s$-neighborly} polytope~\cite{BLN13}.The remark below argues that the $s$-even property is not too harsh given $s$ is small.

\begin{remark}
In the context of Example \ref{exa:frame}, a sufficient condition for $\Fc$ to be $s$-even is for the matrix $F=[f_1,f_2,\cdots,f_n]$ to have the $s$-NSP condition. A proof is provided in the Appendix (Lemma \ref{lem:snsp}). The interested readers can also refer to \cite{D04} for a similar result.
\end{remark}

 
\begin{lemma}[{\cite[Lemma 2.3]{FPRT10}}]\label{lem:N}
Suppose $t<N$ are integers. There exists a family $\Uc$ of subsets of $[N]$ such that:
\begin{enumerate}
\item Every set in $\Uc$ consists of exactly $t$ elements.
\item For all $I,J\in\Uc$ with $I\neq J$, it holds that $|I\cap J|<\frac{t}{2}$.
\item $|\Uc|\geq\left(\frac{N}{4t}\right)^{t/2}$.
\end{enumerate}
\end{lemma}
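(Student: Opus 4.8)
The plan is to obtain $\Uc$ greedily, as a \emph{maximal} family, and to read the size bound off from maximality. First I would fix $\Uc$ to be any family of $t$-element subsets of $[N]$ that is maximal under inclusion subject to the constraint that any two distinct members meet in fewer than $t/2$ elements; such a family exists because $[N]$ has only finitely many subsets. With this choice, properties (1) and (2) hold by construction, so the only thing left to prove is the cardinality lower bound (3).

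The engine is a covering statement coming from maximality. For each $I\in\Uc$ define $B_I:=\{S\subseteq[N]:\ |S|=t,\ |S\cap I|\ge \lceil t/2\rceil\}$. I claim every $t$-element subset $S$ of $[N]$ lies in some $B_I$: if $S\in\Uc$ this is clear (take $I=S$), and if $S\notin\Uc$ then by maximality $\Uc\cup\{S\}$ violates the intersection constraint, which forces $|S\cap I|\ge t/2$, i.e.\ $S\in B_I$, for some $I\in\Uc$. Hence the sets $B_I$, $I\in\Uc$, cover all $\binom{N}{t}$ of the $t$-subsets, so that $\binom{N}{t}\le\sum_{I\in\Uc}|B_I|\le|\Uc|\cdot\max_{I\in\Uc}|B_I|$.

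It remains to bound $|B_I|$ from above and rearrange. An element of $B_I$ can be specified by choosing $\lceil t/2\rceil$ of its elements inside the $t$-set $I$ and then choosing the remaining at most $\lfloor t/2\rfloor$ elements anywhere in $[N]$, so $|B_I|\le\binom{t}{\lceil t/2\rceil}\binom{N}{\lfloor t/2\rfloor}$ (this overcounts, but it suffices). Therefore
\[
|\Uc|\ \ge\ \frac{\binom{N}{t}}{\binom{t}{\lceil t/2\rceil}\,\binom{N}{\lfloor t/2\rfloor}} ,
\]
and the announced bound follows from a routine estimate of this quotient of binomial coefficients: bounding $\binom{t}{\lceil t/2\rceil}\le 2^t$ and writing $\binom{N}{t}/\binom{N}{\lfloor t/2\rfloor}$ as a product of $\lceil t/2\rceil$ factors each at least $(N-t)/t$ gives $|\Uc|\ge\big(\tfrac{N-t}{4t}\big)^{t/2}$, which for $t$ bounded away from $N$ (the only regime in which the statement is non-trivial, since otherwise $|\Uc|\ge 1$ already beats the right-hand side) is of the stated form, up to the precise value of the absolute constant; see \cite{FPRT10} for the constant $\tfrac14$.

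I expect the only genuinely delicate point to be the counting in the last paragraph — getting a clean enough upper bound on $|B_I|$ and then squeezing the resulting quotient down to exactly $\big(N/(4t)\big)^{t/2}$, where the floor/ceiling bookkeeping around $t/2$ and the boundary behaviour when $t$ is comparable to $N$ need a little care. The existence of a maximal family and the covering property that drives everything are soft. As an alternative to the greedy construction one could run an alteration argument: take $M$ independent uniformly random $t$-subsets, bound by a hypergeometric tail estimate the probability that a fixed pair meets in at least $t/2$ points, delete one set from each offending pair, and optimize over $M$; this yields a bound of the same shape.
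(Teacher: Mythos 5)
Your skeleton — take a maximal family $\Uc$ subject to the intersection constraint, cover all $t$-subsets by the sets $B_I=\{S:|S|=t,\ |S\cap I|\ge\lceil t/2\rceil\}$, and bound $|\Uc|\ge\binom{N}{t}/\max_I|B_I|$ — is sound, and it is essentially the counting argument behind \cite[Lemma 2.3]{FPRT10}, which the paper invokes without reproving. The genuine gap is in the final count: with $|B_I|\le\binom{t}{\lceil t/2\rceil}\binom{N}{\lfloor t/2\rfloor}$ your estimate only yields $|\Uc|\ge\bigl((N-t)/(4t)\bigr)^{t/2}$, and the remark that the statement is nontrivial only when ``$t$ is bounded away from $N$'' does not close this. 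The bound is needed for every $N>4t$; for $N=5t$, say, your estimate gives $\bigl((N-t)/(4t)\bigr)^{t/2}=1$ while the lemma asserts $(5/4)^{t/2}$, and in general the deficit is the factor $(1-t/N)^{t/2}$, which is exponentially large in $t$ for any fixed ratio $N/t>4$. Deferring ``the constant $\tfrac14$'' to the reference means item (3) as stated is not actually proved.

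The repair is a slightly sharper count of $|B_I|$ plus a monotonicity observation, after which your argument goes through verbatim. Since the $\lceil t/2\rceil$ elements of $S\cap I$ are chosen inside $I$, the remaining $\lfloor t/2\rfloor$ elements of $S$ lie among the other $N-\lceil t/2\rceil$ points, so $|B_I|\le\binom{t}{\lceil t/2\rceil}\binom{N-\lceil t/2\rceil}{\lfloor t/2\rfloor}$. Then
\[
\frac{\binom{N}{t}}{\binom{N-\lceil t/2\rceil}{\lfloor t/2\rfloor}}
=\prod_{j=0}^{\lceil t/2\rceil-1}\frac{N-j}{t-j}
\ \ge\ \Bigl(\frac{N}{t}\Bigr)^{\lceil t/2\rceil}
\ \ge\ \Bigl(\frac{N}{t}\Bigr)^{t/2},
\]
where the first inequality holds because $(N-j)/(t-j)$ is increasing in $j$ when $N>t$, so every factor is at least $N/t$. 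Combined with $\binom{t}{\lceil t/2\rceil}\le 2^{t}=4^{t/2}$, this gives $|\Uc|\ge\bigl(N/(4t)\bigr)^{t/2}$ with the stated constant, with no case distinction on the size of $t$ relative to $N$.
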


The above lemma is a famous combinatorial result, which is essential in proving both Theorem \ref{thm:min2} and Theorem \ref{thm:min}. Theorem \ref{thm:min2} focuses on the case where we have exact recovery of sparse signals from its precise measurement (no noise) via \eqref{equ:K0}. Note that the number of measurements can be on the order of $s$ if one solves $y=\Ac z_0$ via an exhaustive search. If we use the much more efficient convex programming \eqref{equ:K0}, we have to increase the number of measurements, but not by much. To prove Theorem \ref{thm:min2}, the main idea is that $\Ac$ needs to have enough measurements to be able to distinguish two different sparse vectors via minimizing the atomic norm, which is reflected in that a large collection of disjoint unit atomic balls remain disjoint under the mapping $\Ac$.

\begin{theorem}\label{thm:min2}
  Let $\Wc=\{w_i\}_{i=1}^N$ be $s$-even. If for every $z_0\in\Sigma_{\Wc,s}$ and $y=\Ac z_0$, we have $\hat z=z_0$ where $\hat z$ is a solution of \eqref{equ:K0}, then the number of measurments $m$ satisfies
$$m\geq \frac{1}{8\ln3}s\ln\frac{N}{2s}.$$
\end{theorem}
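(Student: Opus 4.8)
The plan is to combine the exact-recovery characterization of Theorem~\ref{thm:iff} with the combinatorial packing from Lemma~\ref{lem:N} in the style of the classical compressed-sensing lower bound (Foucart--Rauhut). By Theorem~\ref{thm:iff}, the hypothesis that \eqref{equ:K0} recovers every $z_0\in\Sigma_{\Wc,s}$ is equivalent to $\Ac$ having $\Wc$-$s$-NSP, so in particular $\Nc(\Ac)\cap\Sigma_{\Wc,2s}=\{0\}$; equivalently, for any two distinct index sets $I,J\subseteq[N]$ of size at most $s$ and any scalars, the vectors $\sum_{i\in I}c_iw_i$ and $\sum_{j\in J}c_j'w_j$ are mapped to the same point by $\Ac$ only if they are equal. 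The strategy is to produce a large family of $s$-sparse vectors whose images under $\Ac$ are forced to be pairwise distinct, and then argue that $\R^m$ is ``too small'' to host that many points unless $m$ is large.

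First I would apply Lemma~\ref{lem:N} with $t=2s$ (assuming $2s<N$, the relevant regime) to obtain a family $\Uc$ of subsets of $[N]$, each of size $2s$, with pairwise intersections smaller than $s$, and $|\Uc|\geq(N/(8s))^{s}$. For each $I\in\Uc$ fix a splitting $I=I_1\cup I_2$ into two halves of size $s$, and consider the vector $x_I:=\sum_{i\in I_1}w_i-\sum_{i\in I_2}w_i$, or more simply work with the one-sided vectors $u_I:=\sum_{i\in I}w_i$ restricted to $s$-subsets. The cleanest route: associate to each $I\in\Uc$ the $2s$-sparse vector $x_I=\sum_{i\in I}w_i$ and observe that for $I\neq J$ in $\Uc$, $x_I-x_J$ is supported on $I\triangle J$, a set of size $>2s$ in general — so this does not directly land in $\Sigma_{\Wc,2s}$. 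Instead I would use the standard trick of taking $\pm$ combinations: for $I\in\Uc$ of size $2s$, split as $I=I_1\sqcup I_2$ and set $v_I:=\sum_{i\in I_1}w_i$, a $\Wc$-$s$-sparse vector, with $s$-evenness giving $\|v_I\|_\Wc=s$. For $I\neq J$, the difference $v_I-v_J$ is a combination of $w_i$'s over $I_1\cup J_1$; since $|I\cap J|<s$ we have $|I_1\triangle J_1|\geq 1$ but the support could be as large as $2s$, hence $v_I-v_J\in\Sigma_{\Wc,2s}$. By $\Wc$-$s$-NSP (hence injectivity on $\Sigma_{\Wc,2s}$), $\Ac v_I\neq\Ac v_J$, and moreover $\|v_I-v_J\|_\Wc\geq$ some fixed positive quantity — here is where $s$-evenness is used: since $v_I-v_J$ can be written with coefficients $\pm1$ on at least $s$ of the indices (those in $I_1\setminus J_1$, which has size $\geq s-|I\cap J|/\ldots$; more carefully, by choosing the splittings adaptively one guarantees at least, say, $s/2$ coefficients equal to $\pm1$), $s$-evenness forces $\|v_I-v_J\|_\Wc\geq s/2$ or a similar explicit bound.

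Next comes the volumetric (covering-number) argument. All the vectors $v_I$ lie in $s\cdot B_\Wc$ (atomic ball of radius $s$), so their images $\Ac v_I$ lie in a bounded region of the $m$-dimensional space $\Nc(\Ac)^\perp\cong\R^m$; on the other hand, since $\Ac$ is bi-Lipschitz on $\Sigma_{\Wc,2s}$ (injectivity plus finite dimension gives a lower bound on $\|\Ac(v_I-v_J)\|$ in terms of $\|v_I-v_J\|$, but to keep the bound dimension-free one works instead with the $2^m$-type counting), the $|\Uc|$ points $\Ac v_I$ are pairwise separated. The clean finish avoids any singular-value constant: mimic Foucart--Rauhut by noting that $\Ac$ restricted to $\text{span}\{w_i:i\in\bigcup\Uc\}$ — or rather, the key inequality is purely about the number of sign patterns. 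Actually the sharpest and most standard argument is: the map $I\mapsto\Ac v_I$ is injective, all $\Ac v_I$ lie in an $m$-dimensional subspace, and a dimension/packing count (each $\Ac v_I$ determined up to the cell of a net of size $3^m$ in a suitably normalized ball) gives $|\Uc|\leq 3^m$, whence $m\geq\log_3|\Uc|\geq s\log_3(N/(8s))=\frac{1}{\ln 3}s\ln\frac{N}{8s}$. Adjusting the packing lemma application (using $t=2s$ versus the intersection threshold, and absorbing constants) yields the stated $m\geq\frac{1}{8\ln 3}s\ln\frac{N}{2s}$.

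\textbf{Main obstacle.} The delicate point is the middle step: converting ``$\Ac v_I$ pairwise distinct'' into ``$\Ac v_I$ pairwise \emph{well-separated} by an absolute amount,'' so that a $3^m$-type covering bound applies with clean constants. Mere injectivity on $\Sigma_{\Wc,2s}$ gives a separation constant that a priori depends on $\Ac$; the resolution is the usual one — one does \emph{not} need quantitative separation of the images at all, only that the map $I\mapsto \Ac v_I$ is injective into $\R^m$ \emph{together with} the observation that each $\Ac v_I$ is an integer-combination-image lying in a lattice-like or low-complexity set, OR, following Foucart--Rauhut precisely, one uses that injectivity of $\Ac$ on $2s$-sparse vectors implies the restricted map on each fixed $2s$-dimensional coordinate subspace is invertible, and then a volume comparison in that fixed subspace (not in $\R^m$) does the counting; the factor $3^m$ then emerges from covering the unit ball of an $m$-dimensional space. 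Getting the combinatorics of the splittings $I=I_1\sqcup I_2$ right so that $\|v_I-v_J\|_\Wc$ is bounded below using exactly the $s$-even hypothesis — and matching the resulting constant to $\frac{1}{8\ln 3}$ — is the part requiring care, but it is routine bookkeeping once the framework is set up.
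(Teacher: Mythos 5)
Your overall framework (the packing family from Lemma~\ref{lem:N} plus a counting/volume comparison in the $m$-dimensional image) is the right one, but the proposal has a genuine gap at exactly the point you flag as the ``main obstacle,'' and your proposed resolutions of it do not work. Mere injectivity of $\Ac$ on $\Sigma_{\Wc,2s}$ can never yield a bound of the form $|\Uc|\le 3^m$: an injective map into $\R^m$ can separate arbitrarily many points, and your fallback suggestions (that no quantitative separation is needed, or that one inverts $\Ac$ on fixed $2s$-dimensional coordinate subspaces) either beg the question or misremember the Foucart--Rauhut argument. What the classical proof (and the paper) actually uses is the full strength of the hypothesis that \eqref{equ:K0} \emph{recovers} every $\Wc$-$s$-sparse vector: if a difference $z_I-z_J$ lies in $\Sigma_{\Wc,s}$ and $\Ac(z_I-z_J)=\Ac(w)$ for some $w$ with $\|w\|_\Wc\le 1$, then minimality forces $\|z_I-z_J\|_\Wc\le\|w\|_\Wc\le 1$. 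This is a norm inequality, not an injectivity statement, and it is what makes the translated sets $\Ac\bigl(z_I+\tfrac12 B_\Wc\bigr)$ pairwise disjoint; a volume comparison inside $\tfrac32\Ac B_\Wc$ (whose dimension is at most $m$) then gives $(N/4t)^{t/2}\le 3^m$.

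Your parameter choice blocks this mechanism in two ways. First, with $t=2s$ and $v_I=\sum_{i\in I_1}w_i$ an $s$-sparse ``half,'' the differences $v_I-v_J$ are only $2s$-sparse, so the recovery hypothesis (which concerns $s$-sparse signals) cannot be applied to them; you are left with injectivity only, which, as above, is not enough. Second, $s$-evenness is assumed only for $\pm1$ combinations of at most $s$ atoms, so it does not give the lower bound $\|v_I-v_J\|_\Wc\ge s/2$ you want for a $\pm1$ combination supported on up to $2s$ indices. The paper avoids both problems by taking $t=\lfloor s/2\rfloor$ and $z_I=\tfrac1t\sum_{i\in I}w_i$: then $\|z_I\|_\Wc\le1$, each difference $z_I-z_J$ is supported on at most $2t\le s$ atoms (so it lies in $\Sigma_{\Wc,s}$, the recovery hypothesis applies to it, and $s$-evenness computes its norm exactly as $\tfrac1t(|I\setminus J|+|J\setminus I|)>1$), and the disjointness-plus-volume count then yields $m\ln 3\ge\tfrac{t}{2}\ln\tfrac{N}{4t}\ge\tfrac{s}{8}\ln\tfrac{N}{2s}$. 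As written, your proof cannot be completed without replacing the injectivity/separation step by this norm-minimality argument and adjusting the construction so the differences are genuinely $\Wc$-$s$-sparse.
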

\begin{proof}
Let $\Uc$ be a family of subsets of $[N]$ given by Lemma \ref{lem:N} with $t=\lfloor s/2\rfloor$. For each $I\in\Uc$, define $z_I:=\frac{1}{t}\sum_{i\in I}w_i$. We have $|I|=t$. By definition, $\|z_I\|_\Wc\leq1$.

For $I\neq J$,  $z_I-z_J=\frac{1}{t}(\sum_{i\in I-J}w_i-\sum_{j\in J-I}w_j)$. We have $|I-J|+|J-I|=|I|+|J|-2|I\cap J|$, so $s\geq2t\geq|I-J|+|J-I|>t$ by Lemma \ref{lem:N}. Since $\Wc$ is $s$-even, 
\begin{equation}\label{equ:zij}
\|z_i-z_j\|_\Wc=\frac{1}{t}(|I-J|+|J-I|)>1.
\end{equation}

We claim that $\{\Ac(z_I+\frac{1}{2} B_\Wc), I\in\Uc\}$ is a disjoint collection of subsets of $\Ac(\R^d)$.
 Otherwise there exist $I\neq J$ and $v, v'\in B_\Wc$ such that $\Ac(z_I+\frac{1}{2} v)=\Ac(z_J+\frac{1}{2} v')$. This means that $\Ac(z_I-z_J)=\Ac(\frac{1}{2} v'-\frac{1}{2} v)$. The vector $z_I-z_J\in\Sigma_{\Wc,s}$, so by our assumption,  
 \begin{equation}
 \|z_I-z_J\|_\Wc\leq\|\frac{1}{2} v'-\frac{1}{2} v\|_\Wc\leq\|\frac{1}{2} v'\|_\Wc+\|\frac{1}{2} v\|_\Wc=1,
 \end{equation}
a contradiction to \eqref{equ:zij}.

It is easy to see that $\Ac(z_I+\frac{1}{2} B_\Wc)\subset \frac{3}{2}\Ac B_\Wc$ for any $I\in\Uc$. So
\begin{align}\label{equ:vol2}
|\Uc|\vol(\frac{1}{2}\Ac B_\Wc)))\leq \vol(\frac{3}{2}\Ac B_\Wc).
\end{align}
Let $\Ac B_\Wc$ have  dimension (as a manifold) $r\leq m$, so \eqref{equ:vol2} becomes
\begin{align*}
\left(\frac{N}{4t}\right)^{t/2}(\frac{1}{2})^r\vol(\Ac B_\Wc)\leq(\frac{3}{2})^r\vol(\Ac B_\Wc)
\Longrightarrow  \left(\frac{N}{4t}\right)^{t/2}\leq3^r\leq3^m.
\end{align*}
Taking log of both sides yields
$$m\ln3\geq\frac{t}{2}\ln\frac{N}{4t}\geq\frac{s/2-1}{2}\ln\frac{N}{2s}\geq\frac{s}{8}\ln\frac{N}{2s},$$ if $s>2$.

\end{proof}

The next theorem is the minimum number of measurements required for the existence of a stable decoder, not necessarily the decoder \eqref{equ:K} or \eqref{equ:K0}. The proof is in the Appendix since it's quite similar to that of Theorem \ref{thm:min2}.
 \begin{theorem}\label{thm:min}
 Let $\Wc$ be $s$-even. Suppose that there exists a linear map $\Ac$ and a reconstruction map $\Delta$ stable in the sense that
 \begin{equation}\label{equ:stable}
 \|z-\Delta(\Ac z)\|_\Wc\leq C\sigma_{\Wc,s}(z),\quad\text{ for any }z\in\R^d,
 \end{equation}
 then there exists $C'>0$ depending only on $C$ such that the number of measurments
 $$m\geq C's\ln(eN/s).$$
 \end{theorem}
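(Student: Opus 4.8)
The plan is to adapt the volumetric/packing argument from the proof of Theorem~\ref{thm:min2} almost verbatim, replacing the use of the specific decoder~\eqref{equ:K0} with the abstract stability bound~\eqref{equ:stable}. First I would again invoke Lemma~\ref{lem:N}, this time with $t=\lfloor s/2\rfloor$ (or some comparable fraction of $s$), to produce a family $\Uc$ of subsets of $[N]$ of size $t$, pairwise intersecting in fewer than $t/2$ elements, with $|\Uc|\geq (N/4t)^{t/2}$. For each $I\in\Uc$ set $z_I:=\frac{1}{t}\sum_{i\in I}w_i$; since $\Wc$ is $s$-even these lie in $B_\Wc$, and for $I\neq J$ the difference $z_I-z_J$ is supported on $|I-J|+|J-I|\leq s$ atoms with all coefficients of modulus $1/t$, so $\|z_I-z_J\|_\Wc=\frac{1}{t}(|I-J|+|J-I|)>1$ exactly as in \eqref{equ:zij}.

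The key new step is to use the stability hypothesis to force the sets $\Ac(z_I+\delta B_\Wc)$ to be disjoint for a suitable radius $\delta$. Suppose $\Ac(z_I+\delta v)=\Ac(z_J+\delta v')$ for some $v,v'\in B_\Wc$ and $I\neq J$. Then $z_I+\delta v$ and $z_J+\delta v'$ have the same image under $\Ac$, hence the same decoded value $\Delta(\Ac(\cdot))$; call it $w$. Applying \eqref{equ:stable} to each of $z_I+\delta v$ and $z_J+\delta v'$ and noting $\sigma_{\Wc,s}(z_I+\delta v)\leq \|\delta v\|_\Wc\leq \delta$ (since $z_I\in\Sigma_{\Wc,s}$), the triangle inequality gives $\|z_I-z_J\|_\Wc\leq \|z_I+\delta v - w\|_\Wc+\|w-(z_J+\delta v')\|_\Wc+\|\delta v\|_\Wc+\|\delta v'\|_\Wc\leq 2C\delta+2\delta$. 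Choosing $\delta$ a small enough absolute constant (any $\delta<1/(2C+2)$ works) contradicts $\|z_I-z_J\|_\Wc>1$, so the collection $\{\Ac(z_I+\delta B_\Wc):I\in\Uc\}$ is disjoint.

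Then I would close the argument with the same volume comparison: each $\Ac(z_I+\delta B_\Wc)$ is contained in $\Ac((\,\tfrac1t+\delta\,)B_\Wc)\subseteq (1+\delta)\Ac B_\Wc$ (using $z_I\in B_\Wc$ so $z_I/(1+\delta)$-type scaling, or just $\|z_I\|_\Wc\le 1$), while each contains a translate of $\delta\,\Ac B_\Wc$; if $\Ac B_\Wc$ has Hausdorff dimension $r\leq m$ then disjointness plus containment yields $|\Uc|\,\delta^r\leq (1+\delta)^r$, i.e. $|\Uc|\leq (1+1/\delta)^r\leq (1+1/\delta)^m$. Taking logarithms, $(N/4t)^{t/2}\leq (1+1/\delta)^m$ gives $m\geq \frac{t/2}{\ln(1+1/\delta)}\ln\frac{N}{4t}$, and since $t=\lfloor s/2\rfloor$ and $1/\delta$ is an absolute constant depending only on $C$, this is of the form $m\geq C' s\ln(eN/s)$ for a constant $C'$ depending only on $C$ (one absorbs the $\ln\frac{N}{4t}$ versus $\ln(eN/s)$ discrepancy and the floor into $C'$, as in Theorem~\ref{thm:min2}). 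The only mildly delicate point is bookkeeping the constants so that the final bound has the claimed clean form $s\ln(eN/s)$ uniformly, including small-$s$ edge cases; this is routine and parallels the end of the proof of Theorem~\ref{thm:min2}.
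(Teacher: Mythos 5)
Your proposal is correct and follows essentially the same argument as the paper's own proof: the same packing family from Lemma \ref{lem:N} with $t=\lfloor s/2\rfloor$, disjointness of the sets $\Ac(z_I+\delta B_\Wc)$ forced by the stability bound \eqref{equ:stable} (the paper cancels the two equal decoder outputs, which is exactly your triangle inequality through $w$), and the same volume/Hausdorff-dimension comparison yielding $|\Uc|\leq(1+1/\delta)^m$. Your choice $\delta<1/(2C+2)$ is in fact the right constant---the paper's stated $\rho=1/(2C+1)$ appears to be a slip, since its own final bound $(2C+3)^m$ corresponds to $\rho=1/(2C+2)$.
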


In the case of compressed sensing, both theorems can be found in  \cite{FRbook}. For example, Theorem \ref{thm:min2} is reduced to \cite[Theorem 10.11]{FRbook}.


\section{Subgaussian Measurements}\label{sec:sub}
In this section we show that $\Wc$-sparse vectors can be recovered from few random measurements via \eqref{equ:K}. The arguments presented here are similar to those in \cite{CCL18}, but the results are still worth stating for the general setting.

\begin{definition}
	The \emph{Gaussian width} of a set $S\subset\R^d$ is defined as
	\[
	w(S) := \E \sup_{x \in S} \langle g , x \rangle,
	\]
	where $g \sim N(0, I_d)$ is a standard Gaussian random vector.
\end{definition}

\begin{definition} A random vector $\vphi\in\R^d$ is called a \emph{subgaussian vector} with parameters $(\alpha, \sigma)$ if it satisfies the following.
	\begin{enumerate}[{(1)}]
		\item It is centered, that is, $\E [\vphi]= 0$.
		\item There exists a positive $\alpha$ such that $\E\left[|\langle\vphi, z\rangle|\right]\geq\alpha$ for every $z\in\Sbd$.
		\item There exists a positive $\sigma$ such that $\Pr\left(|\langle\vphi,z\rangle|\geq t\right)\leq 2\exp(-\frac{t^2}{2\sigma^2})$ for every $z\in\Sbd$.
	\end{enumerate}
\end{definition}

The proof of the following lemma can be found in \cite[Section 6.5]{T14}.

\begin{lemma}\label{lem:Q}
	If $\vf \in \R^d$ is a subgaussian vector with parameters $(\alpha, \sigma)$, then 
	\[
	\Pr\left[|\langle \vx, \vf \rangle|\geq t\right] \geq \frac{(\alpha-t)^2}{4\sigma^2}
	\]	
	for any $0<t<\alpha$ and \(\vx \in \Sbd\).
\end{lemma}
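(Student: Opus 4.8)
\textbf{Proof proposal for Lemma \ref{lem:Q}.} The plan is to derive the stated lower bound on the small-ball probability $\Pr[|\langle x,f\rangle|\geq t]$ by combining the lower moment bound (property (2) of a subgaussian vector, $\E|\langle x,f\rangle|\geq\alpha$) with the subgaussian tail (property (3)), via a Paley--Zygmund-type truncation argument. Fix $x\in\Sbd$ and write $Y:=|\langle x,f\rangle|$, a nonnegative random variable with $\E Y\geq\alpha$ and $\Pr(Y\geq u)\leq 2\exp(-u^2/2\sigma^2)$.

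First I would split the expectation of $Y$ across the event $\{Y\geq t\}$: write
\[
\alpha\leq\E Y=\E[Y\mathbf{1}_{Y<t}]+\E[Y\mathbf{1}_{Y\geq t}]\leq t+\E[Y\mathbf{1}_{Y\geq t}].
\]
Wait --- this only gives $\E[Y\mathbf{1}_{Y\geq t}]\geq\alpha-t$, and then I need to bound $\E[Y\mathbf{1}_{Y\geq t}]$ from above by something involving $\Pr(Y\geq t)$. The natural tool here is Cauchy--Schwarz: $\E[Y\mathbf{1}_{Y\geq t}]\leq(\E Y^2)^{1/2}(\Pr(Y\geq t))^{1/2}$. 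Then I need a bound on the second moment $\E Y^2$; integrating the subgaussian tail gives $\E Y^2=\int_0^\infty 2u\,\Pr(Y\geq u)\,du\leq\int_0^\infty 4u\exp(-u^2/2\sigma^2)\,du=4\sigma^2$. Putting these together: $\alpha-t\leq 2\sigma(\Pr(Y\geq t))^{1/2}$, hence $\Pr(Y\geq t)\geq(\alpha-t)^2/4\sigma^2$, which is exactly the claim (valid precisely when $0<t<\alpha$ so that the left side is positive before squaring).

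The main obstacle --- really the only nontrivial point --- is getting the correct constant in the second-moment estimate $\E Y^2\leq 4\sigma^2$, so that the final constant comes out as $4\sigma^2$ and not something worse. The computation $\int_0^\infty 4u e^{-u^2/2\sigma^2}\,du = 4\sigma^2$ is the substitution $v=u^2/2\sigma^2$, giving $4\sigma^2\int_0^\infty e^{-v}\,dv=4\sigma^2$, which matches. One should double-check that the layer-cake formula $\E Y^2=\int_0^\infty 2u\Pr(Y\geq u)\,du$ is applied correctly and that the factor of $2$ in the tail bound propagates as intended. Everything else is routine, and since the paper cites \cite[Section 6.5]{T14} for the proof, I expect the argument there follows essentially this Paley--Zygmund/Cauchy--Schwarz route.
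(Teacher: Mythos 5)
Your proof is correct, and it is essentially the same argument as the one the paper relies on: the paper gives no in-text proof but defers to \cite[Section 6.5]{T14}, which uses exactly this truncation plus Cauchy--Schwarz (Paley--Zygmund-type) step together with the second-moment bound $\E|\langle x,f\rangle|^2\leq 4\sigma^2$ obtained by integrating the subgaussian tail. The constants work out as you computed, so nothing further is needed.
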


If $\{\vphi_i\}_{i=1}^q$ are independent copies of the random distribution $\vphi \in \R^d$, then we can define the \emph{mean empirical width} of a set $S \subset \R^d$ as
$$W_q(S; \vphi) := \E \sup_{x \in S} \bigg\langle x, \dfrac{1}{\sqrt{q}} \sum_{i = 1}^q \e_i \vphi_i \bigg\rangle,$$
	where $\{\e_i\}_{i = 1}^m$ are independent random variables taking values uniformly over $\{\pm 1\}$ and are independent from everything else. 

The mean empirical width $W_q(S;\vphi)$ is a distribution-dependent measure of the size of the set $S$. 
Note that $W_q(S; \vphi)$ reduces to the usual Gaussian width $w(S)$ when $\vphi$ follows a standard Gaussian distribution.
Estimation of \(W_m(S; \vphi)\) for any subgaussian vector \(\vphi\) is made in \cite{T14}, where \(S\) is required to be $\Sbd\cup G$ for some cone $G$. However, the bound can be relaxed to any subset \(S\) by the observation of the generic chaining bound and the majorizing measure theorem \cite[{Theorem 2.2.18 and Theorem 2.4.1}]{Ta14}. We will state this as a lemma.

\begin{lemma}\label{lem:W}
If $\vphi\in\R^d$ is subgaussian with  parameters $(\alpha,\sigma)$ and $S$ is any subset of $\R^d$, then
\begin{equation}
W_m(S;\vphi)\leq Cw(S)
\end{equation}
for some universal constant $C$.
\end{lemma}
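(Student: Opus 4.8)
## Proof Proposal for Lemma 1.21 (the mean empirical width bound)

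The plan is to reduce the statement to a comparison between two Gaussian-type processes indexed by $S$ and then invoke the generic chaining machinery. Write $X_x := \langle x, \frac{1}{\sqrt m}\sum_{i=1}^m \e_i \vphi_i\rangle$ for $x \in S$, so that $W_m(S;\vphi) = \E \sup_{x\in S} X_x$. The first step is to check that $\{X_x\}_{x\in S}$ is a subgaussian process with respect to the Euclidean metric, i.e. that there is a universal constant $c$ (absorbing $\sigma$, which I will treat as an absolute constant as in \cite{T14}) such that $\|X_x - X_{x'}\|_{\psi_2} \le c\, \|x - x'\|_2$. This follows from the subgaussianity hypothesis on $\vphi$: conditionally on the $\vphi_i$, the variable $X_x - X_{x'} = \frac{1}{\sqrt m}\sum_i \e_i \langle \vphi_i, x-x'\rangle$ is a Rademacher sum, hence subgaussian with variance proxy $\frac 1m \sum_i \langle \vphi_i, x-x'\rangle^2$; taking expectations over $\vphi$ and using property (3) of a subgaussian vector gives the bound with constant proportional to $\sigma\|x-x'\|_2$. (Equivalently, one cites directly that increments of $X$ have subgaussian tails controlled by $\sigma\|x-x'\|_2$.)

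Next I would apply the generic chaining upper bound \cite[Theorem 2.2.18]{Ta14}: for any subgaussian process, $\E \sup_{x\in S} X_x \le L\, \gamma_2(S, \|\cdot\|_2)$ for a universal constant $L$, where $\gamma_2$ is Talagrand's functional for the metric space $(S, \|\cdot\|_2)$. In parallel, the canonical Gaussian process $G_x := \langle x, g\rangle$ with $g\sim N(0,I_d)$ has increments with the metric $\|G_x - G_{x'}\|_{\psi_2} \asymp \|x-x'\|_2$, and the majorizing measure theorem \cite[Theorem 2.4.1]{Ta14} gives the matching lower bound $\gamma_2(S,\|\cdot\|_2) \le L' \, \E\sup_{x\in S} G_x = L'\, w(S)$. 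Chaining these two inequalities together yields $W_m(S;\vphi) \le L L'\, w(S)$, which is the claim with $C = LL'$. Note this argument is genuinely uniform in $m$: the variance proxy $\frac1m\sum_i \langle\vphi_i,x-x'\rangle^2$ has expectation $\le \sigma^2\|x-x'\|_2^2$ independent of $m$, so the constant does not degrade.

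The one point requiring care — and the main obstacle relative to simply quoting \cite{T14} — is precisely the removal of the structural hypothesis ``$S = \Sbd \cup G$ for a cone $G$'' present in Talagrand's stated form. In \cite{T14} that hypothesis is used to get a clean two-sided estimate; here I only need the \emph{upper} bound $W_m \le C w(S)$, and the upper bound via generic chaining ($\E\sup X_x \lesssim \gamma_2$) plus the majorizing-measure \emph{lower} bound for the Gaussian process ($\gamma_2 \lesssim w(S)$) holds for an arbitrary bounded subset $S$ with no geometric assumption. So the substance of the proof is the observation that the chaining inequality and the majorizing measure theorem, applied in the two directions to the subgaussian process and the Gaussian process respectively, already deliver the inequality for general $S$; the cone assumption was an artifact of wanting equivalence rather than one-sided domination. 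I would also remark that $S$ should be assumed bounded (otherwise both sides are infinite and the statement is vacuous), and that translating $S$ does not change $w(S)$ up to the linear term, which is harmless since we only need an upper bound — though in fact for the application $S$ will already be a bounded set and no such reduction is needed.
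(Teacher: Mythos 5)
Your argument follows exactly the route the paper indicates: the paper gives no written proof beyond citing the generic chaining bound and the majorizing measure theorem \cite[Theorems 2.2.18 and 2.4.1]{Ta14}, which are precisely the two ingredients you combine (subgaussian increments of the process $X_x=\langle x,\tfrac{1}{\sqrt m}\sum_i\e_i\vphi_i\rangle$, then $\E\sup_{x\in S}X_x\lesssim\gamma_2(S,\|\cdot\|_2)\lesssim w(S)$), together with the observation that only one-sided domination is needed so Tropp's cone hypothesis can be dropped. One cosmetic remark: the increment bound is obtained more cleanly by noting that each $\e_i\langle\vphi_i,x-x'\rangle$ is itself subgaussian with parameter $\sigma\|x-x'\|_2$ and summing independent terms, rather than conditioning on the $\vphi_i$ and averaging the random variance proxy, but this does not affect the substance of your proof.
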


The constant $C$ is a universal constant that does not rely on
the choice of subgaussian distribution. See \cite{Ta14} for precise computations of this constant, and \cite[Remark 2.7]{CCL18} for the computation of $C$ when $\vphi$ is multivariate normal.

The mean empirical width appears in the following important result. This theorem was originally stated in \cite{KM15} and coined as  \emph{Mendelson's Small Ball Method} by  Tropp \cite{T14}. This will be a primary tool in obtaining our main estimates.

\begin{theorem}[{\cite[Proposition 5.1]{T14}}, cf. {\cite[Theorem 2.1]{KM15}}]\label{thm:QW}
	Fix a set $S \subset \mathbb{R}^d$. Let $\vphi$ be a random vector in $\mathbb{R}^d$ and let $A \in \R^{m \times d}$ have rows $\{a_i^T\}_{i = 1}^m$ that are independent copies of $\vphi^T$. Define
	\[
	Q_\xi (S; \vphi) := \inf_{\vx \in S} \Pr\bigg(|\langle \vx, \vphi \rangle| \geq \xi \bigg).
	\]
	Then for any $\xi > 0$ and $t > 0$, we have
	\begin{align}\label{equ:msbineq}
	\inf_{\vx \in S} \| A \vx \|_2 \geq \xi \sqrt{m} Q_{2\xi}(S; \vphi) - 2 W_m(S;\vphi) - \xi t
	\end{align}
	with probability $\geq 1 - e^{-t^2/2}$.
\end{theorem}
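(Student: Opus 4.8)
The plan is to prove Theorem~\ref{thm:QW} by Mendelson's small ball method: first reduce the lower bound on $\inf_{x\in S}\|Ax\|_2$ to a lower bound on a counting functional, then replace the hard indicator by a Lipschitz surrogate, and finally control the surrogate with standard empirical process tools. For the deterministic reduction, fix $x\in\R^d$ and $\xi>0$. From $\|Ax\|_2^2=\sum_{i=1}^m\langle a_i,x\rangle^2\ge\xi^2\,\#\{i:|\langle a_i,x\rangle|\ge\xi\}$ together with $\sqrt k\ge k/\sqrt m$ for integers $0\le k\le m$, one gets the pointwise bound $\|Ax\|_2\ge\frac{\xi}{\sqrt m}\sum_{i=1}^m\mathbf 1\{|\langle a_i,x\rangle|\ge\xi\}$. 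To prepare for symmetrization I would replace $\mathbf 1\{|t|\ge\xi\}$ by a fixed $\frac1\xi$-Lipschitz function $\psi$ with $\psi(0)=0$ and $\mathbf 1\{|t|\ge2\xi\}\le\psi(t)\le\mathbf 1\{|t|\ge\xi\}$ (for instance the piecewise linear bump that is $0$ on $[-\xi,\xi]$ and $1$ off $[-2\xi,2\xi]$). This gives $\inf_{x\in S}\|Ax\|_2\ge\frac{\xi}{\sqrt m}\,Z$, where $Z:=\inf_{x\in S}\sum_{i=1}^m\psi(\langle a_i,x\rangle)$.

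Next I would estimate $Z$. Splitting each summand into its mean and a fluctuation, and using $\E\psi(\langle a_i,x\rangle)\ge\Pr(|\langle x,\vphi\rangle|\ge2\xi)\ge Q_{2\xi}(S;\vphi)$ for $x\in S$, one obtains $Z\ge m\,Q_{2\xi}(S;\vphi)-E$ with $E:=\sup_{x\in S}\sum_{i=1}^m\bigl(\E\psi(\langle a_i,x\rangle)-\psi(\langle a_i,x\rangle)\bigr)$, hence $\E Z\ge m\,Q_{2\xi}(S;\vphi)-\E E$. Since $\psi$ takes values in $[0,1]$, replacing a single row $a_i$ changes $Z$ by at most $1$, so the bounded differences (McDiarmid) inequality yields $Z\ge\E Z-\tfrac{\sqrt m\,t}{2}$ with probability at least $1-e^{-t^2/2}$.

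It remains to bound $\E E$, which is the technical heart. By Rademacher symmetrization, $\E E\le2\,\E\sup_{x\in S}\sum_{i=1}^m\e_i\psi(\langle a_i,x\rangle)$; then the contraction principle, applicable because each $\psi$ is $\frac1\xi$-Lipschitz and vanishes at $0$, bounds this by $\frac2\xi\,\E\sup_{x\in S}\bigl\langle x,\sum_{i=1}^m\e_i a_i\bigr\rangle=\frac{2\sqrt m}{\xi}\,W_m(S;\vphi)$, the last equality being exactly the definition of the mean empirical width with $q=m$ and $\vphi_i=a_i$. Thus $\frac{\xi}{\sqrt m}\,\E E\le 2W_m(S;\vphi)$. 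Combining everything, with probability at least $1-e^{-t^2/2}$ we get $\inf_{x\in S}\|Ax\|_2\ge\frac{\xi}{\sqrt m}\,Z\ge\frac{\xi}{\sqrt m}\bigl(\E Z-\tfrac{\sqrt m\,t}{2}\bigr)\ge\xi\sqrt m\,Q_{2\xi}(S;\vphi)-2W_m(S;\vphi)-\tfrac{\xi t}{2}$, which is slightly stronger than the claimed inequality (in particular the stated $-\xi t$ suffices).

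The main obstacle is really the opening maneuver: recognizing that $\|Ax\|_2$ should first be bounded below by a count of ``large'' coordinates, and that this count should then be replaced by a Lipschitz surrogate \emph{before} any randomness is brought in — this is what makes the problem tractable by symmetrization and contraction. After that step is set up, the concentration in the second paragraph is a routine application of bounded differences, and the expectation bound in the third paragraph is textbook symmetrization followed by the contraction principle; the only care required is bookkeeping the constants so that the output of the contraction step matches the normalization in the definition of $W_m(S;\vphi)$.
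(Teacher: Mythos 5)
Your proof is correct, and it reconstructs essentially the standard argument behind this result: the paper itself does not prove Theorem~\ref{thm:QW} but imports it from \cite[Proposition 5.1]{T14}, whose proof is exactly your chain of soft indicator, Markov/counting bound, bounded differences, symmetrization, and contraction. Your constant bookkeeping even yields the slightly sharper $-\xi t/2$ term, which indeed implies the stated $-\xi t$ bound.
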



Let $$S_\rho:=\{z:\|z_s\|_\Wc\geq\rho\|z-z_s\|_\Wc \}\cap\Sbd.$$
It is easy to show that $\inf_{\vx\in S_\rho}\|\Ac \vx\|_2>0$ implies $\Ac$ having the stable $\Wc$-$\rho$-NSP. We let $A$ be the matrix representation of the map $\Ac$ under an orthonormal basis.

\begin{theorem}
Assume \(\vphi \in \R^d \) is a subgaussian vector with parameters \((\alpha,\sigma) \). If $A\in \R^{m \times d}$ is a measurement matrix with rows that are independent copies of $\vphi^T$ and that the number of measurements satisfies 
	\[
	m \geq \frac{4^{8}\sigma^4}{\alpha^6}C^2w^2( S_\rho),
	\]
	then with probability at least  
	\[
	1 -\exp\left({- m\dfrac{\alpha^4}{64^2\sigma^4}}\right),
	\]
	we have  
	\begin{equation}\label{equ:infPhiD}
	\displaystyle\inf_{\vx\in S_\rho}\|\Ac \vx\|_2\geq Cw(S_\rho).
	\end{equation}

\end{theorem}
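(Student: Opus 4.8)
The plan is to invoke Theorem~\ref{thm:QW} (Mendelson's Small Ball Method) with the set $S = S_\rho$, which is a subset of $\Sbd$, and to show that under the stated lower bound on $m$, the right-hand side of \eqref{equ:msbineq} is at least $Cw(S_\rho)$ with the claimed probability. First I would fix the parameter $\xi$; the natural choice is $\xi = c\,\alpha$ for a suitable absolute constant, because Lemma~\ref{lem:Q} gives $Q_{2\xi}(S_\rho;\vphi) = \inf_{\vx\in S_\rho}\Pr(|\langle \vx,\vphi\rangle|\geq 2\xi) \geq \frac{(\alpha - 2\xi)^2}{4\sigma^2}$ once $2\xi < \alpha$, so taking $\xi = \alpha/4$ yields $Q_{2\xi}(S_\rho;\vphi) \geq \frac{(\alpha/2)^2}{4\sigma^2} = \frac{\alpha^2}{16\sigma^2}$. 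With this choice the dominant term $\xi\sqrt{m}\,Q_{2\xi}(S_\rho;\vphi)$ is at least $\frac{\alpha^3}{64\sigma^2}\sqrt{m}$.

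Next I would bound the empirical width term using Lemma~\ref{lem:W}: $2W_m(S_\rho;\vphi) \leq 2Cw(S_\rho)$. The inequality \eqref{equ:msbineq} then reads
\[
\inf_{\vx\in S_\rho}\|A\vx\|_2 \;\geq\; \frac{\alpha^3}{64\sigma^2}\sqrt{m} \;-\; 2Cw(S_\rho) \;-\; \frac{\alpha}{4}t,
\]
with probability at least $1 - e^{-t^2/2}$. To make the conclusion $\inf_{\vx\in S_\rho}\|A\vx\|_2 \geq Cw(S_\rho)$ come out, I would budget the first term to dominate all three: require $\frac{\alpha^3}{64\sigma^2}\sqrt{m} \geq 3Cw(S_\rho) + \frac{\alpha}{4}t$, which splits into $\frac{\alpha^3}{64\sigma^2}\sqrt{m}\cdot\frac{2}{3}\geq 2Cw(S_\rho)$ (this is exactly the hypothesis $m \geq \frac{4^8\sigma^4}{\alpha^6}C^2 w^2(S_\rho)$ after squaring and tracking constants, since $\big(\tfrac{3\cdot 64}{2}\big)^2 = 96^2$ is comparable to $4^8/\,$something — I would reconcile the exact numerical constant here) and $\frac{\alpha^3}{64\sigma^2}\sqrt{m}\cdot\frac{1}{3}\geq \frac{\alpha}{4}t$, i.e. $t \leq \frac{\alpha^2}{48\sigma^2}\sqrt{m}$. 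Choosing $t = \frac{\alpha^2}{48\sigma^2}\sqrt{m}$ (or a slightly smaller constant multiple to absorb the discrepancy between $48$ and the $64$ appearing in the stated probability exponent) gives the failure probability $e^{-t^2/2} = \exp\!\big(-\frac{\alpha^4}{c\sigma^4}m\big)$, and I would verify that the constant can be taken as $64^2$ as claimed, possibly by being slightly less greedy in the budgeting above.

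The main obstacle is purely bookkeeping of absolute constants: the three-way split of the leading term must be done so that (i) the width part exactly consumes the hypothesis on $m$ with constant $4^8$, (ii) the $\xi t$ part leaves $t$ large enough that $e^{-t^2/2}$ matches $\exp(-m\alpha^4/64^2\sigma^4)$, and (iii) the leftover still bounds $Cw(S_\rho)$ from above. These three demands are compatible because the hypothesis on $m$ has slack (the constant $4^8 = 65536$ is far larger than the minimal $96^2 \approx 9216$ needed for a $1/3$–$2/3$ split), and that slack is precisely what is spent on making the probability exponent clean. I would also remark that $S_\rho \subset \Sbd$ is nonempty and that $w(S_\rho) > 0$ whenever $S_\rho \neq \emptyset$ and $s \geq 1$, so the statement is non-vacuous; and that the final inequality $\inf_{\vx\in S_\rho}\|A\vx\|_2 \geq Cw(S_\rho) > 0$ immediately implies, by the remark preceding the theorem, that $A$ has the stable $\Wc$-$\rho$-NSP, which combined with Theorem~\ref{thm:main} (when $\Wc$ is $s$-splittable) gives uniform stable and robust recovery.
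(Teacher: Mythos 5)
Your proposal is essentially the paper's own proof: apply Theorem \ref{thm:QW} to $S=S_\rho$, bound $W_m(S_\rho;\vphi)\leq Cw(S_\rho)$ by Lemma \ref{lem:W} and $Q_{2\xi}(S_\rho;\vphi)\geq\frac{(\alpha-2\xi)^2}{4\sigma^2}$ by Lemma \ref{lem:Q} with $\xi=\alpha/4$, and then budget the leading term $a=\frac{\alpha^3\sqrt m}{64\sigma^2}$ against $b=2Cw(S_\rho)$ and $\frac{\alpha t}{4}$ exactly as the paper does (there via $a\geq 2b$ and $\frac{\alpha t}{4}=\frac{a-b}{2}$). One bookkeeping remark: your literal $\tfrac23$--$\tfrac13$ split only forces $a\geq 2Cw(S_\rho)+\frac{\alpha t}{4}$, i.e.\ $\inf_{\vx\in S_\rho}\|A\vx\|_2\geq 0$ rather than $\geq Cw(S_\rho)$; it should be repaired, e.g., by taking $\frac{\alpha t}{4}=\frac{a}{4}$, which the hypothesis ($a\geq 4Cw(S_\rho)$) permits and which yields the stated conclusion with exponent $t^2/2=\frac{m\alpha^4}{2\cdot 64^2\sigma^4}$ --- the same factor of $2$ that the paper's own final equivalence also silently drops.
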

\begin{proof}
	We first apply Lemma \ref{lem:W} and Theorem \ref{thm:QW},  to obtain the bound
	\begin{equation}\label{equ:QW}
	\inf_{\vx \in S_\rho} \| A  \vx \|_2 \geq \xi \sqrt{m} Q_{2\xi}(S_\rho; \, \vphi) - 2 Cw(S_\rho)- \xi t.
	\end{equation}
	By Lemma \ref{lem:Q}, provided we choose $\xi < \alpha/2$, we obtain 
	\begin{align*}
	Q_{2\xi}(S_\rho; \vphi) 
	= \inf_{\vx\in S_\rho} \Pr\left(|\langle \vx, \vphi \rangle| \geq 2\xi \right)
	\geq \frac{(\alpha-2\xi)^2}{4\sigma^2}.
	\end{align*}
	
	
	Placing the bound for $Q_{2\xi}(S_\rho; \vphi)$ into \eqref{equ:QW} and choosing $\xi=\alpha/4$ gives 
\begin{equation}
\inf_{\vx \in S_\rho} \| A  \vx \|_2\geq\frac{\alpha^3\sqrt{m}}{4^3\sigma^2}-2 Cw(S_\rho)-\frac{\alpha t}{4}:=a-b-\frac{\alpha t}{4}
\end{equation}

	Picking \(m \) and \(t\) to satisfy \(a \geq 2b\) and \(\alpha t /4 = (a - b)/2 \) gives
	\begin{equation*}
	\inf_{x \in S_\rho} \| A  \vx \|_2 \geq
	a - b - (a -b)/2 = (a - b)/2 \geq b/2 = Cw(S_\rho).
	\end{equation*}
	All that is left is to rewrite these  conditions in terms of \(m\) and \(t\). We have 
	\[
	a \geq 2b 
	\quad \Longleftrightarrow \quad
	\frac{\alpha^3}{4^3\sigma^2}\sqrt{m}\geq 4 Cw(S_\rho)
	\quad \Longleftrightarrow \quad 
	m \geq \frac{4^{8}\sigma^4C^2}{\alpha^6}w^2(S_\rho)
	\]
	and
	\[
	\frac{\alpha}{4}t=\frac{a-b}{2}\geq\frac{a}{4}=\frac{\alpha}{4^4}\left(\frac{\alpha}{\sigma}\right)^2\sqrt{m} \quad \Longleftrightarrow \quad  t\geq \frac{1}{64}\sqrt{m}\left(\frac{\alpha}{\sigma}\right)^2 \quad \Longleftrightarrow \quad -\dfrac{t^2}{2} \leq - m\dfrac{\alpha^4}{64^2\sigma^4},
	\]
	proving the result.
\end{proof}

As mentioned earlier, \eqref{equ:infPhiD} holds implies $\Ac$ has  the stable $\Wc$-$\rho$-NSP, therefore we have the following corollary.

\begin{corollary}\label{cor:sub}
Let $\Wc$ be $s$-splittable with constant $\beta_\Wc>0$ and fix $\rho<\beta_\Wc$. If $A$ is a measurement matrix with rows that are independent copies of a subgaussian vector with parameters $(\alpha,\sigma)$ and $m\geq \frac{4^{8}\sigma^4}{\alpha^6}C^2w^2( S_\rho)$, then $A$ has  stable $\Wc$-$\rho$-NSP with probability at least $1 -\exp\left({- m\dfrac{\alpha^4}{64^2\sigma^4}}\right)$, and therefore \eqref{equ:main} holds.
\end{corollary}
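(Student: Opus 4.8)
The plan is to assemble the corollary from pieces that are essentially all already in hand. The statement packages together the main probabilistic estimate (the unnamed theorem immediately preceding it, giving that $\inf_{\vx\in S_\rho}\|\Ac\vx\|_2 \geq Cw(S_\rho)$ with the stated probability once $m$ is large enough), the observation made in the paragraph before that theorem that a strictly positive infimum of $\|\Ac\vx\|_2$ over $S_\rho$ forces the stable $\Wc$-$\rho$-NSP, and finally Theorem \ref{thm:main}, whose hypotheses are exactly ``$\Wc$ is $s$-splittable'' and ``$\Ac$ has the stable $\Wc$-NSP of order $s$ with constant $\rho$''. So the proof is a chaining argument, not a computation.

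First I would invoke the preceding theorem verbatim: under the hypothesis $m \geq \frac{4^{8}\sigma^4}{\alpha^6}C^2 w^2(S_\rho)$ on a matrix $A$ whose rows are independent copies of $\vphi^T$, the event $\{\inf_{\vx\in S_\rho}\|\Ac\vx\|_2 \geq Cw(S_\rho)\}$ has probability at least $1 - \exp(-m\alpha^4/(64^2\sigma^4))$. Second, I would note that on this event we in particular have $\inf_{\vx\in S_\rho}\|\Ac\vx\|_2 > 0$ (here one should be slightly careful if $w(S_\rho)=0$, i.e. $S_\rho=\emptyset$ or a single point through the origin, but in that degenerate case the NSP holds trivially, so one may assume $w(S_\rho)>0$). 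I would then spell out the elementary implication claimed in the text: if $z \in \Nc(\Ac)$ with $\|z_s\|_\Wc \geq \rho\|z-z_s\|_\Wc$ and $z \neq 0$, then $z/\|z\|_2 \in S_\rho$ (using that scaling $z$ scales $z_s$ by the same factor, since $z_s$ is the best $s$-term approximant in $\|\cdot\|_\Wc$), hence $\|\Ac(z/\|z\|_2)\|_2 \geq Cw(S_\rho) > 0$, contradicting $z\in\Nc(\Ac)$. Therefore no nonzero null-space vector satisfies $\|z_s\|_\Wc \geq \rho\|z-z_s\|_\Wc$, which is precisely the stable $\Wc$-$s$-$\rho$-NSP. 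Third, with $\Wc$ $s$-splittable by hypothesis and the stable NSP now established on this high-probability event, Theorem \ref{thm:main} applies and yields \eqref{equ:main} for any $z_0$ and any noisy measurement with $\|e\|_2\leq\epsilon$.

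I do not expect a genuine obstacle here; the only thing requiring a little care is making the homogeneity/scaling step rigorous — that $S_\rho$ really is the right ``test set'' for the NSP, i.e. that $\inf_{\vx\in S_\rho}\|\Ac\vx\|_2>0 \iff \Ac$ has the stable $\Wc$-$\rho$-NSP. This is an ``if and only if'' whose forward direction (which is all that is needed) is the short argument above; the reverse direction is not needed for the corollary but is what justifies calling $S_\rho$ canonical. One should also double-check the edge case $w(S_\rho)=0$ and perhaps remark that the constant $C$ appearing here is the universal constant of Lemma \ref{lem:W}. Modulo these bookkeeping remarks, the corollary is immediate from the quoted theorem, the $S_\rho$-characterization, and Theorem \ref{thm:main}.
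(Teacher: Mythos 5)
Your proposal is correct and follows exactly the route the paper takes: the preceding theorem gives $\inf_{\vx\in S_\rho}\|\Ac \vx\|_2\geq Cw(S_\rho)>0$ with the stated probability, the paper's remark that positivity of this infimum implies the stable $\Wc$-$\rho$-NSP (which you justify by the same homogeneity argument, valid since $\Sigma_{\Wc,s}$ is a cone so $z_s$ scales with $z$), and then Theorem \ref{thm:main} yields \eqref{equ:main}. Your extra care about the degenerate case $w(S_\rho)=0$ is a reasonable bookkeeping remark that the paper itself glosses over, but it does not change the argument.
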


The computation of Gaussian width is in general difficult. $w(S_\rho)$ has been computed in the compressed sensing case, which has an upper bound of $O(s\ln(d/s)$~\cite[Proposition 9.33]{FRbook}. It is also bounded in the low matrix recovery case in \cite{KKRT16}. For the frame case, the minimum number of subgaussian measurements was computed through bounding a different Gaussian width~\cite{CCL18}, instead of $w(S_\rho)$. 

\begin{remark}
We may define a set according to the strong null space property as $S'_c=\{\|z-v\|_\Wc-\|v\|_\Wc\leq c\|z\|_2, \text{ for some }v\in\Sigma_{\Wc,s}\}\cap\Sbd$. This way, we will have a similar result to Corollary \ref{cor:sub} without requiring $\Wc$ being splittable. However, computing $w(S'_c)$ could pose bigger challenge.
\end{remark}

%

%
%

\section{Recovering Vectors Sparse in a Frame}\label{sec:frame}

This section will focus on the special case  in Example \ref{exa:frame}.
Recall $\Fc=\{f_i\}_{i=1}^N\subset\R^d$, and $F=[f_1,f_2,\cdots,f_N]$ is the $d\times N$ matrix whose columns are the atoms. 
The linear operator $\Ac$ will be represented by the matrix $A$. The minimization problem \eqref{equ:K} becomes
\begin{equation}\label{equ:F}\tag{$\text{P}_{\Fc,\epsilon}$}
\hat{z}=\arg\min_{z\in \R^d} \|z\|_{\Fc} \quad\text{ subject to } \quad \|A z-y\|_2\leq \epsilon.\\
\end{equation}

We will prove first that \eqref{equ:F} is the same as the  $\ell_1$-synthesis method~\cite{RSV08, CWW14, CCL18}, which is defined as
\begin{equation}\label{equ:l1s}
\left\{\begin{array}{l}
\hat{x}=\arg\min_{x\in \R^n} \|x\|_1 \quad\text{ subject to } \quad \|AFx-y\|_2\leq \epsilon,\\
\hat z = F\hat x
\end{array}\right.
\end{equation}

\begin{lemma} 
The problem \eqref{equ:F} is equivalent to \eqref{equ:l1s}.
\end{lemma}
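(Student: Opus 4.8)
The plan is to show the two optimization problems have the same set of optimal values and that their minimizers correspond under the map $x \mapsto Fx$. The key identity to exploit is the definition $\|z\|_\Fc = \min\{\|x\|_1 : Fx = z\}$ from \eqref{equ:norm} together with the observation that the constraint of \eqref{equ:F} only depends on $z$ through $Az$, while the constraint of \eqref{equ:l1s} depends on $x$ through $AFx = Az$.

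First I would take any feasible $z$ for \eqref{equ:F}, i.e.\ $\|Az - y\|_2 \le \epsilon$, and pick $x$ achieving the infimum in $\|z\|_\Fc = \min\{\|x\|_1 : Fx = z\}$ (the minimum is attained since $\{x : Fx = z\}$ is a nonempty affine subspace and $\|\cdot\|_1$ is coercive on it — or one can simply note $F$ has a right inverse so the set is nonempty, and a norm attains its minimum on a closed set). Then $AFx = Az$, so $x$ is feasible for \eqref{equ:l1s} with $\|x\|_1 = \|z\|_\Fc$; hence the optimal value of \eqref{equ:l1s} is $\le$ that of \eqref{equ:F}. Conversely, given any $x$ feasible for \eqref{equ:l1s}, set $z = Fx$; then $\|Az - y\|_2 = \|AFx - y\|_2 \le \epsilon$ so $z$ is feasible for \eqref{equ:F}, and $\|z\|_\Fc = \|Fx\|_\Fc \le \|x\|_1$ directly from the infimum definition of the atomic norm. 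This gives the reverse inequality, so the optimal values coincide.

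For the correspondence of minimizers: if $\hat x$ solves \eqref{equ:l1s} then $\hat z := F\hat x$ is feasible for \eqref{equ:F} with $\|\hat z\|_\Fc \le \|\hat x\|_1 = \mathrm{val}\eqref{equ:l1s} = \mathrm{val}\eqref{equ:F}$, so $\hat z$ is optimal for \eqref{equ:F}; this is exactly the output $\hat z = F\hat x$ prescribed in \eqref{equ:l1s}. Conversely, if $\hat z$ solves \eqref{equ:F}, choosing $\hat x$ to attain $\|\hat z\|_\Fc = \min\{\|x\|_1 : Fx = \hat z\}$ produces an optimizer of \eqref{equ:l1s} whose synthesis is $F\hat x = \hat z$. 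Thus the two procedures return the same recovered signal.

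The only subtlety — and the one place to be slightly careful — is attainment of the infimum defining $\|\cdot\|_\Fc$, so that "pick $x$ achieving it" is legitimate; this follows because $\Wc = \Fc$ is finite (so $\mathrm{conv}(\Fc_{\mathrm{sym}})$ is a polytope and the atomic norm is a genuine norm with the inf attained), or alternatively one can run the argument with an $x$ achieving $\|x\|_1 \le \|\hat z\|_\Fc + \delta$ and let $\delta \to 0$, which also suffices for the value comparison. I do not expect any real obstacle here; the lemma is essentially an unwinding of definitions, and \eqref{equ:z-v} in Example \ref{exa:frame} already records the relevant manipulation.
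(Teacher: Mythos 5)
Your argument is correct and matches the paper's proof in essentials: both directions compare feasible points via the correspondence $z = Fx$ and use the definition $\|z\|_\Fc = \min\{\|x\|_1 : Fx = z\}$, exactly as the paper does. Your extra remark on attainment of this minimum (which the paper takes for granted when writing ``let $z=Fx$ and $\|z\|_\Fc=\|x\|_1$'') is a welcome but minor refinement, not a different route.
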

\begin{proof}
Let $\bar z=F\bar x$ be a solution of \eqref{equ:F} and $\|\bar z\|_\Fc=\|\bar x\|_1$. For an arbitrary feasible vector $x$ of \eqref{equ:l1s}, $Fx$ is feasible in \eqref{equ:F}, so we have $\|\bar x\|_1=\|\bar z\|_\Fc\leq\|Fx\|_\Fc\leq\|x\|_1$. This makes $\bar z$ a solution of \eqref{equ:l1s}.

On the other hand, let $\hat z=F\hat x$ be a solution of \eqref{equ:l1s}. For an arbitrary feasible vector $z$ of \eqref{equ:F}, let $z=Fx$ and $\|z\|_\Fc=\|x\|_1$. We have that $x$ is feasible in \eqref{equ:l1s}, so $\|\hat z\|_\Fc\leq\|\hat x\|_1\leq\|x\|_1=\|z\|_\Fc$, making $\hat z$ a solution of \eqref{equ:F}.
\end{proof}

It is worthwhile to restate Definition \ref{def:wnsp} for the frame case: 
\begin{equation}\label{equ:FNSP}
\text{For any }z\in\Nc(A)\backslash\{0\} \text{ and any }v\in\Sigma_{F,s}, \text{ we have }\|v\|_\Fc<\|z-v\|_\Fc.
\end{equation}  

A direct consequence of Theorem \ref{thm:iff} is the following:
\begin{corollary}\label{cor:FNSP}
 The matrix $A$ satisfying \eqref{equ:FNSP} is the necessary and sufficient condition for the successful recovery of all signals in $\Sigma_{F,s}$ via \eqref{equ:F} or \eqref{equ:l1s} ($\epsilon=0$).
\end{corollary}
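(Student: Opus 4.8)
The statement to prove is Corollary \ref{cor:FNSP}, which asserts that the matrix condition \eqref{equ:FNSP} is equivalent to exact recovery of every signal in $\Sigma_{F,s}$ via the noiseless version of \eqref{equ:F}, and equivalently via \eqref{equ:l1s} with $\epsilon=0$. The plan is to simply invoke the two results already established in this section. First I would observe that \eqref{equ:F} with $\epsilon=0$ is precisely the instance of \eqref{equ:K0} obtained by taking $\Wc=\Fc$ and $\Ac$ to be the map represented by the matrix $A$; the null space condition \eqref{equ:FNSP} is then verbatim the restatement of $\Fc$-$s$-NSP from Definition \ref{def:wnsp}, as was already noted in the displayed line \eqref{equ:FNSP}. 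Therefore Theorem \ref{thm:iff}, applied with $\Wc=\Fc$, gives immediately that $A$ satisfies \eqref{equ:FNSP} if and only if \eqref{equ:F} with $\epsilon=0$ recovers every element of $\Sigma_{F,s}=\Sigma_{\Fc,s}$.

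The second step is to transfer this conclusion from \eqref{equ:F} to \eqref{equ:l1s}. Here I would appeal to the equivalence Lemma proved just above (the one asserting that \eqref{equ:F} is equivalent to \eqref{equ:l1s}): its proof shows that $\bar z$ solves \eqref{equ:F} exactly when $\bar z = F\bar x$ for some $\bar x$ solving \eqref{equ:l1s}, and the two problems share the same feasible sets of $z$-values (namely $\{Fx : \|AFx - y\|_2 \le \epsilon\}$ on the synthesis side corresponds to $\{z : \|Az-y\|_2\le\epsilon\}$). In particular, for a target $z_0 = Fx_0 \in \Sigma_{F,s}$ with measurements $y = Az_0$, ``\eqref{equ:F} recovers $z_0$'' (meaning the unique solution is $z_0$) holds if and only if ``\eqref{equ:l1s} produces $\hat z = F\hat x = z_0$''. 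Since recovery is understood at the level of the signal $z_0 = F\hat x$ rather than the coefficient vector, the equivalence is exact and requires no further hypotheses. Chaining the two equivalences yields the corollary.

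The only mild subtlety — and the closest thing to an obstacle — is the bookkeeping around uniqueness: Theorem \ref{thm:iff} phrases success as ``$\hat z$ is the unique minimizer equal to $z_0$'', and one must make sure the equivalence Lemma preserves uniqueness of the recovered \emph{signal} even though the recovered \emph{coefficient vector} need not be unique (different $\hat x$ can give the same $F\hat x$). This is handled by noting, as in the Lemma's proof, that every solution $\hat z$ of \eqref{equ:l1s} (viewed as a signal $F\hat x$) is a solution of \eqref{equ:F} and conversely; hence the solution \emph{set} of signals coincides for the two problems, so one is a singleton $\{z_0\}$ exactly when the other is. No real computation is needed — the corollary is a direct composition of Theorem \ref{thm:iff}, the restatement of Definition \ref{def:wnsp} in the form \eqref{equ:FNSP}, and the preceding equivalence Lemma.
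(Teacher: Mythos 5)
Your proposal is correct and is exactly the paper's route: the paper states Corollary \ref{cor:FNSP} as a direct consequence of Theorem \ref{thm:iff} applied with $\Wc=\Fc$ (noting \eqref{equ:FNSP} is just $\Fc$-$s$-NSP), combined with the preceding lemma identifying \eqref{equ:F} and \eqref{equ:l1s}. Your extra remark on uniqueness at the level of the signal $F\hat x$ rather than the coefficient vector is a sound clarification of a point the paper leaves implicit.
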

Corollary \ref{cor:FNSP} is the same as  \cite[Theorem 4.2]{CWW14} since the dictionary NSP property proposed in \cite[Definition 4.1]{CWW14} also characterizes the $\ell_1$-synthesis method, and therefore is equivalent to \eqref{equ:FNSP}. 
A direct proof of the equivalence of \eqref{equ:FNSP} and \cite[Definition 4.1]{CWW14}, using only their definitions, will be omitted here as it is very similar to the proof of Lemma \ref{lem:fsnsp}.
We will also avoid stating \cite[Definition 4.1]{CWW14} here as it is the same as \eqref{equ:fsnsp2} when setting $c=0$. As seen, \eqref{equ:FNSP} is a more concise version than \cite[Definition 4.1]{CWW14}, which is due to the tail definition \eqref{equ:tWW}. This has also created simpler proof for  \cite[Theorem 4.2]{CWW14}.




We  combine  Theorem \ref{thm:main} and Theorem \ref{thm:RNSP} for the frame case as one corollary.

\begin{corollary}\label{cor:frame}
Suppose $\Fc$ is $s$-splittable with constant $\beta$.

(a) If $A$ has the stable $\Fc$ null space property of order $s$ with $\rho <\beta$ as 
\begin{equation} \label{equ:fstnsp}
\|z_s\|_\Fc\leq\rho\|z-z_s\|_\Fc, \qquad  \text{ for any }z\in\Nc(A),
\end{equation}
then for any $z_0\in\R^d$ and $y=A z_0+e$ where $\|e\|_2\leq\epsilon$, we have
$$\|\hat z-z_0\|_\Fc\leq\frac{(1+\beta)(1+\rho)}{\beta(\beta-\rho)}\sigma_{\Fc,s}(z_0)+\frac{2\sqrt{N}(1+\beta)}{(\beta-\rho)\nu_A}\epsilon,$$
where $\hat z$ is from \eqref{equ:l1s}.

(b) If $A$ has the robust $\Fc$ null space property of order $s$ with constant $\rho<\beta, \tau>0$ as
\begin{equation}  \label{equ:frnsp}
\|z_s\|_\Fc\leq\rho\|z-z_s\|_\Fc+\tau \|A z\|_2, \qquad \text{for any }z\in\R^d, 
\end{equation}
then for any $z_0\in\R^d$ and $y=A z_0+e$ where $\|e\|_2\leq\epsilon$, we have
\begin{equation*}
\|\hat z-z_0\|_\Fc\leq\frac{(1+\beta)(1+\rho)}{\beta(\beta-\rho)}\sigma_{\Fc,s}(z_0)+\frac{2\tau(1+\beta)}{\beta-\rho}\epsilon,
\end{equation*}
where $\hat z$ is from \eqref{equ:l1s}. 
\end{corollary}

Further investigation is needed on when the frame $\Fc$ is splittable. As a simple example, the frame $\{(\cos\frac{2\pi (n-1)}{8},\sin\frac{2\pi (n-1)}{8})\}_{n=1}^8$ of $\R^2$ is 1-splittable with the constant $\sqrt{2}-1$. See Lemma \ref{lem:fsplit} in the Appendix.

Theorem \ref{thm:snsp} reduces to the following corollary:
\begin{corollary}\label{cor:frame2}
If $A$ has the strong $\Fc$ null space property of order $s$ as
\begin{equation}\label{equ:fsnsp}
\|z-v\|_\Fc-\|v\|_\Fc\geq c\|z\|_2, \qquad  \text{ for any }z\in\Nc(A), \text{ and any }v\in\Sigma_{\Fc,s},
\end{equation}
then for any $z_0\in\R^d$ and $y=A z_0+e$ where $\|e\|_2\leq\epsilon$, we have
\begin{equation}\label{equ:Fsnsp}
\|\hat z-z_0\|_2\leq\frac{2}{\nu_A}\left(\frac{\sqrt{N}}{c}+1\right)\epsilon+\frac{2}{c}\sigma_{\Fc,s}(z_0),
\end{equation}
where $\hat z$ is from \eqref{equ:l1s}. 
\end{corollary}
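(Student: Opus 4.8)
The plan is to derive Corollary~\ref{cor:frame2} as a direct specialization of Theorem~\ref{thm:snsp} to the atomic set $\Fc$ of Example~\ref{exa:frame}. The main observation is that the general recovery bound in Theorem~\ref{thm:snsp} reads
$$\|\hat z-z_0\|_2\leq\frac{2}{\nu_\Ac}\left(\frac{C_\Wc}{c}+1\right)\epsilon+\frac{2}{c}\sigma_{\Wc,s}(z_0),$$
so all that is needed is to (i) verify that the hypothesis ``$\Ac$ has the strong $\Wc$-NSP of order $s$'' becomes exactly condition \eqref{equ:fsnsp} when $\Wc=\Fc$, and (ii) compute the constant $C_\Fc$, i.e., the smallest constant with $\|v\|_\Fc\leq C_\Fc\|v\|_2$ for all $v\in\R^d$.

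First I would note that $\Ac$ is represented by the matrix $A$ in the frame setting, so $\nu_\Ac=\nu_A$ and $\Nc(\Ac)=\Nc(A)$, and that $\Sigma_{\Wc,s}=\Sigma_{\Fc,s}$; hence Definition~\ref{def:snsp} literally becomes \eqref{equ:fsnsp}. Next I would bound $C_\Fc$. For any $v\in\R^d$ write $v=Fc$ where $c$ is chosen so that $\|c\|_1=\|v\|_\Fc$ (using \eqref{equ:norm}); alternatively, and more cleanly, take any representation $v=\sum_{i=1}^N c_i f_i$ with $\|v\|_\Fc=\sum|c_i|$ realized, and instead use duality: $\|v\|_\Fc=\sup\{\langle v,u\rangle: \max_i|\langle f_i,u\rangle|\leq 1\}$. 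The simplest route, however, is to observe that for a unit vector $e_i$ of the coordinate basis of $\R^N$, $\|f_i\|_\Fc\le 1$ trivially (it is an atom) but we want the reverse-type bound. In fact the clean estimate is: if $v=Fc$ with $\|c\|_1=\|v\|_\Fc$, then $\|v\|_2=\|Fc\|_2\geq \sigma_{\min}\,\|c\|_2$ is the wrong direction, so instead I use $\|v\|_\Fc=\|c\|_1\le\sqrt{N}\|c\|_2$ together with $\|c\|_2\le \|c\|_2$... This needs care: the right bound comes from noting $\|v\|_\Fc = \min\{\|c\|_1 : Fc=v\}\le \|F^\dagger v\|_1\le\sqrt N\|F^\dagger v\|_2\le \sqrt N\,\|F^\dagger\|_{2\to2}\,\|v\|_2$, and since $F$ is a frame $\|F^\dagger\|_{2\to 2}=1/\sqrt{A_{\mathrm{frame}}}$ where $A_{\mathrm{frame}}$ is the lower frame bound; if the frame is assumed to be unit-norm tight (or Parseval), this gives $C_\Fc\le\sqrt N$. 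I expect the paper intends the Parseval/normalized-tight normalization (consistent with how $\sqrt N$ appears in Corollary~\ref{cor:frame}), so I would cite that or state the bound $C_\Fc\le\sqrt N$ as the relevant specialization, perhaps referencing a lemma in the Appendix.

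With these two ingredients in hand, the proof is a one-line substitution: apply Theorem~\ref{thm:snsp} with $\Wc=\Fc$, replace $\nu_\Ac$ by $\nu_A$, replace $C_\Wc$ by $C_\Fc\le\sqrt N$, and replace $\sigma_{\Wc,s}(z_0)$ by $\sigma_{\Fc,s}(z_0)$, yielding exactly \eqref{equ:Fsnsp}. Monotonicity of the right-hand side in $C_\Wc$ (the coefficient $\frac{2}{\nu_A}(\frac{C_\Wc}{c}+1)$ is increasing in $C_\Wc$) lets us pass from $C_\Fc$ to the explicit upper bound $\sqrt N$ without loss.

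The only genuine obstacle is step (ii): pinning down $C_\Fc=\sqrt N$ requires a normalization hypothesis on the frame that is not spelled out in this excerpt (the atoms are only assumed to form a frame, not a tight or unit-norm one). I would resolve this by either (a) assuming, as is standard in the $\ell_1$-synthesis literature and apparently throughout Section~\ref{sec:frame}, that $\Fc$ is a unit-norm tight (Parseval) frame, so that $v=FF^Tv/1$... more precisely $\|v\|_\Fc\le\|F^Tv\|_1\le\sqrt N\|F^Tv\|_2=\sqrt N\|v\|_2$ using the Parseval identity $\|F^Tv\|_2=\|v\|_2$; or (b) stating the corollary with the general constant $C_\Fc$ and remarking that $C_\Fc\le\sqrt N$ under the tight-frame normalization. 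Everything else is immediate from Theorem~\ref{thm:snsp}.
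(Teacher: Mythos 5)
Your proposal is correct and follows exactly the paper's route: the paper obtains Corollary \ref{cor:frame2} by simply specializing Theorem \ref{thm:snsp} to $\Wc=\Fc$ (with $\nu_\Ac=\nu_A$, $\sigma_{\Wc,s}=\sigma_{\Fc,s}$) and substituting $C_\Fc$ by $\sqrt N$, with no further argument. Your caveat in step (ii) is legitimate rather than a defect of your proof: the paper nowhere justifies $C_\Fc\le\sqrt N$, and for a general frame one only gets $C_\Fc\le\sqrt N\,\|F^\dagger\|_{2\to2}=\sqrt N/\nu_F$, so the stated constant implicitly assumes a normalization (e.g.\ Parseval/tight, or lower frame bound at least $1$), and your explicit handling of this point is more careful than the paper's.
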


A strong null space property for recovery via the $\ell_1$ synthesis method was also proposed in \cite{CWW14}. We recall its definition here: A sensing matrix $A$
is said to have \emph{$F$ strong null space property of order $s$} if there exists a positive constant $c$ such that
\begin{equation}\label{equ:fsnsp2}
 \text{for  }\forall v\in\Nc(AF), |T|\leq s, \text{ }\exists u\in\Nc(F) \text{ satisfying } \|v_{T^c}\|-\|v_T+u\|_1\geq c\|Fv\|_2.
\end{equation}

It is not surprising that these two conditions \eqref{equ:fsnsp} and \eqref{equ:fsnsp2} are equivalent. See Lemma~\ref{lem:fsnsp} in the Appendix.

\begin{remark}\label{rem:snsp}
We compare Corollary \ref{cor:frame2} to \cite[Theorem 5.2]{CWW14} as they have the same assumption. In \cite[Theorem 5.2]{CWW14}, the conclusion is that $\|\hat z-z_0\|_2\leq\frac{2}{\nu_A}\left(\frac{\sqrt{N}}{c\nu_F}+1\right)\epsilon+\frac{2}{c}\min_{Fx=z_0,|T|\leq s}\|x_{T^c}\|_1$. First of all, the estimate in \eqref{equ:Fsnsp} avoids $\nu_F$ which is an improvement. Moreover,
the tail $\min_{Fx=z_0,|T|\leq s}\|x_{T^c}\|_1=\min_{Fx=z_0,|T|\leq s}\|x-x_{T}\|_1\geq\sigma_{F,s}(z_0)$ by \eqref{equ:tW}, so Corollary~\ref{cor:frame2} is stronger in the sense that it implies \cite[Theorem 5.2]{CWW14}, but not necessarily the other way around. Moreover, the proof of \cite[Theorem 5.2]{CWW14} is quite involved given the complicated definition \eqref{equ:fsnsp2}, whereas our analysis (proof of Theorem \ref{thm:snsp}) is simple and intuitive.
\end{remark}
Theorem \ref{thm:min2} reduces to the following corollary.

\begin{corollary}\label{cor:fmin}
  Let $\Fc$ be $s$-even. If for every $z_0\in\Sigma_{\Wc,s}$ and $y=\Ac z_0$, we have $\hat z=z_0$ where $\hat z$ is a solution of \eqref{equ:F} (with $\epsilon=0$), then the number of measurements $m$ satisfies
$$m\geq \frac{1}{8\ln3}s\ln\frac{N}{2s}.$$
\end{corollary}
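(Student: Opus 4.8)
The plan is to deduce Corollary~\ref{cor:fmin} directly from Theorem~\ref{thm:min2} by checking that the frame setting of Example~\ref{exa:frame} is literally an instance of the abstract setup, so that no new argument is needed. First I would observe that the atomic set here is $\Fc=\{f_i\}_{i=1}^N$, a \emph{finite} subset of $\R^d$, which is exactly the finiteness hypothesis required throughout Section~\ref{sec:min}. Second, the hypothesis of the corollary is that $\Fc$ is $s$-even, which is precisely the hypothesis of Theorem~\ref{thm:min2} with $\Wc=\Fc$. Third, I would note that the minimization problem \eqref{equ:F} with $\epsilon=0$ is identical to $(\text{P}_\Wc)$ with $\Wc=\Fc$ and $\Ac$ represented by the matrix $A$; the only cosmetic difference is that in the frame section the linear map is written as a matrix $A$ rather than as an operator $\Ac$, but $\|Az\|_2$ and $\|\Ac z\|_2$ agree and $\Nc(A)=\Nc(\Ac)$.

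With these identifications in place, the recovery hypothesis ``for every $z_0\in\Sigma_{\Fc,s}$ and $y=Az_0$ we have $\hat z=z_0$, where $\hat z$ solves \eqref{equ:F} with $\epsilon=0$'' is word-for-word the hypothesis of Theorem~\ref{thm:min2}. Hence I would simply invoke Theorem~\ref{thm:min2} to conclude that the number of measurements satisfies
\[
m\geq\frac{1}{8\ln3}\,s\ln\frac{N}{2s},
\]
which is exactly the claimed bound. I would remark that the statement carries the same implicit caveat $s>2$ that appears in the proof of Theorem~\ref{thm:min2} (the chain of inequalities $\frac{t}{2}\ln\frac{N}{4t}\geq\frac{s/2-1}{2}\ln\frac{N}{2s}\geq\frac{s}{8}\ln\frac{N}{2s}$ uses $s>2$), so strictly one should assume $s\geq 3$.

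There is essentially no obstacle here: the only thing to be careful about is the harmless notational mismatch between the typo ``$z_0\in\Sigma_{\Wc,s}$'' in the statement of the corollary and the intended ``$z_0\in\Sigma_{\Fc,s}$'' — these coincide once one has set $\Wc=\Fc$, so the corollary is a faithful specialization. I would therefore present the proof in one or two sentences: verify $\Fc$ is a finite atomic set and $s$-even, note that \eqref{equ:F} at $\epsilon=0$ is $(\text{P}_{\Fc})$, and apply Theorem~\ref{thm:min2} verbatim. If one wanted to be slightly more self-contained, one could also phrase it via Corollary~\ref{cor:FNSP}: successful recovery of all of $\Sigma_{\Fc,s}$ via \eqref{equ:F} at $\epsilon=0$ is equivalent to $\Fc$-$s$-NSP of $A$, which is the hypothesis feeding Theorem~\ref{thm:min2}; but this detour is unnecessary since Theorem~\ref{thm:min2} is already stated in terms of the recovery property itself.
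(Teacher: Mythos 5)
Your proposal is correct and matches the paper's treatment exactly: the paper offers no separate argument, simply stating that Theorem \ref{thm:min2} reduces to this corollary, which is precisely your specialization $\Wc=\Fc$ with \eqref{equ:F} at $\epsilon=0$ identified with \eqref{equ:K0}. Your added remarks (the implicit $s>2$ requirement inherited from the proof of Theorem \ref{thm:min2} and the $\Sigma_{\Wc,s}$ versus $\Sigma_{\Fc,s}$ notational slip) are accurate and harmless.
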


We have mentioned earlier that $\Fc$ being $s$-even is not a harsh condition. For example, the matrix $F$ has $s$-NSP will guarantee that. Corollary \ref{cor:fmin} thus provides the minimum number of measurements needed using the $\ell_1$ synthesis method, which is first of its kind. It was proven in \cite{CCL18} that $O(s\ln(N/s))$ many subgaussian measurements do allow stable and robust recovery through \eqref{equ:F} if $F$ has $s$-NSP, so the bound in Corollary \ref{cor:fmin} is minimal and achievable.

When the atoms are the canonical orthonormal basis of $\R^d$, as shown in Example \ref{exa:cs},
we have $\|v\|_\Ec=\|v\|_1$ and $C_\Ec=\sqrt{d}$. 
In this case, Theorem \ref{thm:iff} reduces to a well known result and can be found in literature like \cite{CDD09} and \cite{FRbook}. Theorem \ref{thm:main} reduces to \cite[Theorem 2.4]{ACP11}.
Theorem \ref{thm:RNSP} is similar to \cite[Theorem 5]{F14}. Theorem \ref{thm:min2} and Theorem \ref{thm:min} in this setting can be found in \cite{FPRT10} or \cite{FRbook}.


\section{Low Rank Matrices Recovery}\label{sec:matrix}
This section addresses the case when  $\Mc=\{uv^T: u\in\R^{n_1}, v\in\R^{n_2}, \|u\|_2=\|v\|_2=1\}$ as mentioned in Example \ref{exa:matrix}. 
$\Sigma_{\Mc,s}=\{\sum_{i=1}^sc_iu_iv_i^T: c_i\in\R, \|u\|_2=\|v\|_2=1\}$ is the set of $n_1\times n_2$ matrices whose rank is at most~$s$. The constant $C_\Mc=\sqrt{n_1n_2}$. Let $K=\min\{n_1,n_2\}$. 

In this case, Definition \ref{def:wnsp} becomes 
\begin{equation}\label{equ:ranknsp}
\text{for every }Z\in\Nc(\Ac)\backslash\{0\} \text{ and every }V\in\Sigma_{\Mc,s}, \text{we have }\|V\|_*<\|Z-V\|_*,
\end{equation} 
which can be simplified to 
\begin{equation}\label{equ:ranknsp2}
\text{for every }Z\in\Nc(A)\backslash\{0\}, \text{we have }\sum_{i=1}^s\sigma_i(Z)<\sum_{i=s+1}^K\sigma_i(Z).
\end{equation}
The equivalence of \eqref{equ:ranknsp} and \eqref{equ:ranknsp2} is due to Lemma \ref{lem:matrix} in the Appendix. We  have the best $s$-term approximation of any matrix $Z$ to be
$$Z_s=\sum_{i=1}^s\sigma_i(Z)u_iv_i^T,$$
where $Z=\sum_{i=1}^K\sigma_i(Z)u_iv_i^T$ is its singular value decomposition.

Theorem \ref{thm:iff} reduces to the following corollary.
\begin{corollary}\label{cor:matrixiff}
The nuclear norm minimization problem
\begin{equation}\label{equ:nucmin}
\hat{Z}=\arg\min_{Z\in \R^{n_1\times n_2}} \|Z\|_{*} \quad\text{ subject to } \quad \Ac Z=y.
\end{equation}
is successful at recovering all signals in $\Sigma_{\Mc,s}$ if and only if \eqref{equ:ranknsp2} holds.
\end{corollary}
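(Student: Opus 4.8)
The plan is to derive Corollary \ref{cor:matrixiff} as a direct specialization of Theorem \ref{thm:iff} to the atomic set $\Mc$, for which the only real work is unpacking what $\Wc$-$s$-NSP means in this concrete setting. By Theorem \ref{thm:iff}, the minimization problem \eqref{equ:nucmin} recovers every element of $\Sigma_{\Mc,s}$ if and only if $\Ac$ has $\Mc$-$s$-NSP, which by Definition \ref{def:wnsp} is exactly the statement \eqref{equ:ranknsp}: for every $Z\in\Nc(\Ac)\backslash\{0\}$ and every $V\in\Sigma_{\Mc,s}$ we have $\|V\|_*<\|Z-V\|_*$. So it suffices to show that \eqref{equ:ranknsp} is equivalent to the singular-value condition \eqref{equ:ranknsp2}, namely $\sum_{i=1}^s\sigma_i(Z)<\sum_{i=s+1}^K\sigma_i(Z)$ for every $Z\in\Nc(\Ac)\backslash\{0\}$.

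The equivalence \eqref{equ:ranknsp} $\Leftrightarrow$ \eqref{equ:ranknsp2} is precisely what Lemma \ref{lem:matrix} asserts, so I would simply invoke it; but since the excerpt does not display its proof, let me indicate why it holds. Fix $Z\in\Nc(\Ac)\backslash\{0\}$ with singular value decomposition $Z=\sum_{i=1}^K\sigma_i(Z)u_iv_i^T$. For the implication \eqref{equ:ranknsp2} $\Rightarrow$ \eqref{equ:ranknsp}, take any $V\in\Sigma_{\Mc,s}$ and use the same singular-value inequality already recorded in Example \ref{exa:matrix}: $\|Z-V\|_*=\sum_{i=1}^K\sigma_i(Z-V)\geq\sum_{i=s+1}^K\sigma_i(Z)-\sum_{i=1}^s\sigma_i(V)\geq \sum_{i=s+1}^K\sigma_i(Z)-\|V\|_*$, hence $\|Z-V\|_*+\|V\|_*\geq\sum_{i=s+1}^K\sigma_i(Z)>\sum_{i=1}^s\sigma_i(Z)\geq 0$; one must then argue the strict inequality $\|Z-V\|_*>\|V\|_*$, which follows by comparing against the particular rank-$s$ matrix $Z_s=\sum_{i=1}^s\sigma_i(Z)u_iv_i^T$, for which $\|Z-Z_s\|_*=\sum_{i=s+1}^K\sigma_i(Z)$ and $\|Z_s\|_*=\sum_{i=1}^s\sigma_i(Z)$, giving $\|Z-Z_s\|_*-\|Z_s\|_*>0$, and then using that $Z_s$ is the minimizer of $\|V\|_*-\|Z-V\|_*$ over $V\in\Sigma_{\Mc,s}$ type reasoning (or the $s$-splittability of $\Mc$). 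For the converse \eqref{equ:ranknsp} $\Rightarrow$ \eqref{equ:ranknsp2}, it is enough to test \eqref{equ:ranknsp} at $V=Z_s$, which lies in $\Sigma_{\Mc,s}$, yielding $\|Z_s\|_*<\|Z-Z_s\|_*$, i.e. $\sum_{i=1}^s\sigma_i(Z)<\sum_{i=s+1}^K\sigma_i(Z)$.

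Thus the proof I would write is short: quote Theorem \ref{thm:iff} to reduce the corollary to $\Mc$-$s$-NSP, note that $\Mc$-$s$-NSP is literally \eqref{equ:ranknsp} by Definition \ref{def:wnsp}, and then cite Lemma \ref{lem:matrix} for the equivalence of \eqref{equ:ranknsp} and \eqref{equ:ranknsp2}. The main obstacle — really the only nontrivial point — is the direction \eqref{equ:ranknsp2} $\Rightarrow$ \eqref{equ:ranknsp}, where one needs the strict inequality $\|V\|_*<\|Z-V\|_*$ to hold \emph{uniformly} over all $V\in\Sigma_{\Mc,s}$ rather than just at $V=Z_s$; the key is that the function $V\mapsto \|Z-V\|_*-\|V\|_*$ is minimized (over rank-$\leq s$ matrices) at $V=Z_s$, a fact that rests on Lemma \ref{lem:sv} / the $s$-splittability of $\Mc$ (Lemma \ref{lem:split}), both already available in the paper. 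Everything else is bookkeeping on singular values.
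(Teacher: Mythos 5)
Your proposal is correct and follows exactly the paper's (implicit) route: specialize Theorem \ref{thm:iff} to the atomic set $\Mc$, identify $\Mc$-$s$-NSP with \eqref{equ:ranknsp}, and invoke Lemma \ref{lem:matrix} for the equivalence with \eqref{equ:ranknsp2}, which is precisely how the paper justifies the corollary. The only blemish is a sign slip in your aside (you want $Z_s$ to \emph{minimize} $\|Z-V\|_*-\|V\|_*$ over $V\in\Sigma_{\Mc,s}$, not $\|V\|_*-\|Z-V\|_*$), but since you ultimately cite Lemma \ref{lem:matrix}, whose proof via Lemma \ref{lem:sv} supplies exactly this fact, the argument stands.
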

   
Corollary \ref{cor:matrixiff} can be found in \cite[Theorem 1.1]{RXH08} or \cite[Theorem 15]{KKRT16} as well, but our general tail definition provides a more concise proof than the one in \cite{RXH08}.



The recovery scheme \eqref{equ:K} becomes
\begin{equation}\label{equ:nucmine}
\hat{Z}=\arg\min_{Z\in \R^{n_1\times n_2}} \|Z\|_{*} \quad\text{ subject to } \quad \|\Ac Z-Y\|_F\leq\epsilon.
\end{equation}

We also rewrite Theorem \ref{thm:main}, Theorem \ref{thm:RNSP}, and Theorem \ref{thm:snsp} in this case as one corollary, as $\Mc$ is splittable with constant 1.

\begin{corollary}\label{cor:matrix}

(a) If $\Ac$ has the stable $\Mc$ null space property of order $s$ as 
\begin{equation}
\|Z_s\|_*\leq\rho\|Z-Z_s\|_*, \qquad  \text{ for any }z\in\Nc(A),
\end{equation}
then for any $Z_0\in\R^{n_1\times n_2}$ and $Y=\Ac Z_0+e$ where $\|e\|_F\leq\epsilon$, we have
$$\|\hat Z-Z_0\|_*\leq\frac{2+2\rho}{1-\rho}\|Z-Z_s\|_*+\frac{4\sqrt{n_1n_2}}{(1-\rho)\nu_\Ac}\epsilon,$$
where $\hat Z$ is from \eqref{equ:nucmine}.

(b) If $A$ has the robust $\Mc$ null space property of order $s$ as
\begin{equation}\label{equ:matrixrnsp}
\|Z_s\|_\Fc\leq\rho\|Z-Z_s\|_*+\tau \|\Ac Z\|_2, \qquad \text{for any }Z\in\R^{n_1\times n_2}, 
\end{equation}
then for any $Z_0\in\R^{n_1\times n_2}$ and $Y=\Ac Z_0+e$ where $\|e\|_F\leq\epsilon$, we have
\begin{equation}\label{equ:matrixb}
\|\hat Z-Z_0\|_*\leq\frac{2(1+\rho)}{1-\rho}\|Z-Z_s\|_*+\frac{4\tau}{1-\rho}\epsilon,
\end{equation}
where $\hat Z$ is from \eqref{equ:nucmine}. 

(c) If $A$ has the strong $\Mc$ null space property of order $s$ as
\begin{equation}\label{equ:msnsp}
\|Z-Z_s\|_*-\|Z_s\|_*\geq c\|Z\|_F, \qquad  \text{ for any }Z\in\Nc(A),
\end{equation}
then for any $Z_0\in\R^{n_1\times n_2}$ and $Y=\Ac Z_0+e$ where $\|e\|_F\leq\epsilon$, we have
\begin{equation}
\|\hat Z-Z_0\|_2\leq\frac{2}{\nu_A}\left(\frac{\sqrt{n_1n_2}}{c}+1\right)\epsilon+\frac{2}{c}\|Z-Z_s\|_*,
\end{equation}
where $\hat Z$ is from \eqref{equ:nucmine}. 
\end{corollary}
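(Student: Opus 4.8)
The plan is straightforward: Corollary~\ref{cor:matrix} is nothing more than the specialization of Theorem~\ref{thm:main}, Theorem~\ref{thm:RNSP}, and Theorem~\ref{thm:snsp} to the atomic set $\Mc$, so the only genuine content is verifying that the hypotheses of those theorems translate correctly into the matrix language. First I would invoke the fact (asserted in the paper, via Lemma~\ref{lem:split} in the Appendix) that $\Mc$ is $s$-splittable; this is precisely the condition required to apply Theorem~\ref{thm:main} and Theorem~\ref{thm:RNSP}, while Theorem~\ref{thm:snsp} needs no structural assumption on the atomic set at all. Next I would recall from Example~\ref{exa:matrix} that for matrices the atomic norm $\|\cdot\|_\Mc$ coincides with the nuclear norm $\|\cdot\|_*$ (Lemma~\ref{lem:nuc}), that $\Sigma_{\Mc,s}$ is the set of matrices of rank at most $s$, and that the best $s$-term approximation $Z_s = \sum_{i=1}^s \sigma_i(Z) u_i v_i^T$ realizes $\sigma_{\Mc,s}(Z) = \|Z-Z_s\|_* = \sum_{i=s+1}^K \sigma_i(Z)$; this last point is exactly the content already worked out in Example~\ref{exa:matrix}. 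Finally, the ambient Hilbert space here is $\R^{n_1\times n_2}$ with the Frobenius inner product, so $\|\cdot\|_2$ in the abstract theorems becomes $\|\cdot\|_F$, and the constant $C_\Wc$ becomes $C_\Mc = \sqrt{n_1 n_2}$, since $\|X\|_* \le \sqrt{\rank(X)}\,\|X\|_F \le \sqrt{\min(n_1,n_2)}\,\|X\|_F \le \sqrt{n_1 n_2}\,\|X\|_F$.

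With those identifications in place, each part is immediate. For part (a): the hypothesis $\|Z_s\|_* \le \rho\|Z-Z_s\|_*$ for all $Z\in\Nc(\Ac)$ is verbatim the stable $\Wc$-$s$-$\rho$-NSP of Definition~\ref{def:wnsp2}(a) with $\Wc = \Mc$, so Theorem~\ref{thm:main} applies and gives $\|\hat Z - Z_0\|_* \le \frac{2+2\rho}{1-\rho}\sigma_{\Mc,s}(Z_0) + \frac{4C_\Mc}{(1-\rho)\nu_\Ac}\epsilon$; substituting $C_\Mc = \sqrt{n_1 n_2}$ and $\sigma_{\Mc,s}(Z_0) = \|Z_0 - Z_{0,s}\|_*$ yields the stated bound. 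For part (b): the hypothesis is the robust $\Wc$-$s$-RNSP of Definition~\ref{def:wnsp2}(b), so Theorem~\ref{thm:RNSP} gives the inequality directly with no constants to adjust. For part (c): the hypothesis is the strong $\Wc$-$s$-SNSP of Definition~\ref{def:snsp} restricted to $V = Z_s$ --- and since the SNSP quantifies over all $V\in\Sigma_{\Mc,s}$ it in particular holds for $V = Z_s$, though here the stated condition \eqref{equ:msnsp} only asks for $V=Z_s$; I should note that Theorem~\ref{thm:snsp}'s proof only ever uses the SNSP inequality at the specific $v_0$ realizing the tail, so the weaker-looking hypothesis \eqref{equ:msnsp} still suffices --- and Theorem~\ref{thm:snsp} then gives $\|\hat Z - Z_0\|_F \le \frac{2}{\nu_\Ac}(\frac{C_\Mc}{c}+1)\epsilon + \frac{2}{c}\sigma_{\Mc,s}(Z_0)$, again with $C_\Mc = \sqrt{n_1 n_2}$.

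The only point that deserves care --- and which I would flag as the main (mild) obstacle --- is the quantifier subtlety in part (c): the abstract SNSP in Definition~\ref{def:snsp} requires the bound $\|z-v\|_\Wc - \|v\|_\Wc \ge c\|z\|_2$ for \emph{every} $v\in\Sigma_{\Wc,s}$, whereas \eqref{equ:msnsp} as written only imposes it for $v = Z_s$. I would resolve this either by observing that in the proof of Theorem~\ref{thm:snsp} the SNSP is applied only once, to the fixed matrix $v_0 = Z_{0,s}$ (which is exactly of the form "best $s$-term approximation", hence covered by \eqref{equ:msnsp}), so the displayed hypothesis is genuinely all that is needed; or, if one prefers to cite Theorem~\ref{thm:snsp} as a black box, by noting that \eqref{equ:msnsp} for all $Z$ of the relevant form is in fact equivalent to the full SNSP here, since for any $V\in\Sigma_{\Mc,s}$ one has $\|Z-V\|_* - \|V\|_* \ge \|Z - Z_s\|_* - \|Z_s\|_*$ by the optimality of $Z_s$ together with $s$-splittability of $\Mc$. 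I would include the short remark justifying whichever route is taken, and otherwise the proof is a one-line appeal to each of the three theorems.
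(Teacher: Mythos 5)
Your overall plan is exactly what the paper does: the paper gives no separate proof of Corollary \ref{cor:matrix} beyond citing Theorem \ref{thm:main}, Theorem \ref{thm:RNSP}, and Theorem \ref{thm:snsp}, the splittability of $\Mc$ (Lemma \ref{lem:split}), the identification $\|\cdot\|_\Mc=\|\cdot\|_*$, $\sigma_{\Mc,s}(Z)=\|Z-Z_s\|_*$, $C_\Mc\le\sqrt{n_1n_2}$, and --- for part (c) --- the remark that the strong NSP reduces to \eqref{equ:msnsp} ``due to Lemma \ref{lem:matrix}.'' So parts (a) and (b) of your argument, and the quantifier issue you correctly identify in (c), match the paper.

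One of your two proposed resolutions of that quantifier issue is flawed, though. In the proof of Theorem \ref{thm:snsp}, the SNSP is invoked as $\|a+v_0\|_\Wc-\|v_0\|_\Wc\ge c\|a\|_2$, where $a\in\Nc(\Ac)$ but $v_0$ is the best $s$-term approximation of the \emph{signal} $z_0$, not of the null-space element $a$; so the hypothesis \eqref{equ:msnsp}, which only controls the pair $(Z,Z_s)$ with $Z\in\Nc(\Ac)$, cannot be plugged into that proof directly --- your route (i) does not go through as stated. Your route (ii) is the correct one, and it is precisely the paper's Lemma \ref{lem:matrix}: for every $V\in\Sigma_{\Mc,s}$ one has $\|Z-V\|_*-\|V\|_*\ge\|Z-Z_s\|_*-\|Z_s\|_*$, so \eqref{equ:msnsp} is equivalent to the full SNSP. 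Be careful with your stated justification there as well: ``optimality of $Z_s$'' only says $Z_s$ minimizes $\|Z-V\|_*$, not $\|Z-V\|_*-\|V\|_*$, so by itself it does not give the inequality. It does follow either from the paper's Lemma \ref{lem:matrix} (via the singular value inequality of Lemma \ref{lem:sv}) or, as you suggest, from splittability --- apply Definition \ref{def:split} with $x=-V$, $y=Z$ and use that $V\in\Sigma_{\Mc,s}$ forces $(-V)_s=-V$, giving $\|Z-V\|_*\ge\|V\|_*-\|Z_s\|_*+\|Z-Z_s\|_*$. With that substitution made explicit, your proof is complete and coincides with the paper's.
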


Corollary \ref{cor:matrix}(a)(c) are  new results. Note that the strong null space property is simplifed to \eqref{equ:msnsp} due to Lemma \ref{lem:matrix}. Corollary \ref{cor:matrix}(b) is essentially the same as \cite[Theorem 11]{KKRT16}. Compared to \cite[Theorem 11]{KKRT16}, the lack of $\sqrt{s}$ in \eqref{equ:matrixb} is due to the lack of the same term in our robust null space property \eqref{equ:matrixrnsp}.


The special case of Example \ref{exa:p} has a lot of similarities with low rank matrix recovery. We leave it to the readers to modify Corollary \ref{cor:matrix} for recovery results using \eqref{equ:P}.


\section{Appendix}
\begin{lemma}\label{lem:nuc}
 Let $\Mc=\{uv^T: u\in\R^{n_1}, v\in\R^{n_2}, \|u\|_2=\|v\|_2=1\}$, then $\|X\|_\Mc=\|X\|_*$.
\end{lemma}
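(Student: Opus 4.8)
The plan is to prove the two inequalities $\|X\|_\Mc \le \|X\|_*$ and $\|X\|_* \le \|X\|_\Mc$ separately, using the characterization of the atomic norm from \eqref{equ:norm}, namely $\|X\|_\Mc = \inf\{\sum_i |c_i| : X = \sum_i c_i u_i v_i^T,\ \|u_i\|_2 = \|v_i\|_2 = 1\}$.

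For the bound $\|X\|_\Mc \le \|X\|_*$, I would start from the singular value decomposition $X = \sum_{i=1}^K \sigma_i(X)\, u_i v_i^T$, where $K = \min\{n_1,n_2\}$ and the $u_i, v_i$ are orthonormal (hence in particular unit vectors). This exhibits $X$ as an explicit atomic decomposition with coefficients $c_i = \sigma_i(X) \ge 0$, so by definition $\|X\|_\Mc \le \sum_{i=1}^K \sigma_i(X) = \|X\|_*$.

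For the reverse bound $\|X\|_* \le \|X\|_\Mc$, I would take an arbitrary atomic decomposition $X = \sum_{i} c_i u_i v_i^T$ with $\|u_i\|_2 = \|v_i\|_2 = 1$ and show $\|X\|_* \le \sum_i |c_i|$; taking the infimum over such decompositions then finishes the proof. This follows from the triangle inequality for the nuclear norm together with the fact that each rank-one matrix $u_i v_i^T$ has nuclear norm $\|u_i v_i^T\|_* = \|u_i\|_2 \|v_i\|_2 = 1$. So $\|X\|_* \le \sum_i |c_i|\, \|u_i v_i^T\|_* = \sum_i |c_i|$.

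The main thing to be careful about is the soundness of using \eqref{equ:norm} in the form involving $\sum_i |c_i|$: strictly, that rewriting in \eqref{equ:m2}/\eqref{equ:norm} assumed the centroid of $\mathrm{conv}(\Mc_{\mathrm{sym}})$ is at the origin, so one should either invoke that hypothesis or work directly with the Minkowski-functional form $\inf\{\lambda > 0 : \lambda^{-1} X \in \mathrm{conv}(\Mc_{\mathrm{sym}})\}$. Since $\Mc = -\Mc$ already (replacing $u$ by $-u$), $\Mc_{\mathrm{sym}} = \Mc$ and the symmetry makes the centroid argument routine; the only genuine content is the two norm inequalities above, both of which are short. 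I do not anticipate a real obstacle here — this is a standard fact, and the proof is essentially the two-line argument that the nuclear norm ball is the convex hull of the unit-norm rank-one matrices.
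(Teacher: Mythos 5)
Your proposal is correct and follows essentially the same route as the paper: the singular value decomposition supplies an atomic decomposition giving $\|X\|_\Mc\leq\|X\|_*$, and the triangle inequality for the nuclear norm applied to an arbitrary decomposition into unit rank-one atoms gives $\|X\|_*\leq\|X\|_\Mc$. Your extra remark about justifying the $\sum_i|c_i|$ form of the atomic norm via the symmetry $\Mc=-\Mc$ is a harmless refinement that the paper's (terser) proof simply takes for granted.
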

\begin{proof}
Let $X=\sum_{i=1}^K \sigma_iu_iv_i^T$ be its singular value decomposition.
Suppose we can rewrite it as $X=\sum s_ip_iq_i^T$ where $\|p_i\|_2=\|q_i\|_2=1, s_i>0$.

$\sum_{i=1}^K\sigma_i=\|X\|_*=\|\sum_{i=1}^m s_iu_iv_i^T\|_*\leq\sum_{i=1}^m\|s_iu_iv_i^T\|_*=\sum_{i=1}^ms_i$.

By the definition of atomic norm, we directly have $\|X\|_\Mc=\sum_{i=1}^m\sigma_i$.
\end{proof}

The following lemma is a consequence of \cite[Theorem 7.4.51]{HJ}.
\begin{lemma}\label{lem:sv}
$\sum_{i=1}^K|\sigma_i(X)-\sigma_i(Y)|\leq \sum_{i=1}^K\sigma_i(X-Y)$
\end{lemma}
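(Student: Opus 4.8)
The statement to prove is Lemma~\ref{lem:sv}: $\sum_{i=1}^K|\sigma_i(X)-\sigma_i(Y)|\leq \sum_{i=1}^K\sigma_i(X-Y)$.

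The plan is to derive this as a consequence of the dual characterization of the nuclear norm together with a known majorization-type inequality for singular values. The cleanest route I would take is the following. Let $E = X - Y$, so that $X = Y + E$. First, recall the classical Weyl-type perturbation bound for singular values, which follows from Ky Fan's result or from \cite[Theorem 7.4.51]{HJ} as cited: the singular values are, in the appropriate sense, a $1$-Lipschitz function of the matrix with respect to each unitarily invariant norm applied to the vector of differences. Concretely, for any $1\leq k\leq K$, the partial sums satisfy $\bigl|\sum_{i=1}^k \sigma_i(X) - \sum_{i=1}^k \sigma_i(Y)\bigr| \leq \sum_{i=1}^k \sigma_i(X-Y)$, which is the statement that the vector $(\sigma_i(X)-\sigma_i(Y))_i$ is weakly majorized (after sorting) by $(\sigma_i(X-Y))_i$. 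The point is that $\sum_i |a_i|$ for a vector weakly majorized by a nonnegative vector $b$ is bounded by $\sum_i b_i$.

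The key steps in order: (1) State the partial-sum inequality $\bigl|\sum_{i=1}^k\sigma_i(X)-\sum_{i=1}^k\sigma_i(Y)\bigr|\leq\sum_{i=1}^k\sigma_i(X-Y)$ for all $k$, citing \cite[Theorem 7.4.51]{HJ}; this follows because $\sum_{i=1}^k\sigma_i(\cdot)$ is the Ky Fan $k$-norm, a genuine norm, so the triangle inequality gives $\sum_{i=1}^k\sigma_i(X)\leq\sum_{i=1}^k\sigma_i(Y)+\sum_{i=1}^k\sigma_i(X-Y)$ and symmetrically. (2) Let $\delta_i=\sigma_i(X)-\sigma_i(Y)$; although the $\sigma_i$ are each nonincreasing in $i$, the $\delta_i$ need not be sorted, so sort the sequence $(\delta_i)$ by decreasing absolute value and split into its positive and negative parts. (3) Use an Abel summation / rearrangement argument: $\sum_i|\delta_i| = \sum_{i:\delta_i>0}\delta_i + \sum_{i:\delta_i<0}(-\delta_i)$, and bound each of these two sums by a partial sum of the form $\bigl|\sum_{i\in S}\sigma_i(X)-\sum_{i\in S}\sigma_i(Y)\bigr|$ over an appropriate index set $S$, which in turn is controlled by $\sum_{i\in S}\sigma_i(X-Y)\leq\sum_{i=1}^K\sigma_i(X-Y)$. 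Actually the slicker version: the Ky Fan dominance theorem says weak majorization of sorted singular-value vectors is equivalent to the inequality holding for every unitarily invariant norm; applying it with the $\ell_1$-type symmetric gauge on the difference vector directly yields $\sum_i|\sigma_i(X)-\sigma_i(Y)|\leq\sum_i\sigma_i(X-Y)$ once one knows the difference vector is weakly majorized by $(\sigma_i(X-Y))$.

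The main obstacle is the bookkeeping in step (3): the quantities $\sigma_i(X)-\sigma_i(Y)$ are differences of two separately-sorted sequences, so they are not themselves sorted, and one must be careful that the "weak majorization" being invoked is of the sorted absolute-value rearrangement of $(\sigma_i(X)-\sigma_i(Y))$ by the sorted sequence $(\sigma_i(X-Y))$. The safest way to sidestep this is to invoke the precise statement of \cite[Theorem 7.4.51]{HJ} (Mirsky's theorem / Ky Fan), which already packages exactly the inequality $\sum_{i}\Phi(\sigma_i(X)-\sigma_i(Y))$-type bound for the $\ell_1$ symmetric gauge, so that the proof reduces to citing it and observing that the left-hand side $\sum_i|\sigma_i(X)-\sigma_i(Y)|$ is precisely that gauge applied to the difference vector while the right-hand side $\sum_i\sigma_i(X-Y) = \|X-Y\|_*$ is the same gauge applied to the singular values of $X-Y$. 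I would therefore write the proof as a short deduction: state Mirsky's inequality in the $\|\cdot\|_*$ form, identify both sides, done.
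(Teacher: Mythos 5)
Your final proof---citing \cite[Theorem 7.4.51]{HJ} (Mirsky's inequality for singular values) and identifying the left side as the $\ell_1$ gauge of the difference vector and the right side as $\|X-Y\|_*$---is exactly the paper's approach, since the paper gives no argument beyond that same citation. One caveat: your intermediate route via the Ky Fan $k$-norm triangle inequality alone would not close the argument, because it only bounds the signed consecutive partial sums $\bigl|\sum_{i=1}^k(\sigma_i(X)-\sigma_i(Y))\bigr|$ and not $\sum_{i=1}^K|\sigma_i(X)-\sigma_i(Y)|$ (that needs the stronger Lidskii--Mirsky weak majorization), but you flag this yourself and correctly fall back on the cited theorem.
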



\begin{lemma}\label{lem:split}
Both $\Ec$ and $\Mc$ are splittable with $\beta=1$.
\end{lemma}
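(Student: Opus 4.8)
The plan is to verify the splittability inequality
$$\|x+y\|_\Wc\ge\|x_s\|_\Wc-\|y_s\|_\Wc+\|y-y_s\|_\Wc-\|x-x_s\|_\Wc$$
separately for $\Wc=\Ec$ (the canonical basis, where $\|\cdot\|_\Ec=\|\cdot\|_1$) and for $\Wc=\Mc$ (where $\|\cdot\|_\Mc=\|\cdot\|_*$). For the $\Ec$ case, I would first recall from Example \ref{exa:cs} that $x_s=x_T$ where $T$ is the index set of the $s$ largest coordinates of $x$ in magnitude, so $\|x_s\|_1=\|x_T\|_1$ and $\|x-x_s\|_1=\|x_{T^c}\|_1$. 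Let $S$ be the corresponding index set for $y$. The key observation is that on $T$ we can compare coordinatewise: $\|x+y\|_1\ge \|(x+y)_T\|_1+\|(x+y)_{T^c}\|_1 \ge \|x_T\|_1-\|y_T\|_1 + \|y_{T^c}\|_1 - \|x_{T^c}\|_1$ by the triangle inequality applied on each block. Then I need $\|y_T\|_1\le\|y_S\|_1$ (since $S$ captures the $s$ largest entries of $y$, so $\|y_T\|_1\le\|y_S\|_1$ for any $|T|\le s$) and $\|y_{T^c}\|_1\ge\|y_{S^c}\|_1$ (equivalently $\|y_T\|_1 \le \|y_S\|_1$ again, since $\|y_{T^c}\|_1 = \|y\|_1-\|y_T\|_1 \ge \|y\|_1 - \|y_S\|_1 = \|y_{S^c}\|_1$). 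Substituting gives exactly the desired bound with $\|y_s\|_1=\|y_S\|_1$ and $\|y-y_s\|_1=\|y_{S^c}\|_1$.

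For the $\Mc$ case I would run the analogous argument at the level of singular values, using Lemma \ref{lem:sv}. Write $X,Y\in\R^{n_1\times n_2}$ and recall $\|X_s\|_* = \sum_{i=1}^s\sigma_i(X)$, $\|X-X_s\|_* = \sum_{i=s+1}^K\sigma_i(X)$. By Lemma \ref{lem:sv} applied to $X+Y$ and $-Y$ (or to $X+Y$ and $X$), $\sum_{i=1}^K|\sigma_i(X+Y)-\sigma_i(Y)|\le\sum_{i=1}^K\sigma_i(X)$; I would split this sum into the first $s$ and last $K-s$ indices. For the first $s$ indices: $\sum_{i=1}^s|\sigma_i(X+Y)-\sigma_i(Y)| \ge \sum_{i=1}^s\sigma_i(Y) - \sum_{i=1}^s\sigma_i(X+Y)$, i.e. $\sum_{i=1}^s\sigma_i(X+Y)\ge\sum_{i=1}^s\sigma_i(Y)-\sum_{i=1}^s|\sigma_i(X+Y)-\sigma_i(Y)|$. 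For the last $K-s$ indices: $\sum_{i=s+1}^K\sigma_i(X+Y)\ge\sum_{i=s+1}^K\sigma_i(Y)-\sum_{i=s+1}^K\sigma_i(X)$ by the same reasoning. Adding and using that $\|X+Y\|_* = \sum_{i=1}^K\sigma_i(X+Y) \ge \sum_{i=1}^s\sigma_i(X+Y)+\sum_{i=s+1}^K\sigma_i(X+Y)$ together with $\sum_{i=1}^K|\sigma_i(X+Y)-\sigma_i(Y)|\le\|X\|_*$ — wait, I need a cleaner bookkeeping here — so I would instead bound $\sum_{i=1}^s|\sigma_i(X+Y)-\sigma_i(Y)|$ by $\sum_{i=1}^s\sigma_i(X)=\|X_s\|_*$ directly via a Weyl-type inequality for partial sums, which is again a consequence of \cite[Theorem 7.4.51]{HJ}. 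Then the inequality falls out as $\|X+Y\|_* \ge \|Y_s\|_* - \|X_s\|_* + \|Y-Y_s\|_* - \|X-X_s\|_*$, which is the claim with the roles of $x$ and $y$ matching Definition \ref{def:split}.

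The main obstacle I anticipate is the partial-sum version of the singular value perturbation bound: Lemma \ref{lem:sv} as stated only gives control of the \emph{full} sum $\sum_{i=1}^K|\sigma_i(X)-\sigma_i(Y)|$, but the splittability inequality separates the head (first $s$) and tail (last $K-s$) and needs $\sum_{i=1}^s|\sigma_i(X+Y)-\sigma_i(Y)| \le \sum_{i=1}^s \sigma_i(X)$ by itself. This requires the stronger statement that the vector of sorted $|\sigma_i(X+Y)-\sigma_i(Y)|$ is majorized by (or at least that its partial sums are dominated by) the vector of $\sigma_i(X)$; this does follow from the same source \cite[Theorem 7.4.51]{HJ} but I would need to either invoke it in that sharper form or give a short separate argument (e.g. via the Ky Fan $k$-norm being a norm: $\|X+Y\|_{(k)} - \|Y\|_{(k)} \le \|X\|_{(k)} = \sum_{i=1}^k\sigma_i(X)$, and symmetrically, so $|\,\|X+Y\|_{(k)}-\|Y\|_{(k)}\,|\le\|X\|_{(k)}$). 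Using Ky Fan norms directly is probably the cleanest route and sidesteps the need to track individual singular-value differences, so that is the version I would actually write up; the $\ell_1$ case is then seen as the "diagonal" special case of the same Ky Fan norm computation.
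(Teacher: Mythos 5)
Your $\Ec$ argument is correct and is essentially the paper's proof (split over the support $T$ of $x_s$, apply the triangle inequality on each block, then use $\|y_T\|_1\le\|y_s\|_1$ and $\|y_{T^c}\|_1\ge\|y-y_s\|_1$). The $\Mc$ half, however, has two genuine problems. First, the inequality you end up with, $\|X+Y\|_*\ge\|Y_s\|_*-\|X_s\|_*+\|Y-Y_s\|_*-\|X-X_s\|_*$, is not the splittability inequality: since $\|Y_s\|_*+\|Y-Y_s\|_*=\|Y\|_*$ and $\|X_s\|_*+\|X-X_s\|_*=\|X\|_*$, it collapses to the reverse triangle inequality $\|X+Y\|_*\ge\|Y\|_*-\|X\|_*$, and no relabeling of $X$ and $Y$ turns it into Definition \ref{def:split}, whose sign pattern couples the head of one argument with the tail of the other ($+\|X_s\|_*$, $-\|Y_s\|_*$, $+\|Y-Y_s\|_*$, $-\|X-X_s\|_*$). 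Second, the tail step you invoke ``by the same reasoning,'' namely $\sum_{i=s+1}^K|\sigma_i(X+Y)-\sigma_i(Y)|\le\sum_{i=s+1}^K\sigma_i(X)$, equivalently $\sum_{i=s+1}^K\sigma_i(X+Y)\ge\sum_{i=s+1}^K\sigma_i(Y)-\sum_{i=s+1}^K\sigma_i(X)$, is false: with $s=1$, $Y=I_2$, $X=\mathrm{diag}(-1,0)$ we get $X+Y=\mathrm{diag}(0,1)$ and the claim reads $0\ge1$. The Ky Fan / weak-majorization refinement you propose only controls \emph{head} partial sums $\sum_{i\le k}$; there is no analogous control of tail sums, so this route cannot be patched by a sharper form of Lemma \ref{lem:sv}.

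The repair is to keep the comparison at the level of $\sigma_i(X)$ and $\sigma_i(Y)$ and never introduce $\sigma_i(X+Y)$, which is what the paper does: apply Lemma \ref{lem:sv} once to the pair $(X,-Y)$ to get $\|X+Y\|_*=\sum_{i=1}^K\sigma_i\bigl(X-(-Y)\bigr)\ge\sum_{i=1}^K|\sigma_i(X)-\sigma_i(Y)|$, and then lower-bound the absolute values with opposite sign choices on head and tail, $|\sigma_i(X)-\sigma_i(Y)|\ge\sigma_i(X)-\sigma_i(Y)$ for $i\le s$ and $|\sigma_i(X)-\sigma_i(Y)|\ge\sigma_i(Y)-\sigma_i(X)$ for $i>s$. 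Summing yields exactly $\|X_s\|_*-\|Y_s\|_*+\|Y-Y_s\|_*-\|X-X_s\|_*$, using only the full-sum statement of Lemma \ref{lem:sv}, with no partial-sum or Ky Fan machinery. Note that this is the exact matrix analogue of your own $\ell_1$ argument, where the head/tail split is dictated by the data $x$ and $y$ (the support of $x_s$), not by $x+y$.
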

\begin{proof}
For $\Ec$, given $x,y\in\R^d$, let $T$ be the index set such that $x_s=x_T$. It is clear that $-\|y_T\|_1\geq-\|y_s\|_1$ and $\|y_{T^c}\|_1\geq\|y-y_s\|_1$. Then
\begin{align*}
\|x+y\|_1&=\|x_T+y_T\|_1+\|x_{T^c}+y_{T^c}\|_1\\
&\geq\|x_T\|_1-\|y_T\|_1+\|y_{T^c}\|_1-\|x_{T^c}\|_1\\
&\geq\|x_s\|_1-\|y_s\|_1+\|y-y_s\|_1-\|x-x_s\|_1.
\end{align*}
For $\Mc$, given $X, Y\in\R^{n_1\times n_2}$
\begin{align*}
\|X+Y\|_*&=\sum_{i=1}^K\sigma_i(X-(-Y))\geq \sum_{i=1}^K|\sigma_i(X)-\sigma_i(Y)|\\
&=\sum_{i=1}^s|\sigma_i(X)-\sigma_i(Y)|+\sum_{i=s+1}^K|\sigma_i(X)-\sigma_i(Y)|\\
&\geq \sum_{i=1}^s\sigma_i(X)-\sum_{i=1}^s\sigma_i(Y)+\sum_{i=s+1}^K\sigma_i(Y)-\sum_{i=s+1}^K\sigma_i(X)\\
&=\|X_s\|_*-\|Y_s\|_*+\|Y-Y_s\|_*-\|X-X_s\|_*
\end{align*}
\end{proof}


\begin{lemma}\label{lem:snsp}
Let $F=[f_1,f_2,\cdots, f_N]$ has null space property of order $s$ (see \eqref{equ:nsp}), then the atomic set $\Fc=\{f_1,f_2,\cdots,f_N\}$ is $s$-even.

\end{lemma}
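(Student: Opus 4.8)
The plan is to unwind the definition of $s$-even directly from the null space property of $F$. Recall $\Fc$ is $s$-even means: for every index set $J$ with $|J|\le s$ and every choice of signs $c_j$ with $|c_j|=1$, we have $\bigl\|\sum_{j\in J}c_jf_j\bigr\|_\Fc=|J|$. The inequality $\bigl\|\sum_{j\in J}c_jf_j\bigr\|_\Fc\le|J|$ is immediate from the definition of the atomic norm in \eqref{equ:norm}, since $\sum_{j\in J}c_jf_j$ is already a representation using $|J|$ atoms with coefficients of modulus $1$. So the entire content is the reverse inequality $\bigl\|\sum_{j\in J}c_jf_j\bigr\|_\Fc\ge|J|$, and this is where the NSP of $F$ must be used.

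First I would set $z=\sum_{j\in J}c_jf_j=Fc$ where $c$ is the vector supported on $J$ with entries $c_j$, so $\|c\|_1=|J|$. By \eqref{equ:norm}, $\|z\|_\Fc=\min\{\|x\|_1 : Fx=z\}$. Suppose for contradiction that there is some $x$ with $Fx=z$ and $\|x\|_1<|J|=\|c\|_1$. Then $u:=c-x$ is a nonzero element of $\Nc(F)$ (nonzero because $\|x\|_1\ne\|c\|_1$). Let $T=J$, so $|T|\le s$. Now I would estimate $\|x\|_1$ from below on $T$ and $T^c$ separately: on $T$, $\|x_T\|_1=\|c_T-u_T\|_1\ge\|c_T\|_1-\|u_T\|_1=|J|-\|u_T\|_1$, using that $c$ is supported on $T$ and $|c_j|=1$; on $T^c$, $\|x_{T^c}\|_1=\|u_{T^c}\|_1$ since $c_{T^c}=0$. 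Adding, $\|x\|_1\ge |J|-\|u_T\|_1+\|u_{T^c}\|_1$. But the NSP \eqref{equ:nsp} applied to the nonzero null vector $u$ and $|T|\le s$ gives $\|u_T\|_1<\|u_{T^c}\|_1$, hence $-\|u_T\|_1+\|u_{T^c}\|_1>0$, so $\|x\|_1>|J|$, contradicting $\|x\|_1<|J|$. Therefore $\|z\|_\Fc\ge|J|$, and combined with the easy direction, $\|z\|_\Fc=|J|=\sum_{j\in J}|c_j|$, which is exactly $s$-evenness.

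There is no serious obstacle here — the argument is a short and direct application of the NSP together with the synthesis characterization of $\|\cdot\|_\Fc$ from \eqref{equ:z-v}. The only point that needs a little care is the strictness: one must observe that $\|x\|_1<\|c\|_1$ forces $c\ne x$, so $u=c-x$ is a genuine nonzero element of $\Nc(F)$ and the strict inequality in \eqref{equ:nsp} applies; if instead one only assumed $\|x\|_1\le\|c\|_1$ the same computation still yields $\|x\|_1\ge|J|$ (now with a non-strict bound unless $u=0$), which is all that is needed for the claim about the infimum. I would present the contradiction version for clarity. One could equally phrase it without contradiction: for any $x$ with $Fx=z$, writing $u=c-x\in\Nc(F)$ and running the same split gives $\|x\|_1\ge|J|$ directly (the case $u=0$ gives $x=c$ and $\|x\|_1=|J|$; the case $u\ne0$ gives $\|x\|_1>|J|$ by \eqref{equ:nsp}), so the infimum defining $\|z\|_\Fc$ is at least $|J|$.
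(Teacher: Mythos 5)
Your proof is correct and takes essentially the same route as the paper's: both compare the canonical representation supported on $J$ with an arbitrary competing representation, observe that the difference of coefficient vectors lies in $\Nc(F)$, and apply the NSP \eqref{equ:nsp} with $T=J$ plus the triangle inequality to show every other representation has strictly larger $\ell_1$ norm. Your version merely adds the (harmless) explicit remarks about the trivial upper bound $\|z\|_\Fc\le|J|$ and the $u=0$ case, which the paper leaves implicit.
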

\begin{proof}
Let $z=\sum_{j\in J}c_jf_j=\sum_{i\in[N]}\alpha_if_i$ be two different representations of $z$ in $\Fc$. In the first representation, we have $|c_j|=1$ and $|J|\leq s$. We have $\sum_{j\in J}(\alpha_j-c_j)f_j+\sum_{j\notin J}\alpha_jf_j=0$. By the null space property of $F$, 
$\sum_{j\notin J}|\alpha_j|>\sum_{j\in J}|\alpha_j-c_j|\geq\sum_{j\in J}|c_j|-\sum_{j\in J}|\alpha_j|,$
so $\sum_{j\in [N]}|\alpha_j|>\sum_{j\in J}|c_j|$. By definition of atomic norm, we have $\|z\|_\Fc=\sum_{j\in J}|c_j|$.
\end{proof}

\begin{proof}[Proof of Theorem \ref{thm:min}]
Let $\Uc$ be a family of subsets of $[N]$ given by Lemma \ref{lem:N} with $t=\lfloor s/2\rfloor$. For each $I\in\Uc$, define $z_I:=\frac{1}{t}\sum_{i\in I}w_i$. We have $|I|=\lfloor s/2\rfloor$. By definition, $\|z_I\|_\Wc\leq1$.

For $I\neq J$,  $z_I-z_J=\frac{1}{t}(\sum_{i\in I-J}w_i-\sum_{j\in J-I}w_j)$. Its support $|I-J|+|J-I|=|I|+|J|-2|I\cap J|$, so $s\geq2t\geq|I-J|+|J-I|>t$. Since $\Wc$ is $s$-even, $\|z_i-z_j\|_\Wc=\frac{1}{t}(|I-J|+|J-I|)>1$.

We claim that $\{\Ac(z_I+\rho B_\Wc), I\in\Uc\}$ is a disjoint collection of subsets of $\Ac(\R^d)$, where $\rho=\frac{1}{2C+1}$. Otherwise there exist $I\neq J$ and $v, v'\in B_\Wc$ such that $\Ac(z_I+\rho v)=\Ac(z_J+\rho v')$, resulting
\begin{align*}
1&<\|z_I-z_J\|_\Wc=\|z_I+\rho v-\Delta(\Ac(z_I+\rho v))-z_J+\Delta(\Ac(z_J+\rho v'))-\rho v+\rho v'-\rho v'\|_\Wc\\
&\leq\|z_I+\rho v-\Delta(\Ac(z_I+\rho v))\|_\Wc+\|\rho v\|_\Wc+\|\rho v'\|_\Wc+\|\Delta(\Ac(z_J+\rho v'))-(z_J+\rho v')\|_\Wc\\
&\leq C\sigma_{\Wc,s}(z_I+\rho v)+2\rho+C\sigma_{\Wc,s}(z_J+\rho v')\leq C\|\rho v\|_\Wc+2\rho+C\|\rho v'\|_\Wc\leq1.
\end{align*}

It is easy to see that $\Ac(z_I+\rho B_\Wc)\subset (1+\rho)\Ac B_\Wc$ for any $I\in\Uc$. So
\begin{align}\label{equ:vol}
|\Uc|\vol(\Ac(\rho(B_\Wc)))\leq \vol((1+\rho)\Ac B_\Wc)
\end{align}
Let $\Ac B_\Wc$ have  dimension $r\leq m$, so \eqref{equ:vol} becomes
\begin{align*}
&\left(\frac{N}{4t}\right)^{t/2}\rho^r\vol(\Ac B_\Wc)\leq(1+\rho)^r\vol(\Ac B_\Wc)\\
\Rightarrow & \left(\frac{N}{4t}\right)^{t/2}\leq(1+\frac{1}{\rho})^r\leq(2C+3)^m.
\end{align*}
Taking log of both sides yields
$$m\ln(2C+3)\geq\frac{t}{2}\ln\frac{N}{4t}\geq\frac{s/2-1}{2}\ln\frac{N}{2s}\geq\frac{s}{8}\ln\frac{N}{2s},$$ if $s>2$.
\end{proof}

\begin{lemma}\label{lem:fsplit}
The frame $\{f_n=(\cos\frac{2\pi (n-1)}{8},\sin\frac{2\pi (n-1)}{8})\}_{n=1}^8$ of $\R^2$ is 1-splittable with constant $\sqrt{2}-1$.

\end{lemma}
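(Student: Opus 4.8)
The plan is to verify Definition~\ref{def:split} directly for the $8$-vector frame $\Fc=\{f_n\}_{n=1}^8$ with $s=1$, i.e.\ to show that for every $x,y\in\R^2$,
$$\|x+y\|_\Fc\geq\|x_1\|_\Fc-\|y_1\|_\Fc+\|y-y_1\|_\Fc-\|x-x_1\|_\Fc,$$
where $x_1$ is a best $1$-term atomic approximation of $x$. First I would record the elementary structure of $\|\cdot\|_\Fc$ in $\R^2$ for this frame: the atoms $\{f_n\}$ together with $\{-f_n\}$ are the $16$ vertices of a regular $16$-gon, so $B_\Fc=\mathrm{conv}(\Fc_{\text{sym}})$ is that regular $16$-gon, and $\|v\|_\Fc$ is its gauge, which in polar form is $\|v\|_\Fc=\|v\|_2\,g(\theta)$ for an explicit, piecewise-linear, $\tfrac{\pi}{8}$-periodic function $g$ of the angle $\theta$ of $v$, with $g$ equal to $1$ at the atom directions $\tfrac{\pi(n-1)}{8}$ and attaining its maximum $1/\cos(\pi/16)$ at the mid-directions. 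Next I would make the key reduction on the ``tail'' term: since $s=1$, the best $1$-term approximation $x_1$ is $c\,f_n$ for the atom $f_n$ that captures as much of $\|x\|_\Fc$ as possible, and one checks (this is the content of \eqref{equ:z-v}--\eqref{equ:tW} specialized to $s=1$, or a direct computation) that $\sigma_{\Fc,1}(x)=\|x-x_1\|_\Fc=\|x\|_\Fc-\|x_1\|_\Fc$. Hence the claimed inequality is equivalent to
$$\|x+y\|_\Fc\geq 2\|x_1\|_\Fc-2\|y_1\|_\Fc+\|y\|_\Fc-\|x\|_\Fc,$$
or, writing $m(v):=\|v_1\|_\Fc$ for the ``largest single-atom mass'' of $v$, to $\|x+y\|_\Fc+\|x\|_\Fc+2\,m(y)\geq\|y\|_\Fc+2\,m(x)$.

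The main step is then a case analysis driven by where the angles of $x$, $y$, and $x+y$ sit relative to the $16$-gon's vertex directions. I would normalize by the symmetry group of the $16$-gon (the dihedral group of order $32$) so that, say, the angle of $x$ lies in $[0,\pi/16]$, which fixes $x_1$ to be a multiple of $f_1$ and gives $m(x)=\|x\|_2\cos\theta_x$ with an explicit $g(\theta_x)$; then I would split on the angular sector containing $y$. In each sector both $m(y)$ and the gauges $\|y\|_\Fc$, $\|x\|_\Fc$, $\|x+y\|_\Fc$ become explicit expressions in $\|x\|_2,\|y\|_2$ and the two angles, and the inequality reduces to a finite list of two-variable trigonometric inequalities that can be checked by elementary estimates (using $1\le g\le 1/\cos(\pi/16)$ and the triangle inequality $\|x+y\|_\Fc\ge \|x\|_\Fc-\|y\|_\Fc$ as crude bounds, sharpened by the vertex-direction structure only in the few tight cases). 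A cleaner alternative, which I would try first to avoid the case zoo, is to prove the slightly stronger statement that for this frame
$$\|x+y\|_\Fc\ \ge\ \|x\|_\Fc-\|y\|_\Fc\qquad\text{and}\qquad m(x)-m(y)\ \le\ \tfrac12\bigl(\|x\|_\Fc-\|y\|_\Fc\bigr)+\tfrac12\|x+y\|_\Fc$$
is false in general but holds after the normalization; failing that, fall back to the exhaustive sector check, which is finite and mechanical.

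The hard part will be controlling the ``$-\|y_s\|_\Fc$'' term: unlike the $\ell_1$ case in Lemma~\ref{lem:split}, where choosing the support $T$ of $x_s$ makes $-\|y_T\|_1\ge-\|y_s\|_1$ automatic, here the ``support'' is a single direction $f_1$ and there is no coordinate-wise domination — $m(y)=\|y_1\|_\Fc$ is generally much larger than the ``$f_1$-component'' of $y$, since the $16$-gon gauge is not a coordinate norm. So the crux is showing that the gain $\|x+y\|_\Fc+\|x\|_\Fc-\|y\|_\Fc$ always dominates $2\,m(x)-2\,m(y)\le 2\|x\|_2(1)-2\|y\|_2\cos(\pi/16)\cdot(\text{worst alignment})$; this is where the explicit bound $g\le 1/\cos(\pi/16)$ (equivalently, the fact that the $16$-gon is close to its inscribed circle) does the work, and it is also why the construction uses $16$ directions rather than, say, $4$. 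I expect the whole argument to be a page of bookkeeping once the normalization and the identity $\sigma_{\Fc,1}(x)=\|x\|_\Fc-m(x)$ are in place.
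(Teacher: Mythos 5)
Your structural setup (the symmetrized atoms are the vertices of a regular $16$-gon, $\|\cdot\|_\Fc$ is its gauge, and one may normalize by the dihedral symmetry) is fine, but the reduction on which the whole plan rests is false: for this frame it is \emph{not} true that $\sigma_{\Fc,1}(x)=\|x-x_1\|_\Fc=\|x\|_\Fc-\|x_1\|_\Fc$, nor that $x_1$ is ``the atom capturing the largest share of $\|x\|_\Fc$''. Take $x=f_1+f_2$, so $\|x\|_\Fc=2$. Since $\Sigma_{\Fc,1}$ is the union of the eight atom lines, $\sigma_{\Fc,1}(x)$ is the gauge distance from $x$ to the nearest line $\R f_n$; for $\R f_1$ the points $x-cf_1$ sweep the horizontal line at height $\sin\frac{\pi}{8}$, and because $f_5=(0,1)$ is itself an atom the minimum gauge on that line is $\sin\frac{\pi}{8}\approx 0.38$, attained at $x_1=\bigl(1+\cos\frac{\pi}{8}\bigr)f_1$ with tail $x-x_1=\sin\frac{\pi}{8}\,f_5$. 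Thus $\sigma_{\Fc,1}(x)=\sin\frac{\pi}{8}$ while $\|x\|_\Fc-\|x_1\|_\Fc=1-\cos\frac{\pi}{8}\approx 0.08$: the optimal $1$-term approximation \emph{overshoots} along $f_1$ and the tail exits through a different atom direction, and the same discrepancy occurs for every $x$ off the atom directions (in general $\sigma_{\Fc,1}(x)=\|x\|_2\min_n|\sin(\theta_x-\theta_n)|$, since the normal to each atom line is again an atom). Equations \eqref{equ:z-v}--\eqref{equ:tW} only recast the tail as an $\ell_1$ minimization over representations; they do not give your identity. Consequently your ``equivalent'' inequality $\|x+y\|_\Fc\ge 2m(x)-2m(y)+\|y\|_\Fc-\|x\|_\Fc$ is not equivalent to Definition~\ref{def:split}, and the entire subsequent case analysis targets the wrong statement.

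Beyond this, the substance of the lemma is deferred rather than proved: the sector-by-sector verification is only described as ``finite and mechanical'', and your proposed shortcut is itself conceded to be false in general. The paper's proof is precisely the computation you postpone, carried out with the correct tails: in each angular regime it identifies the actual minimizer $x_1$ (which, depending on the angle, is a multiple of the nearest atom or an overshooting multiple, exactly as in the counterexample above) together with the actual value of $\|x-x_1\|_\Fc$, and then checks the splittability inequality case by case. To repair your plan you would need to replace $\|x\|_\Fc-m(x)$ by the correct formula for $\sigma_{\Fc,1}$ and for the minimizing $x_1$ in each sector, and then actually perform the resulting finite verification; as written, the key identity is wrong and the verification is absent.
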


\begin{proof}
We need to prove that \begin{equation}\label{equ:75}
\|x+y\|_\Fc\geq\|x_1\|_\Fc-\|y_1\|_\Fc+(\sqrt{2}-1)(\|y-y_1\|_\Fc-\|x-x_1\|_\Fc).
\end{equation}
Recall $x_1=\argmin\limits_{v\in\Sigma_{\Fc,1}}\|x-v\|_{\Fc}$.

First, we notice that given any $a\in[0,1]$, we have $\|af_n+(1-a)f_{n+1}\|_\Fc=1$ for any $n$ since  $af_n+(1-a)f_{n+1}$ is on the boundary of the convex hull. Therefore with scaling, we have 
\begin{equation}\label{equ:f1f2}
\|af_n+bf_{n+1}\|_\Fc = a+b, \quad\text{for any }a\geq0, b\geq0.
\end{equation}

Next, we  establish that 
\begin{equation}\label{equ:f1f3}
\|af_1+bf_3\|_\Fc=\max\{|a|,|b|\}+\beta\min\{|a|,|b|\}\geq |a|+\beta |b|,
\end{equation}
where $\beta=\sqrt{2}-1$. The inequality is a straightforward calculation. For the equality, due to symmetry, we only need to prove the case when $a>b>0$. In this case, $x=af_1+bf_3=af_1+b (\frac{2}{\sqrt{2}}f_2-f_1)=(a-b)f_1+\frac{2b}{\sqrt{2}}f_2$. By \eqref{equ:f1f2}, we have $\|x\|_\Fc=a-b+\sqrt{2}b=a+\beta b$. Note that \eqref{equ:f1f3} is also true if the vector is $af_2+bf_4$ since these two vectors are also orthogonal.

Third, we will show for $x=af_1+bf_3=\frac{a+b}{\sqrt{2}}f_2+\frac{b-a}{\sqrt{2}}f_4$ where $a>b>0$, 
\begin{equation}\label{equ:cases}
\begin{array}{lll}&x_1 = af_1, & \text{if } a>b/\beta,\\
&x_1=\frac{a+b}{\sqrt{2}}f_2, & \text{if } a\leq b/\beta.
\end{array}\end{equation}
The idea is that to find the best 1-sparse approximation of $x$, we simply find which frame vector is closest to $x$ (biggest inner product). In the case of $a>\beta b$, $x$ is closest to $f_1$, so the best 1-sparse approximation of $x$ is $af_1$. To prove this case rigorously, we can compute that other 1-sparse approximations are providing bigger tails. If we try to take out a portion of $f_2$, then $\min_c\|x-cf_2\|_\Fc\stackrel{\eqref{equ:f1f3}}{=}\|x-\frac{a+b}{\sqrt{2}}f_2\|_\Fc=\frac{a-b}{\sqrt{2}}>b$. Similarly, $\min_c\|x-cf_2\|_\Fc=a>b$, and $\min_c\|x-cf_4\|_\Fc=\frac{a+b}{\sqrt{2}}>b$. The $a\leq{b/\beta}$ case in \eqref{equ:cases} can be shown similarly.

A byproduct of the above argument is that for $x=af_1+bf_3$,
\begin{equation}\label{equ:x1}
\|x_1\|_\Fc\geq\max\{ |a|,|b|\}.
\end{equation}

Finally, to prove \eqref{equ:75}, we assume, without loss of generality, that $x=af_1+bf_3$ with $a>{b/\beta}>0$, and $y=cf_1+df_3$ with arbitrary $c,d$. Therefore $\|x_1\|_\Fc=a$, and $\|x-x_1\|_\Fc=b$. So
\begin{align*}
\|x+y\|_\Fc&=\|(a+c)f_1+(b+d)f_3\|_\Fc\stackrel{\eqref{equ:f1f3}}{\geq}|a+c|+\beta|b+d|\\
&\geq a-|c|+\beta(|d|-b)\\
&\stackrel{\eqref{equ:x1}}{\geq} \|x_1\|_\Fc-\|y_1\|_\Fc+\beta(\|y-y_1\|_\Fc-\|x-x_1\|_\Fc)
\end{align*}
as desired.
\end{proof}

Note that the equality in \eqref{equ:75} can be achieved, for example, at $x=6f_1+f_3, y=-5f_1-2f_3$.

\color{black}
\begin{lemma}\label{lem:fsnsp}
The two conditions \eqref{equ:fsnsp} and \eqref{equ:fsnsp2} are equivalent.
\end{lemma}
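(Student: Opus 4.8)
The plan is to show the two conditions are equivalent by directly translating between the two parametrizations, using the correspondence $z = Fv$ between a null vector $v$ of $AF$ and a null vector $z$ of $A$, together with the identities for the atomic norm and the tail developed in Example \ref{exa:frame}. Recall from \eqref{equ:z-v} that $\|z\|_\Fc = \min\{\|x\|_1 : Fx = z\}$ and that for $z = Fx$ we have $\sigma_{\Fc,s}(z) = \min\{\|x - u\|_1 : Fu \in \Sigma_{\Fc,s}\}$, so the key bookkeeping is that the freedom ``$+u$ with $u \in \Nc(F)$'' in \eqref{equ:fsnsp2} is exactly the freedom to choose the synthesis coefficients of a fixed frame vector, while the split into $v_T, v_{T^c}$ corresponds to a choice of $v \in \Sigma_{\Fc,s}$.

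First I would prove \eqref{equ:fsnsp} $\Rightarrow$ \eqref{equ:fsnsp2}. Take $v \in \Nc(AF)$ and $|T| \leq s$. Set $z = Fv \in \Nc(A)$ and $V = F v_T \in \Sigma_{\Fc,s}$. Then \eqref{equ:fsnsp} gives $\|z - V\|_\Fc - \|V\|_\Fc \geq c\|z\|_2 = c\|Fv\|_2$. Now $z - V = F v_{T^c}$, so $\|z - V\|_\Fc = \|F v_{T^c}\|_\Fc \leq \|v_{T^c}\|_1$, and $\|V\|_\Fc = \|F v_T\|_\Fc = \min_{u \in \Nc(F)} \|v_T + u\|_1$, the minimum being attained by compactness. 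Choosing $u$ to be the minimizer yields $\|v_{T^c}\|_1 - \|v_T + u\|_1 \geq \|z - V\|_\Fc - \|V\|_\Fc \geq c\|Fv\|_2$, which is \eqref{equ:fsnsp2}.

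For the converse \eqref{equ:fsnsp2} $\Rightarrow$ \eqref{equ:fsnsp}, take $z \in \Nc(A)$ and $V \in \Sigma_{\Fc,s}$. Write $V = \sum_{i \in T'} c_i f_i$ with $|T'| \leq s$; pick any $x$ with $Fx = z - V$ and $\|x\|_1 = \|z - V\|_\Fc$, and let $v = x + \tilde c$ where $\tilde c$ is the coefficient vector supported on $T'$ with entries $c_i$, chosen so that $Fv = z$ and so that $v_{T'}$ realizes whatever representation of $V$ we need — more carefully, I would set $T \supseteq \mathrm{supp}(\tilde c)$ with $|T| \leq s$, define $v = x|_{T^c} + w$ where $w$ is a coefficient vector supported on $T$ with $Fw = V + F(x|_T)\cdot(\text{correction})$; the cleanest route is to note $v := $ (any $x$ with $Fx = z$) decomposed so that $v_{T^c} = $ tail coefficients and $F v_T$ can be made equal to $V$. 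Then $v \in \Nc(AF)$, and \eqref{equ:fsnsp2} supplies $u \in \Nc(F)$ with $\|v_{T^c}\|_1 - \|v_T + u\|_1 \geq c\|Fv\|_2 = c\|z\|_2$. Since $\|z - V\|_\Fc \geq \|F v_{T^c}\|_\Fc$ needs the reverse inequality, I must instead choose the original $x$ to be an optimal $\ell_1$ representation of $z$ and argue $\|z - V\|_\Fc \geq \|v_{T^c}\|_1 - (\text{something})$ — this matching of the $\ell_1$ norm on $T^c$ with the atomic norm of the tail, and of $\|v_T + u\|_1$ with $\|V\|_\Fc$, is the main obstacle and requires picking the representation $v$ of $z$ so that $v_{T^c}$ is \emph{already} an optimal representation of $z - V$ and $v_T$ an optimal representation of $V$; this can be done because an optimal representation of $z - V$ plus an optimal representation of $V$ gives a (not necessarily optimal) representation of $z$, and relabeling its support as $T \cup T^c$ works since $V$'s support has size $\leq s$.

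The whole argument is a dictionary-translation between \cite{CWW14}'s matrix/coefficient language and the frame-atomic-norm language of Section \ref{sec:frame}; the only genuine content is the observation, used twice, that $\|Fx\|_\Fc = \min_{u \in \Nc(F)}\|x + u\|_1$ with the minimum attained, so that ``$\min$ over $u \in \Nc(F)$'' and ``$\|\cdot\|_\Fc$'' are interchangeable, and that summing optimal synthesis representations of $V$ and of $z - V$ produces a representation of $z$ whose restriction to a size-$\leq s$ set and its complement recover exactly those two optimal representations. Once that is set up, both implications are immediate from plugging into the definitions; I would present it as two short paragraphs, one per direction.

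\begin{proof}
We use throughout the identity, valid for any $x\in\R^N$, that $\|Fx\|_\Fc=\min\{\|x+u\|_1:u\in\Nc(F)\}$, with the minimum attained by compactness; this is immediate from \eqref{equ:norm} and \eqref{equ:z-v}.

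\textbf{\eqref{equ:fsnsp} $\Rightarrow$ \eqref{equ:fsnsp2}.} Let $v\in\Nc(AF)$ and $|T|\leq s$. Put $z=Fv$, so $z\in\Nc(A)$, and $V=Fv_T\in\Sigma_{\Fc,s}$. Then $z-V=Fv_{T^c}$, hence $\|z-V\|_\Fc\leq\|v_{T^c}\|_1$, while $\|V\|_\Fc=\|Fv_T\|_\Fc=\min_{u\in\Nc(F)}\|v_T+u\|_1$; let $u$ attain this minimum. By \eqref{equ:fsnsp},
\[
\|v_{T^c}\|_1-\|v_T+u\|_1\ \geq\ \|z-V\|_\Fc-\|V\|_\Fc\ \geq\ c\|z\|_2\ =\ c\|Fv\|_2,
\]
which is \eqref{equ:fsnsp2}.

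\textbf{\eqref{equ:fsnsp2} $\Rightarrow$ \eqref{equ:fsnsp}.} Let $z\in\Nc(A)$ and $V\in\Sigma_{\Fc,s}$. Choose an optimal synthesis representation $V=Fa$ with $\|a\|_1=\|V\|_\Fc$ and $\mathrm{supp}(a)=:T'$, $|T'|\leq s$, and an optimal representation $z-V=Fb$ with $\|b\|_1=\|z-V\|_\Fc$. Let $T\supseteq T'$ with $|T|\leq s$, and set $v:=b_{T^c}+(a+b_T)$, where $a+b_T$ is regarded as a vector supported on $T$. Then $Fv=Fb+Fa=z$, so $v\in\Nc(AF)$; moreover $v_{T^c}=b_{T^c}$ and $v_T=a+b_T$. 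Applying \eqref{equ:fsnsp2} to this $v$ and $T$ gives $u\in\Nc(F)$ with
\[
\|b_{T^c}\|_1-\|a+b_T+u\|_1\ \geq\ c\|Fv\|_2\ =\ c\|z\|_2.
\]
Now $\|a+b_T+u\|_1\geq\|F(a+b_T+u)\|_\Fc=\|Fa+Fb_T\|_\Fc\geq\|Fa\|_\Fc-\|Fb_T\|_\Fc=\|V\|_\Fc-\|Fb_T\|_\Fc\geq\|V\|_\Fc-\|b_T\|_1$, and $\|b_{T^c}\|_1=\|b\|_1-\|b_T\|_1=\|z-V\|_\Fc-\|b_T\|_1$. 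Substituting,
\[
\|z-V\|_\Fc-\|V\|_\Fc\ \geq\ \|b_{T^c}\|_1-\|a+b_T+u\|_1\ \geq\ c\|z\|_2,
\]
which is \eqref{equ:fsnsp}.
\end{proof}
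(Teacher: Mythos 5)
Your argument is in substance the same as the paper's: in both directions you build the coefficient vector for $z$ by adding an $s$-sparse representation of the structured part to an $\ell_1$-optimal representation of the remainder, apply \eqref{equ:fsnsp2} (resp.\ \eqref{equ:fsnsp}) to that vector, and clean up with triangle inequalities; the only cosmetic difference is that in the converse you pass one step through the reverse triangle inequality for $\|\cdot\|_\Fc$, where the paper stays in coefficient space and uses $\|u_2+u\|_1\leq\|y_T-u_2\|_1+\|y_T+u\|_1$.

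One statement needs fixing. In the direction \eqref{equ:fsnsp2}$\Rightarrow$\eqref{equ:fsnsp} you "choose an optimal synthesis representation $V=Fa$ with $\|a\|_1=\|V\|_\Fc$ and $\supp(a)=T'$, $|T'|\leq s$." Such an $a$ need not exist: $V\in\Sigma_{\Fc,s}$ guarantees some representation with at most $s$ nonzero coefficients, but the $\ell_1$-minimal representation achieving $\|V\|_\Fc$ can be supported on more than $s$ atoms (e.g.\ if a single short frame vector is a positive multiple of $f_1+f_2$, the atomic norm of that direction is attained on two atoms, not one). Fortunately your argument never uses $\|a\|_1=\|V\|_\Fc$ — only $Fa=V$ and $|\supp(a)|\leq s$ enter the chain $\|a+b_T+u\|_1\geq\|Fa+Fb_T\|_\Fc\geq\|V\|_\Fc-\|b_T\|_1$ — so the fix is simply to take $a$ to be any $s$-sparse representation of $V$, exactly as the paper does with its $u_2$. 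With that one-word change the proof is correct; the exploratory hedging in your plan about "matching" the two norms is resolved precisely by this choice.
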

\begin{proof}
(\eqref{equ:fsnsp2}$\Longrightarrow$\eqref{equ:fsnsp}) For any $z\in\Nc(A)$ and any $v\in\Sigma_{F,s}$, by \eqref{equ:z-v}, let $\|z-v\|_\Fc=\|x-u_1\|_1$ where $z=Fx, v=Fu_1=Fu_2$ where $\|u_2\|_0\leq s$. Let $T$ be the support of $u_2$. The vector $y=x-u_1+u_2\in\Nc(AF)$. By assumption, there exists $u\in\Nc(F)$ such that $\|y_{T^c}\|_1-\|y_T+u\|_1\geq c\|Fy\|_2=c\|z\|_2$. Now we have
\begin{align*}
\|z-v\|_\Fc-\|v\|_\Fc&\geq\|x-u_1\|_1-\|u_2+u\|_1=\|y-u_2\|_1-\|u_2+u\|_1\\
&=\|y_T-u_2\|_1+\|y_{T^c}\|_1-\|u_2+u\|_1\\
&\geq\|y_T-u_2\|_1+\|y_T+u\|_1+c\|z\|_2-\|u_2+u\|_1\geq c\|z\|_2.
\end{align*}

(\eqref{equ:fsnsp}$\Longrightarrow$\eqref{equ:fsnsp2})
 For any $x\in\Nc(AF)$ and any $|T|\leq s$,  let $v=Fx_T\in\Sigma_{F,s}$ and $\|v\|_\Fc=\|y\|_1$ where $v=Fy$. Since $z=Fx\in\Nc(A)$, by \eqref{equ:fsnsp}, $\|z-v\|_\Fc-\|v\|_\Fc\geq c\|z\|_2$.
 
 Clearly $y-x_T\in\Nc(F)$, and
$\|x_{T^c}\|_1-\|x_T+y-x_T\|_1\geq\|z-v\|_\Fc-\|v\|_\Fc\geq c\|z\|_2=c\|Fx\|_2$.
\end{proof}



\begin{lemma}\label{lem:matrix}
For any $Z\in\R^{n_1\times n_2}$, 
$$\min_{V\in\Sigma_{\Mc,s}}\left(\|Z-V\|_*-\|V\|_*\right)=\|Z-Z_s\|_*-\|Z_s\|_*=-\sum_{i=1}^s\sigma_i(Z)+\sum_{i=s+1}^K\sigma_i(Z).$$
\end{lemma}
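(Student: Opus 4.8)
The plan is to prove the two asserted equalities separately, with the second one essentially a restatement of Lemma \ref{lem:sv} plus the known expression for the tail from Example \ref{exa:matrix}, and the first one following from the same singular-value inequality once we check the value is actually achieved at $V = Z_s$. First I would recall that for $V \in \Sigma_{\Mc,s}$ we have $\rank(V) \leq s$, so $\sigma_i(V) = 0$ for $i > s$, whence $\|V\|_* = \sum_{i=1}^s \sigma_i(V)$. Then, writing the singular values of $Z$ in decreasing order, Lemma \ref{lem:sv} gives
\begin{equation*}
\|Z - V\|_* = \sum_{i=1}^K \sigma_i(Z-V) \geq \sum_{i=1}^K |\sigma_i(Z) - \sigma_i(V)| \geq \sum_{i=1}^s \bigl(\sigma_i(Z) - \sigma_i(V)\bigr) + \sum_{i=s+1}^K \sigma_i(Z).
\end{equation*}
Subtracting $\|V\|_* = \sum_{i=1}^s \sigma_i(V)$ from both sides yields
\begin{equation*}
\|Z - V\|_* - \|V\|_* \geq -\sum_{i=1}^s \bigl(2\sigma_i(V) - \sigma_i(Z)\bigr) + \sum_{i=s+1}^K \sigma_i(Z),
\end{equation*}
which is a lower bound depending on $V$; I would then argue that the right-hand side is minimized over admissible $V$ by noting it is decreasing in each $\sigma_i(V)$ — wait, that points the wrong way, so instead I would bound $|\sigma_i(Z)-\sigma_i(V)| \geq \sigma_i(Z) - \sigma_i(V)$ together with $-\|V\|_*$ and regroup as $\|Z-V\|_* - \|V\|_* \geq \sum_{i=1}^s(\sigma_i(Z) - 2\sigma_i(V)) + \sum_{i=s+1}^K\sigma_i(Z)$; to get a clean $V$-independent bound I would instead use the reverse triangle form $|\sigma_i(Z)-\sigma_i(V)| \geq \sigma_i(V) - \sigma_i(Z)$ on the first $s$ terms only when that is larger, i.e. take $|\sigma_i(Z) - \sigma_i(V)| \geq |\sigma_i(Z) - \sigma_i(V)|$ and observe that the full chain already used in Example \ref{exa:matrix} shows $\|Z-V\|_* \geq \sum_{i=s+1}^K \sigma_i(Z)$, hence the cleaner route is the one below.

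The cleaner route: for the lower bound, observe
\begin{equation*}
\|Z-V\|_* - \|V\|_* \geq \sum_{i=1}^K |\sigma_i(Z) - \sigma_i(V)| - \sum_{i=1}^s \sigma_i(V) \geq \sum_{i=1}^s \bigl(\sigma_i(V) - \sigma_i(Z) - \sigma_i(V)\bigr) + \sum_{i=s+1}^K \sigma_i(Z),
\end{equation*}
where on each of the first $s$ coordinates I use $|\sigma_i(Z) - \sigma_i(V)| \geq \sigma_i(V) - \sigma_i(Z)$, giving exactly $-\sum_{i=1}^s \sigma_i(Z) + \sum_{i=s+1}^K \sigma_i(Z)$, a bound free of $V$. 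For the matching upper bound I would exhibit equality at $V = Z_s = \sum_{i=1}^s \sigma_i(Z) u_i v_i^T$: then $Z - Z_s = \sum_{i=s+1}^K \sigma_i(Z) u_i v_i^T$, so $\|Z - Z_s\|_* = \sum_{i=s+1}^K \sigma_i(Z)$ and $\|Z_s\|_* = \sum_{i=1}^s \sigma_i(Z)$, yielding $\|Z-Z_s\|_* - \|Z_s\|_* = -\sum_{i=1}^s \sigma_i(Z) + \sum_{i=s+1}^K \sigma_i(Z)$, which meets the lower bound. This establishes that the minimum is attained at $Z_s$ and equals the claimed quantity, and the middle equality in the statement is then immediate.

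The main obstacle, such as it is, is bookkeeping with the direction of the reverse-triangle inequality for singular values: one must be careful to bound each of the first $s$ terms $|\sigma_i(Z) - \sigma_i(V)|$ from below by $\sigma_i(V) - \sigma_i(Z)$ (not the other way) so that the $\sigma_i(V)$ contributions cancel against $-\|V\|_*$, leaving a clean $V$-independent expression; and to note $\sigma_i(V) = 0$ for $i > s$ so the tail sum $\sum_{i=s+1}^K |\sigma_i(Z) - \sigma_i(V)| = \sum_{i=s+1}^K \sigma_i(Z)$ exactly. Everything else — the singular value decomposition of $Z$, Lemma \ref{lem:sv}, and the identification $\|\cdot\|_\Mc = \|\cdot\|_*$ from Lemma \ref{lem:nuc} — is already available in the excerpt, so no further machinery is needed.
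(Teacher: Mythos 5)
Your final argument is correct and is essentially the paper's proof: both rest on Lemma \ref{lem:sv} together with the elementary bound $|\sigma_i(Z)-\sigma_i(V)|-\sigma_i(V)\geq-\sigma_i(Z)$ on the first $s$ terms (the paper phrases this as $\min_{c\in\R}(|\sigma-c|-c)=-\sigma$) and $\sigma_i(V)=0$ for $i>s$. Your explicit verification that the bound is attained at $V=Z_s$ is a small completeness improvement over the paper, which leaves that step implicit; the abandoned first attempt in your write-up is harmless detour.
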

\begin{proof}
Let $\sigma_i$ be the singular values of $Z$.

$\|Z-V\|_*-\|V\|_*\stackrel{Lemma\ \ref{lem:sv}}{\geq}\sum_{i=1}^K|\sigma_i-\sigma_i(V)|-\sum_{i=1}^s\sigma_i(V)=\sum_{i=1}^s|\sigma_i-\sigma_i(V)|+\sum_{i=s+1}^K\sigma_i-\sum_{i=1}^s\sigma_i(V)$.

It is easy to show that $\min_{c\in\R}(|\sigma-c|-c)=-\sigma$, therefore we have
$$\|Z-V\|_*-\|V\|_*\geq\sum_{i=1}^s(|\sigma_i-\sigma_i(V)|-\sigma_i(V))+\sum_{i=s+1}^K\sigma_i\geq-\sum_{i=1}^s\sigma_i+\sum_{i=s+1}^K\sigma_i.$$
\end{proof}

\section*{Acknowledgments}
The author is funded by NSF DMS-2050028. 

%
%
%
%


\begin{thebibliography}{9}
%
\bibitem{ACP11} A.\ Aldroubi,  X.\ Chen, and A.\ Powell. \emph{Stability and robustness of $\ell_q$ minimization using null space property.} Proceedings of SampTA 2011 (2011).
\bibitem{ACP12}
	A.\ Aldroubi,  X.\ Chen, and A.\ Powell. \emph{Perturbations of measurement matrices and dictionaries in compressed sensing.} Appl. Comput. Harmon. Anal. 33 (2012), no. 2: 282--291.

%
\bibitem{BLN13}
A. Barvinok, S. J. Lee, and I. Novik. \emph{Centrally symmetric polytopes with many faces.} Israel Journal of Mathematics 195 (2013): 457-472.
%
%
\bibitem{CL14}
E. Cand\`es,  and X. Li. \emph{Solving quadratic equations via PhaseLift when there are about as many equations as unknowns.} Foundations of Computational Mathematics 14.5 (2014): 1017-1026.

\bibitem{CCL18}
P. Casazza, X. Chen, and R. Lynch. \emph{Preserving Injectivity under Subgaussian Mappings and Its Application to Compressed Sensing}. arXiv preprint arXiv:1710.09972 (2018).

%
%
%
%
\bibitem{CENR11}
	E.\ J. Cand\`{e}s, Y.\ Eldar, D.\ Needell, and P.\ Randall. \emph{Compressed sensing with coherent and redundant dictionaries.} Appl. Comput. Harmon. Anal. 31 (2011), no. 1: 59--73.
%
%
%
\bibitem{C12}
	V. Chandrasekaran, B. Recht, P. A. Parrilo, and A. S. Willsky. \emph{The convex geometry of linear inverse problems.} Foundations of Computational mathematics 12.6 (2012): 805-849.
	
%
%
\bibitem{CRT06}
	E.\ J.\ Cand\`{e}s, J.\ Romberg, and T.\ Tao. \emph{Stable signal recovery from incomplete and inaccurate measurements.} Comm. Pure Appl. Math. 59 (2006), no. 8: 1207--1223.
%
\bibitem{CWW14}
	X.\ Chen, H.\ Wang, and R.\ Wang. \emph{A null space analysis of the $\ell_1$-synthesis method in dictionary-based compressed sensing.} Applied and Computational Harmonic Analysis 37 (2014), no. 3: 492--515.

%

\bibitem{CDD09}
A. Cohen, W. Dahmen, and R. DeVore. \emph{Compressed sensing and best $k$-term approximation.} Journal of the American mathematical society 22.1 (2009): 211-231.
\bibitem{D04}
D. L. Donoho, \emph{Neighborly polytopes and sparse solutions of underdetermined linear equations}, Technical report, Department of Statistics, Stanford University (2004).
\bibitem{D06}
D. L. Donoho. \emph{Compressed sensing.} IEEE Transactions on information theory 52.4 (2006): 1289-1306.
%
%
%
%
%
\bibitem{F14}
S. Foucart. \emph{Stability and robustness of $\ell_1$-minimizations with Weibull matrices and redundant dictionaries.} Linear Algebra and its Applications 441 (2014): 4-21.

\bibitem{F16}
A. Flinth.  \emph{A geometrical stability condition for compressed sensing.} Linear Algebra and its Applications 504 (2016): 406-432.
%
%
%
\bibitem{FPRT10}
S. Foucart, A. Pajor, H. Rauhut, and T. Ullrich. \emph{The Gelfand widths of $\ell_p$-balls for $0<p\leq1$}. Journal of Complexity 26.6 (2010): 629-640.

%
\bibitem{FRbook}
	S.\ Foucart and H.\ Rauhut. \emph{A mathematical introduction to compressive sensing.} Boston: Birkh\"{a}user (2013).
%
%
%
%
%
%
%
\bibitem{HJ}
R. A. Horn,  and C. R. Johnson. Matrix analysis. Cambridge university press, 2012.
%

\bibitem{KKRT16}
M. Kabanava, R. Kueng, H. Rauhut, and U. Terstiege (2016). Stable low-rank matrix recovery via null space properties. Information and Inference: A Journal of the IMA, 5.4 (2016): 405-441.

\bibitem{KM15}
	V.\ Koltchinskii and S.\ Mendelson. \emph{Bounding the smallest singular value of a random matrix without concentration.} International Mathematics Research Notices. 2015 (2015), no. 23: 12991--13008.
%
%
\bibitem{KRT17}
R. Kueng, H. Rauhut, and U. Terstiege. \emph{Low rank matrix recovery from rank one measurements.} Applied and Computational Harmonic Analysis 42.1 (2017): 88-116.

%
%
\bibitem{LML12}
	S.\ Li, T.\ Mi, and Y.\ Liu. \emph{Performance analysis of \(\ell_1 \)-synthesis with coherent frames.} 2012 IEEE International Symposium on Information Theory Proceedings (2012).
%
%
%
%
%
%
%
%


%
%
%
%
\bibitem{RSV08}
	H.\ Rauhut, K.\ Schnass, and P.\ Vandergheynst. \emph{Compressed sensing and redundant dictionaries.} IEEE Trans. Inform. Theory 54 (2008), no. 5: 2210--2219.

\bibitem{RXH08}
B. Recht, W. Xu, and B. Hassibi. \emph{Necessary and sufficient conditions for success of the nuclear norm heuristic for rank minimization.} 2008 47th IEEE Conference on Decision and Control. IEEE, 2008.

\bibitem{RXH11}
B. Recht,  W. Xu, and B. Hassibi. \emph{Null space conditions and thresholds for rank minimization.} Mathematical programming 127.1 (2011): 175-202.
\bibitem{R97}
Rockafellar, R. Tyrrell. \emph{Convex analysis}. Vol. 11. Princeton university press, 1997.
%
%
%
%
\bibitem{T14}
	J.\ A.\ Tropp. \emph{Convex recovery of a structured signal from independent random linear measurements.} ``Sampling theory, a renaissance.'' Appl. Numer. Harmon. Anal., Birkhäuser/Springer, Cham, (2015): 67--101.
	
\bibitem{Ta14}
	M.\ Talagrand. ``Upper and lower bounds for stochastic processes: modern methods and classical problems.'' Springer Science \& Business Media (2014).
%
%

   

\end{thebibliography}
\end{document}